  \providecommand\BibTeX{{%
    \normalfont B\kern-0.5em{\scshape i\kern-0.25em b}\kern-0.8em\TeX}}}
\newtheorem{theorem}{Theorem}[section]  
\theoremstyle{remark}
\newtheorem*{remark}{Remark}
\newcommand{\eb}[1]{\textcolor{blue}{#1}}
\newcommand{\dn}[1]{\textcolor{green}{#1}}
\newcommand{\ignore}[1]{}
\author{Gordon Dai\textsuperscript{*1}}
\email{td2568@nyu.edu}
\author{Pavan Ravishankar\textsuperscript{*1}}
\email{pr2248@nyu.edu}
\author{Rachel Yuan\textsuperscript{1}}
\email{ry1023@nyu.edu}
\author{Daniel B. Neill\textsuperscript{\textdagger1~2~3}}
\email{daniel.neill@nyu.edu}
\author{Emily Black\textsuperscript{\textdagger4~5}}
\email{emilyblack@nyu.edu}
\title{Be Intentional About Fairness!: Fairness, Size, and Multiplicity in the Rashomon Set}
\begin{document}

\thanks{\textsuperscript{*}Co-first authors (equal contribution)}
\thanks{\textsuperscript{†}Co-senior authors (equal contribution)}

\thanks{\textsuperscript{1}Machine Learning for Good Laboratory, Department of Computer Science, NYU Courant Institute of Mathematical Sciences, New York University}
\thanks{\textsuperscript{2}Robert F. Wagner Graduate School of Public Service, New York University}
\thanks{\textsuperscript{3}Center for Urban Science and Progress, Tandon School of Engineering, New York University}
\thanks{\textsuperscript{4}Center for Data Science, New York University}
\thanks{\textsuperscript{5}Department of Computer Science and Engineering, Tandon School of Engineering, New York University}

\begin{abstract}
When selecting a model from a set of equally performant models, how much unfairness can you really reduce? Is it important to be intentional about fairness when choosing among this set, or is arbitrarily choosing among the set of ``good'' models good enough?
Recent work has highlighted that the phenomenon of model multiplicity---where multiple models with nearly identical predictive accuracy exist for the same task---has both positive and negative implications for fairness, from strengthening the enforcement of civil rights law in AI systems to showcasing arbitrariness in AI decision-making.
Despite the enormous implications of model multiplicity, there is little work that explores the properties of sets of equally accurate models, or Rashomon sets, in general. In this paper, we present five main theoretical and methodological contributions which help us to understand the relatively unexplored properties of the Rashomon set, in particular with regards to fairness. Our contributions include methods for efficiently sampling models from this set and techniques for identifying the fairest models according to key fairness metrics such as statistical parity. We also derive the probability that an individual’s prediction will be flipped within the Rashomon set, as well as expressions for the set's size and the distribution of error tolerance used across models. These results lead to  
policy-relevant takeaways, such as the importance of intentionally looking for fair models within the Rashomon set, and understanding which individuals or groups may be more susceptible to arbitrary decisions.
\end{abstract}

\maketitle

\section{Introduction}\label{sec:intro}

Recent work has drawn renewed attention to the fact that there are often many (approximately) equally accurate models available for the same prediction task~\cite{marx2019, black2022model, breiman2001statistical}. This phenomenon---often called the Rashomon effect~\cite{breiman2001statistical}, predictive multiplicity~\cite{marx2019}, or model multiplicity~\cite{black2022model}---has wide-ranging implications for both understanding and improving fairness, as these equally accurate models often differ substantially in other properties such as fairness~\cite{rodolfa2021empirical, laufer2024fundamental} or model simplicity~\cite{semenova2019study,semenova2022existence, rudin2024amazingthings}. 

As prior work has pointed out, this multiplicity of models can be viewed as both a fairness opportunity and a concern~\cite{black2022model, creel2022algorithmic}.
On the positive side, legal scholarship 
has pointed to the fact that model multiplicity is relevant to how to interpret and enforce U.S. anti-discrimination law, and specifically, can strengthen the disparate impact doctrine to more effectively combat algorithmic discrimination~\cite{black2023less}. 
In a recent paper, Black et al.~\cite{black2023less} suggest that the phenomenon of model multiplicity could support a reading of the disparate impact doctrine that requires companies to proactively search the set of equally accurate models for less discriminatory alternatives that have equivalent accuracy to a base model 
deemed acceptable for deployment from a model performance perspective.

On the negative side, several scholars have pointed out that facially similar models, with equivalent accuracy but differences in their individual predictions, can suggest that some model decisions are arbitrary since they seem to be made on the basis of model choice that does not impact performance (e.g., a <1\% change in a model's training set accuracy)~\cite{marx2019,black2021leave,gomez2024algorithmic}. 

This arbitrariness can impact model explanations and recourse as well: individuals with decisions that are unstable across small model changes may not receive reliable explanations for their model outcome, or ways to change it~\cite{pmlr-v124-pawelczyk20a,black2022consistent,black2022selective}. 
Further, if there is a group-based asymmetry of arbitrariness--e.g., if female loan applicants have more arbitrariness in their decisions than male loan applicants--- this could lead to a group-based equity concern in and of itself.

Understanding the extent of the benefits and risks of model multiplicity relies upon an understanding of the properties of the \emph{Rashomon set}, or the set of approximately equally accurate models for a given prediction task, i.e., equally accurate up to some error tolerance $\epsilon$. While models in the Rashomon set are considered equivalent from a performance perspective, they may differ substantially in other properties---for the purposes of our paper, we focus on fairness. In order to understand the utility of searching for fairer models within the Rashomon set as suggested by recent legal literature, or the extent of the dangers of arbitrariness surfaced by the algorithmic fairness community, we need to understand 

more about Rashomon sets themselves. For example, whether companies should be required to search for less discriminatory models~\cite{black2023less} rests on the question of how much of the disparity can be reduced by optimizing
over the Rashomon set, as compared to choosing an arbitrary model without regard to fairness. In other words, \textbf{how much do we
gain by being intentional about fairness} when selecting models within the Rashomon set? Similarly, concerns about arbitrariness relate to rates and distributions of the chance that an individual will have their prediction changed---\textbf{is this arbitrariness harmful} if only predictions that are very uncertain get flipped, or if all demographic groups have an equal chance of flipping? 
We can shed light on these important questions by understanding even basic facts about the Rashomon set, such as: what does a randomly sampled model from the Rashomon set look like? What is the \textbf{average fairness} for various metrics on the Rashomon set? How might one search through the Rashomon set? Can we find the \textbf{fairest model} within the Rashomon set? What is the chance that any one individual might experience a change in prediction in the Rashomon set? Or even, \textbf{how large is the Rashomon set?}  Despite the enormous implications of model multiplicity,
there is little work that explores the properties of Rashomon sets in general.

In this paper, we present five main theoretical and methodological contributions that answer the above questions and more---furthering our understanding of the relatively unexplored properties of the Rashomon set, in particular with regards to fairness:

\begin{itemize}[left=10pt]
    \item First, we define the \emph{largest possible Rashomon set}
    $R_N(\epsilon)$, for $N$ records drawn from a given data distribution, assuming an allowable error tolerance of $\epsilon$.  We then
    develop an efficient method for sampling models uniformly at random from within the Rashomon set.
    \item Second, we present two computationally efficient methods to find the fairest model within the Rashomon set, for statistical parity and error rate balance respectively. 
    \item Third, we derive the asymptotic probability that any individual will have their prediction flipped within the models of the Rashomon set for a given $\epsilon$. 
    \item Fourth, we derive a closed-form expression for the size of the Rashomon set for a given $N$ and $\epsilon$.
    \item Fifth, we show that for sufficiently large datasets and small enough $\epsilon$, models in the Rashomon set will use the full error tolerance (i.e., the average accuracy of models in $R_N(\epsilon)$ converges to the accuracy of the Bayes-optimal model minus $\epsilon$). 
\end{itemize}
\ignore{Importantly, as we explain further in Section~\ref{sec:prelim}, the theoretical setup of our work is such that we are exploring properties of an idealized Rashomon set: for one, we model the Rashomon set as a set of mappings from feature inputs to classification decisions based on  flipping predictions from the Bayes-optimal model, which is a theoretical model that can predict the true underlying probability of a positive or negative outcome of an individual given the information available. Secondly, we explore the largest possible Rashomon set based around this Bayes-optimal model--- the set of \emph{all} possible mappings from input to classification decisions over a dataset that satisfy a given error. In particular, viewing the Rashomon set in this way does not take into account the possibility of not being able to arrive at these mappings via training a parameterized model---e.g., there might be some mappings that are so discontinuous in the input space that it would be difficult to train a model to reach that mapping.}

These theoretical results create important newfound understanding of the Rashomon set with a focus on fairness and fairness-relevant properties--- to our knowledge, there are no results about how to sample randomly from the Rashomon set, Rashomon set size, individual flip probabilities within the Rashomon set, and the distribution of error used in the Rashomon set for any generalized theoretical setup. While concurrent work has shown that finding the fairest model within the Rashomon set is hard (NP-hard) in general~\cite{laufer2024fundamental}, we are able to show that under certain conditions we can find the fairest model very efficiently. 

Further, our theoretical results lead us to some interesting, policy-relevant takeaways, which we expand on further in Sections~\ref{sec:intentional}-\ref{sec:rashomon_set_size_and_error} and support with experiments on three datasets:
\begin{enumerate}[left=10pt, label=\Alph*.]
    \item We can gain a lot of fairness by intentionally searching for fairer models within the set of equally accurate models. Sampling randomly within the Rashomon set--- only optimizing for accuracy when selecting a model and hoping that it is fair--- will yield a much less fair model than searching for the fairest possible model \emph{even among those that are approximately equally accurate}, so explicitly optimizing for fairness within the Rashomon set is important.

    \item We can calculate the probability that any given individual will experience a flip in prediction among models in the (largest possible) Rashomon set. This allows us to shed light on the fates of individuals in the Rashomon set and potential inequities in flip probabilities when viewing inconsistency in the Rashomon set as a source of arbitrariness. We can see what factors--such as the distribution of prediction certainty and other dataset-specific factors---influence the individual and overall probability of flipping in a given Rashomon set.
    \item Finally, our theoretical results allow us to understand the size of the Rashomon set and amount of error tolerance used on average within the set.  In particular, we derive large-sample convergence results for the size of the Rashomon set over $N$ data records, as a function of the error tolerance $\epsilon$. This leads to the takeaway that a company searching for fairer models within the Rashomon set should use the largest error tolerance acceptable to their business, since the size of the set (and thus the opportunity to find a fairer model) increases very quickly in $\epsilon$. However, the company should be ready to have all of that error tolerance be used--- since our results around error tolerance show that as the dataset increases in size, the average model in the Rashomon set uses all of the error tolerance (i.e., has accuracy $\epsilon$ less than the base model).
\end{enumerate}

The remainder of the paper will proceed as follows: after discussing related work in Section~\ref{sec:related}, we will outline our theoretical setup and notation in Section~\ref{sec:prelim}. We encourage even non-technical audiences to read this section as it also outlines some of the important benefits and limitations of our theoretical setup. We then turn to presenting our theoretical work and policy takeaways together in the next three sections: in Section~\ref{sec:intentional}, we present new, efficient optimization and sampling approaches to find the fairest model and to sample a model uniformly at random from the Rashomon set, respectively, and demonstrate how that leads to our results showing the importance of intentionally searching for fair models. Next, in Section~\ref{sec:flip_probs}, we present our results on individual prediction flip probabilities, and how this sheds light on arbitrariness and other fairness properties within the Rashomon set. Finally, in Section~\ref{sec:rashomon_set_size_and_error}, we introduce our results on Rashomon set size and use of the error tolerance $\epsilon$, and discuss how they can inform how one might search within the Rashomon set for fairer models. Following this, in Section~\ref{sec:conclusion} we conclude the paper.

\section{Related work and Legal Background}
\label{sec:related}

\noindent\textbf{Related Work.} There has been a growing stream of work exploring the phenomenon of multiple approximately equally accurate models existing for the same prediction task~\cite{breiman2001statistical,marx2019, black2022model,watson2023predictive,rudin2024amazingthings,d2020underspecification}. Outside of fairness concerns, a series of papers have demonstrated how model multiplicity can be harnessed to find simpler models within the Rashomon set~\cite{semenova2022existence,rudin2024amazingthings,dong2019variable}, how the existence of multiple equally accurate models can disrupt model explainability~\cite{pmlr-v124-pawelczyk20a,black2022consistent}, and how sets of equally accurate models can differ greatly in their adversarial robustness~\cite{d2020underspecification}. Most related to this work, a series of papers focusing on interpretability of models within the Rashomon set have demonstrated how to search for more interpretable models in practice for particular model classes, e.g., decision trees~\cite{xin2022exploring, mata2022computing}, and have provided empirical observations of Rashomon set size for given model classes~\cite{xin2022exploring}. 

Within literature related to fairness concerns, two main themes have emerged: the optimistic vision of using the variability within the Rashomon set to achieve fairness goals with little impact on accuracy~\cite{gillis2024operationalizing, black2023less,rodolfa2021empirical}, and works bringing to light concerns about the arbitrariness of individual decisions from models with many nearly equally accurate counterparts that differ in their predictions, explanations, or other properties~\cite{marx2019,gomez2024algorithmic,cooper2024arbitrariness,black2021leave}. 
On the arbitrariness side, many works show how models with minimal differences between them---e.g, a change in random seed or sampling of training data---can result in models with different predictions for certain individuals~\cite{marx2019,gomez2024algorithmic,cooper2024arbitrariness,black2021leave}. 
In this line of work, perhaps the most related is~\cite{cooper2024arbitrariness}, who show empirically that different individuals have radically different chances of experiencing a change in prediction among approximately equivalent models. In our work, we derive the exact probability that an individual will experience a change in prediction in the Rashomon set, and show that this probability varies as a result of a person's underlying certainty of prediction as well as dataset-dependent factors.

On the fairness side, some of the most related works touch on the details of searching through the Rashomon set for less discriminatory models, or less discriminatory alternatives (LDAs). For example, Gillis et al.~\cite{gillis2024operationalizing} outline what an LDA search may look like in practice, and develop an algorithm for searching through the set of \emph{linear} models for the least discriminatory alternative. 
Perhaps the most closely related work, by Laufer et al.~\cite{laufer2024fundamental}, outlines a series of theoretical results related to the search for less discriminatory models within the Rashomon set, such as the computational hardness of finding fairer models within the Rashomon set in general, the theoretical limits of fairness within the Rashomon set, and problems around generalizability of less discriminatory models discovered through search. The paper largely points to difficulties around finding a fairer model within the Rashomon set. In contrast, on a high level, one of the major points of our work is to showcase the \emph{importance} of intentionally searching for fairer models within the Rashomon set, by showing the immense fairness difference between models randomly chosen from the Rashomon set (i.e., on the basis of accuracy alone) and the fairest models within the Rashomon set. 

More generally, our work presents, for the first time, general properties about the Rashomon set itself--- such as the average fairness of models within the Rashomon set, the probability that any individual within the Rashomon set will experience a change in prediction across the models in the set, Rashomon set size, the distribution of model error within the Rashomon set, and others---and discusses how these results influence our understanding of not only how to search for fairer models within the Rashomon set, but also how we think about the arbitrariness of individual decisions within the Rashomon set.

\noindent\textbf{Legal Background and LDA Search} We now discuss some of the legal background necessary to understanding how model multiplicity can strengthen the enforcement of civil rights law in AI systems--- but also raises important questions about the utility of searching through the Rashomon set for fairer models.
Multiplicity relates to the interpretation and enforcement of civil rights law most directly through the \emph{disparate impact doctrine}. The disparate impact doctrine applies in decision making systems determining access to credit, housing, and employment opportunities, stating that it is illegal to have a decision-making system that distributes these opportunities across different protected demographic groups at different rates unless it is a ``business necessity''. 
In practice, the disparate impact doctrine is enforced through a three-step process. First, a plaintiff finds evidence of a decision-making system within a company that distributes opportunities at different rates among demographic groups, such as a bank that approves loans to more men than women. Next, the company argues that this disparate impact is a business necessity---while there is no exact description of what a business necessity is, a general understanding is that the disparity would be necessary for the business to function. In the case of AI decision-making systems, this is often argued by stating that the algorithm used has the highest accuracy possible, that this accuracy is necessary for business function, and that the observed disparity is necessary to achieve this accuracy. However, even if this business necessity defense is accepted, if the plaintiff can demonstrate that there is a \emph{less discriminatory alternative} decision-making system that satisfies business necessity but reduces disparate impact, the firm can be legally liable for the discrimination they have caused, and forced to use the less discriminatory alternative. In the case of algorithmic systems, i.e., when the alternative decision-making system is another algorithm, we follow~\cite{black2023less} in calling the less discriminatory alternative algorithm an LDA. 
Thus, companies subject to the disparate impact doctrine are theoretically incentivized to search for less discriminatory yet still effective models, for fear of being held liable should another entity find a less discriminatory alternative. 
Some businesses, mostly financial institutions, do this in practice, though domain experts note that ``there is an uneven landscape with respect to how or whether institutions assess their models for discrimination, and the effectiveness of existing programs''~\cite{colfax2024report4}. 

In a recent paper, Black et al.~\cite{black2023less} outline a novel interpretation of the disparate impact doctrine that puts even greater pressure on companies to search for LDAs. They suggest that since multiple equally accurate models exist for the same prediction task--some of which will likely have different fairness properties--- the business necessity argument fails to make sense, and instead, a company should do a proactive search through the Rashomon set of equally accurate models in order to ensure there is no less discriminatory model easily available.  
A critical question that this raises, however, is 
how much of the disparity can be reduced by optimizing over the Rashomon set, as compared to choosing an arbitrary model within that set without regard to fairness.  In other words, how much do we gain by being \emph{intentional} about fairness within the Rashomon set---by looking for fair models among those that are approximately equally accurate? In this paper, we show that it is well worth it to search for fairer models within the Rashomon set, and that being intentional about doing so is important, as well as other critical insights about the Rashomon set.

\ignore{
\section{Backup copy of Emily's preliminaries section 1-17-2025 11am}

section{Preliminaries and Notation}
\label{sec:prelim}

In this section, we introduce the mathematical set-up and assumptions behind our theoretical results and discuss the implications of these decisions.
To define the Rashomon set of approximately equally accurate models, we consider four questions: (i) how do we define a model? (ii) when are models considered distinct? (iii) how do we measure the accuracy of a model? and (iv) if the Rashomon set consists of all models with accuracy within $\epsilon$ of some ``optimal'' model, how is that model defined?  

\vspace{0.5em}
\noindent\textbf{Basics, Model Definition, and Distinctness.} 
To start, we consider Rashomon sets in the finite-sample case, i.e. assuming we have a fixed number of data records $N$. 
Later on in the paper, we present results as the number of data records goes to infinity, or the large sample case. Additional preliminaries and assumptions necessary for these theoretical results are presented in Section~\ref{sec:flip_probs-assumptions}.  
Let $D_N = \langle d_1, d_2, \ldots, d_N \rangle$ be a set of $N$ data records drawn i.i.d. from distribution $D$. We focus on the binary classification setting, where each data record $d_i = (x_i, y_i)$, $x_i = \{x_{ij}\}$ represents a set of input features (including a binary \emph{sensitive attribute} which we denote as $A_i$), and $y_i$ is a binary outcome variable. Thus our models are binary classification models, which predict an outcome in $\{0,1\}$. 
We define a predictive model by its classification $\hat y_i = f(x_i) \in \{0,1\}$ for each data record $d_i$, that is, by its mapping from input features $x_i$ to decisions $\{0,1\}$ on the data $D_N$. Thus, there are $2^N$ distinct models possible for a set of data records size $N$. 

\vspace{0.3em}
\noindent\textit{Implications.} Note that this definition focuses on \emph{predictive multiplicity}~\cite{marx2019} rather than \emph{procedural multiplicity}~\cite{black2022model}: while previous work has shown that models can reach the same decisions in different ways, e.g. relying on different input features~\cite{black2022model}, we view models as identical if they reach the same predictions.
Further, we note that some of the $2^N$ possible models 
set may difficult or impossible to find via supervised learning, 
since one would typically restrict the model class (e.g., to logistic regression models, or neural networks with a given architecture, etc.) and optimize the model parameters (e.g., by empirical risk minimization) on a separate, labeled training dataset. 
\textbf{Thus we term the Rashomon set $R_N(\epsilon)$ of approximately equally accurate models within this set of $2^N$ models as the \emph{largest possible Rashomon set} for error tolerance $\epsilon$, because it places no restrictions on the model class, smoothness or consistency of predictions, etc.}.


\vspace{0.5em}
\noindent\textbf{Model Accuracy and Optimal Model.} Our answers to the third and fourth questions rely on the concept of a \emph{Bayes-optimal classifier} $f_\text{opt}(x_i)$. This model is assumed to have access to the true probabilities $p_i = \mbox{Pr}(y = 1 \:|\: x = x_i)$ but not the observed labels $y_i$. In other words, the Bayes-optimal classifier has access to the underlying probability that given the available input information, an individual data record will have true outcome $1$ in the classification problem (e.g. the probability that an individual will not default on a loan based on their application), but not the actual outcome (i.e. the Bayes-optimal classifier will not have access to not whether or not that individual defaulted on a loan). The Bayes-optimal classifier predicts $f_\text{opt}(x_i) = 1$ if $p_i > 0.5$, and $f_\text{opt}(x_i) = 0$ otherwise, and has the highest expected classification accuracy, $\mathbb{E}[\max(p_i,1-p_i)]$, among all classifiers using the same set of features $x$.
Thus, given the data records $D_N=\langle d_1,d_2,\ldots,d_N\rangle$ and the corresponding Bayes-optimal (true) probabilities $P_N = \langle p_1,p_2,\ldots,p_N\rangle$, we define the Rashomon set $R_N(\epsilon)$ for error tolerance $\epsilon$ as the set of all models with expected classification accuracy greater than or equal to $\mathbb{E}[\max(p_i,1-p_i)] - \epsilon$.  

\vspace{0.3em}
\noindent\textit{Implications.} \textbf{This definition has the advantage of not allowing models to \emph{overfit} the observed data,} since expected error is calculated as a function of the underlying probability of an input $x$ having an outcome of 1. If we instead used the observed labels $y_i$ and computed the empirical accuracy $\mathbb{E}[\mathbf{1}\{f(x_i) = y_i\}]$, a non-Bayes-optimal model (e.g., a classifier trained on the test data $D_N$) could obtain higher empirical accuracy than the Bayes-optimal model, e.g., by predicting $f(x_i) = 1$ for a data record that was \emph{a priori} unlikely to have $y_i=1$ (i.e., $p_i < 0.5$) but just happens to have $y_i=1$ in this instance.
However, it is also worth noting that the Bayes-optimal probabilities $p_i$ may be unknown for a real-world dataset $D_N$, since they are based on the distribution $D$ from which the data records in $D_N$ are drawn.  Nevertheless, these probabilities can be estimated from a separate, large training dataset.
Additionally, this set-up means that the Rashomon set will differ depending on what model you start with--- e.g. when exploring Rashomon sets with this method in practice, different estimations of the Bayes-optimal model will result in slightly different Rashomon sets. Even with only Bayes-optimal models, if you use a different feature set and end up with a different Bayes-optimal model, you will end up with a different Rashomon set. 

\vspace{0.5em}
\noindent\textbf{Defining Other Models in The Rashomon Set.} To more easily determine which of the $2^N$ possible models (mappings of each $d_i$, $i\in\{1,\ldots,N\}$, to \{0,1\}) belong to the Rashomon set $R_N(\epsilon)$, we represent each possible model by a binary \emph{flip vector} representing its changes in prediction from the Bayes-optimal model. This allows us to easily tell which models are in the Rashomon set since we can easily calculate the error difference from the Bayes-optimal models by keeping track of the changes in prediction from the Bayes-optimal model. 
In particular, we define a flip vector $\theta \in \{0,1\}^N$, where $\theta_i = 1$ if $f(x_i) \ne f_\text{opt}(x_i)$, and $\theta_i = 0$ if $f(x_i) = f_\text{opt}(x_i)$. The Bayes-optimal model $f_\text{opt}(\cdot)$ has a corresponding flip vector $\theta_0$ consisting of $N$ zeros.  We can then compute the accuracy of any model $f(\cdot)$ with corresponding flip vector $\theta$, which we denote as $acc(\theta)$, as follows:

\begin{align*}
acc(\theta) &= \frac{1}{N} \sum_{i=1\ldots N} \left(p_i f(x_i) + (1-p_i) (1-f(x_i)) \right).\\
&= acc(\theta_0) + \frac{1}{N} \sum_{i=1\ldots N} \theta_i \left(
((1-p_i) - p_i) \mathbf{1}\{p_i > 0.5\})
+(p_i - (1-p_i)) \mathbf{1}\{p_i \le 0.5\})
\right) \\
&= acc(\theta_0) - \frac{1}{N} \sum_{i=1\ldots N} \theta_i |2p_i-1|.
\end{align*}

We now define the weight $w_i$ corresponding to the Bayes-optimal probability $p_i$ as $w_i = |2p_i - 1|$.  These weights can be thought of as the Bayes-optimal classifier's confidence in each positive or negative prediction, and range from 0 (for $p_i=0.5$) to 1 (for $p_i\in\{0,1\}$).  Let $W_N = \langle w_1,w_2,\ldots,w_N\rangle$ be the weight vector for data records $\langle d_1,d_2,\ldots,d_N\rangle$, and then we can write:
\[
acc(\theta) = acc(\theta_0) - \frac{\theta \cdot W_N}{N}.
\]
Finally, for a given error tolerance $\epsilon$, we define the largest possible Rashomon set $R_N(\epsilon)$ as all flip vectors $\theta \in \{0,1\}^N$ with $acc(\theta) \ge acc(\theta_0) - \epsilon$, and thus:
\[
R_N(\epsilon) = \left\{ \theta \in \{0,1\}^N : \frac{\theta \cdot W_N}{N} \le \epsilon \right\}.
\]
\noindent\textit{Implications.} The critical takeaway here is that we can enumerate all of the models in the Rashomon set by exhaustively computing all of the $2^N$ possible flip vectors and checking to see which ones fall within the accuracy constraint $\epsilon$. Since this would get costly in practice, we show how to randomly sample efficiently from the Rashomon set---and how to find the fairest model. 

\vspace{0.5em}
\noindent\textbf{Benefits and Limitations of our Setup}
We summarize the main takeaways from our theoretical setup that frame our results. 
First, we think of models as mappings from input features to binary decisions in $\{0,1\}$. This means that models will exist in the Rashomon set that may not be reachable by regular training methods. Thus, this paper explores behavior of the largest possible Rashomon set. 
Second, the way we calculate error was designed to prevent rewarding models from overfitting. Thus, in the context of this paper, we do not have to worry about generalizability concerns.
Finally, we assume that the basis of the Rashomon set is the Bayes-optimal model. This model is sometimes not discoverable from available data in practice. Also, since we define our Rashomon set in terms of deviation from a base model, changes in that model will lead to changes in the Rashomon set-- and so, it is important to find as accurate as possible a baseline model when using this method of Rashomon set exploration in practice.

These choices lead to both a benefits and a downside: it allows us to see what is \emph{possible} to achieve within the Rashomon set under the best possible conditions, allowing us to see how much we can strive to accomplish. At the same time, it also may deviate from what people observe in practice e.g. through searching only through limited model classes or not starting with the Bayes-optimal model. Testing to see how much exploration of subsets of the Rashomon reachable through traditional training methods set differ from the results we present here is an important and ongoing area of future work, and is also touched upon in this paper.

}


\section{Preliminaries and Notation}
\label{sec:prelim}

In this section, we introduce the mathematical setup and assumptions behind our theoretical results and discuss the implications of these decisions. To define the Rashomon set of approximately equally accurate models, we consider four questions: (i) how do we define a model? (ii) when are models considered distinct? (iii) how do we measure the accuracy of a model? and (iv) if the Rashomon set consists of all models with accuracy within $\epsilon$ of some ``optimal'' model, how is that model defined?  

\vspace{0.5em}
\noindent\textbf{Basics and Model Definition.} 
To answer the first and second questions above, we consider Rashomon sets in the finite-sample case, i.e., assuming we have a fixed number of data records $N$. 
Later in the paper, we present theoretical results in the large-sample case, as $N$ goes to infinity. Additional preliminaries and assumptions necessary for those results are presented in Section~\ref{sec:flip_probs-assumptions}.  
Let $D_N = \langle d_1, d_2, \ldots, d_N \rangle$ be a set of $N$ data records drawn i.i.d. from distribution $D$. We focus on the binary classification setting, where each data record $d_i = (x_i, y_i)$, $x_i = \{x_{ij}\}$ represents a set of input features (including a binary \emph{sensitive attribute} which we denote as $A_i$), and $y_i$ is a binary outcome variable. Thus our models are binary classification models, which predict an outcome in $\{0,1\}$. 
We define a predictive model by its classification $\hat y_i = f(x_i)$ for each data record $d_i$, that is, by its mapping from input features $x_i$ to decisions $\{0,1\}$ on the data $D_N$. Thus, there are $2^N$ distinct models possible for a set of data records of size $N$. 

\vspace{0.3em}
\noindent\textit{Implications.} Note that this definition focuses on \emph{predictive multiplicity}~\cite{marx2019} rather than \emph{procedural multiplicity}~\cite{black2022model}: while previous work has shown that models can reach the same decisions in different ways, e.g., relying on different input features~\cite{black2022model}, we view models as identical if they reach the same predictions.
Further, we note that some of the 
models in the Rashomon set 
may be difficult or impossible to find via supervised learning, 
since one would typically restrict the model class (e.g., to logistic regression models, or neural networks with a given architecture, etc.) and optimize the model parameters within this class (to minimize some loss function) on a separate, labeled training dataset. 
\textbf{Thus we term the Rashomon set $R_N(\epsilon)$ of approximately equally accurate models within this set of $2^N$ models as the \emph{largest possible Rashomon set} for error tolerance $\epsilon$, because it places no restrictions on the model class, smoothness or consistency of predictions, etc.}.

\vspace{0.5em}
\noindent\textbf{Model Accuracy and Optimal Model.} Our answers to the third and fourth questions above rely on the concept of a \emph{Bayes-optimal classifier} $f_\text{opt}(x_i)$. This model is assumed to have access to the true probabilities $p_i = \mbox{Pr}(y = 1 \:|\: x = x_i)$ but not the observed labels $y_i$. In other words, the Bayes-optimal classifier has access to the underlying probability that given the available input information, an individual data record will have true outcome $y=1$ in the classification problem (e.g., the probability that an individual will repay a loan based on their application), but not the actual outcome (whether or not that individual defaulted on the loan). The Bayes-optimal classifier predicts $f_\text{opt}(x_i) = 1$ if $p_i > 0.5$, and $f_\text{opt}(x_i) = 0$ otherwise, and has the highest expected classification accuracy, $\mathbb{E}[\max(p_i,1-p_i)]$, among all classifiers using the same set of features $x$.
Thus, given the data records $D_N=\langle d_1,d_2,\ldots,d_N\rangle$ and the corresponding true probabilities $P_N = \langle p_1,p_2,\ldots,p_N\rangle$, we define the Rashomon set $R_N(\epsilon)$ for error tolerance $\epsilon$ as the set of all models with expected classification accuracy greater than or equal to $\mathbb{E}[\max(p_i,1-p_i)] - \epsilon$.  

\vspace{0.3em}
\noindent\textit{Implications.} \textbf{This definition has the advantage of not allowing models to \emph{overfit} the observed data,} since expected error is calculated as a function of the underlying probability $p_i$ of an input $x$ having an outcome of 1. If we instead used the observed labels $y_i$ and computed the empirical accuracy $\mathbb{E}[\mathbf{1}\{f(x_i) = y_i\}]$, a non-Bayes-optimal model (e.g., a classifier trained on the test data $D_N$) could obtain higher empirical accuracy than the Bayes-optimal model, e.g., by predicting $f(x_i) = 1$ for a data record that was \emph{a priori} unlikely to have $y_i=1$ (i.e., $p_i < 0.5$) but just happens to have $y_i=1$ in this instance.
However, it is also worth noting that the Bayes-optimal probabilities $p_i$ may be unknown for a real-world dataset $D_N$, since they are based on the distribution $D$ from which the data records in $D_N$ are drawn.  Nevertheless, these probabilities can be estimated from a separate, large training dataset.
Additionally, in this setup $R_N(\epsilon)$ will differ depending on what model you start with. Thus, when exploring Rashomon sets with this method in practice, different estimations of the Bayes-optimal model will result in slightly different Rashomon sets. \ignore{Even with only Bayes-optimal models, if you use a different feature set and end up with a different Bayes-optimal model, you will end up with a different Rashomon set.} 

\vspace{0.5em}
\noindent\textbf{Defining Other Models in The Rashomon Set.} To more easily determine which of the $2^N$ possible models (mappings of each $d_i$, $i\in\{1,\ldots,N\}$, to \{0,1\}) belong to the Rashomon set $R_N(\epsilon)$, we represent each possible model by a binary \emph{flip vector} representing its changes in prediction from the Bayes-optimal model. This allows us to easily tell which models are in the Rashomon set, since we can easily calculate a model's error difference from the Bayes-optimal model using its flip vector. In particular, we define a flip vector $\theta \in \{0,1\}^N$, where $\theta_i = 1$ if $f(x_i) \ne f_\text{opt}(x_i)$, and $\theta_i = 0$ if $f(x_i) = f_\text{opt}(x_i)$. The Bayes-optimal model $f_\text{opt}(\cdot)$ has a corresponding flip vector $\theta_0$ consisting of $N$ zeros.  We can then compute the accuracy of any model $f(\cdot)$ with corresponding flip vector $\theta$, which we denote as $acc(\theta)$, as $acc(\theta) = acc(\theta_0) - \frac{1}{N} \sum_{i=1\ldots N} \theta_i |2p_i-1|$.
This follows from the fact that the Bayes-optimal classifier's probability of predicting $y_i$ correctly is $\max(p_i,1-p_i)$, while the flipped prediction ($\theta_i = 1$) would be correct with probability $\min(p_i,1-p_i)$, leading to a difference of $|2p_i-1|$. We thus define the weight $w_i$ corresponding to probability $p_i$ as $w_i = |2p_i - 1|$.  These weights can be thought of as the Bayes-optimal classifier's confidence in each positive or negative prediction, and range from 0 (for $p_i=0.5$) to 1 (for $p_i=0$ or $p_i=1$).  Let $W_N = \langle w_1,w_2,\ldots,w_N\rangle$ be the weight vector for data records $\langle d_1,d_2,\ldots,d_N\rangle$, and then we can write:
\[
acc(\theta) = acc(\theta_0) - \frac{\theta \cdot W_N}{N}.
\]
Finally, for a given error tolerance $\epsilon$, we define the largest possible Rashomon set $R_N(\epsilon)$ as all flip vectors $\theta \in \{0,1\}^N$ with $acc(\theta) \ge acc(\theta_0) - \epsilon$, and thus:
\[
R_N(\epsilon) = \left\{ \theta \in \{0,1\}^N : \frac{\theta \cdot W_N}{N} \le \epsilon \right\}.
\]
\noindent\textit{Implications.} The critical takeaway here is that we can enumerate all of the models in the Rashomon set by checking to see which of the $2^N$ possible flip vectors fall 
within the accuracy constraint $\epsilon$. But since it would be too costly in practice to do this for all $2^N$ flip vectors, we show below how to randomly sample (efficiently) from the Rashomon set---and how to find the fairest model. 

\vspace{0.5em}
\noindent\textbf{Benefits and Limitations of our Setup.}
We summarize the main takeaways from our theoretical setup that frame our results. 
First, we think of models as mappings from input features to binary decisions in $\{0,1\}$. This means that models will exist in the Rashomon set that may not be reachable by regular training methods. Thus, this paper explores behavior of the largest possible Rashomon set. 
Second, the way we calculate error was designed to avoid rewarding models for overfitting. Thus, in the context of this paper, we do not have to worry about generalizability concerns.
Finally, we assume that the basis of the Rashomon set is the Bayes-optimal model. This model is sometimes not discoverable from available data in practice. Also, since we define our Rashomon set in terms of deviation from a base model, changes in that model will lead to changes in the Rashomon set-- and so, it is important to find as accurate as possible a baseline model when using this method of Rashomon set exploration in practice.

These choices lead to both a benefit and a downside: it allows us to see what is \emph{possible} to achieve within the Rashomon set with maximum flexibility, allowing us to see how much we can strive to accomplish. At the same time, it also may deviate from what people observe in practice (e.g., by searching only through limited model classes, or not starting with the Bayes-optimal model). Testing to see how much exploration of subsets of the Rashomon set reachable through traditional training methods differs from the results we present here is an important and ongoing area of future work, and is also touched upon later in this paper.

\section{The Importance of Intentional Fairness}\label{sec:intentional}

\begin{figure}[t]
    \centering
    \begin{subfigure}[b]{0.32\textwidth}  
        \centering
        \includegraphics[width=\textwidth]{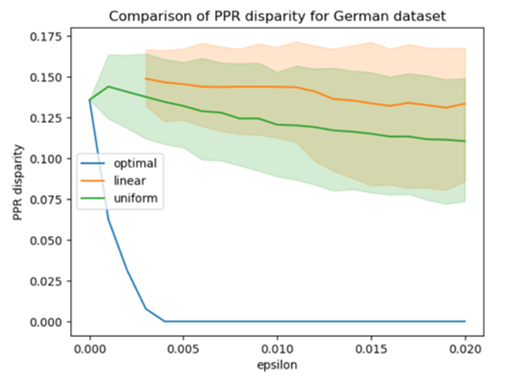}  
    \end{subfigure}
    \hfill
    \begin{subfigure}[b]{0.32\textwidth}
        \centering
        \includegraphics[width=\textwidth]{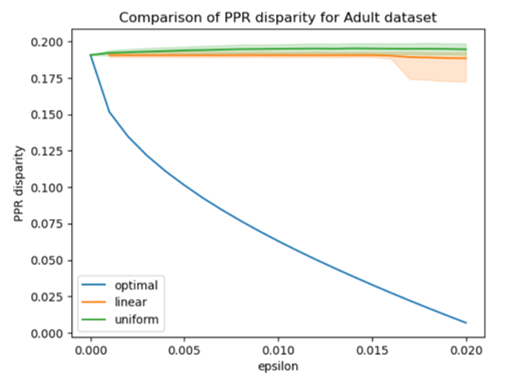}  
    \end{subfigure}
    \begin{subfigure}[b]{0.32\textwidth}
        \centering
        \includegraphics[width=\textwidth]{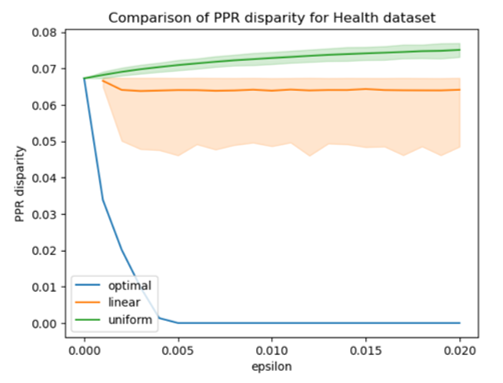}  
    \end{subfigure}
    \caption{Disparity in positive prediction rate for the German, Adult, and Health datasets, as a function of the error tolerance $\epsilon$. Comparison of methods for optimizing PPR (Section~\ref{sec:optimizing-PPR}), uniform random sampling (Section~\ref{sec:sampling}), and sampling linear models (Section~\ref{sec:linear}) over the Rashomon set $R_N(\epsilon)$.}
    \label{fig:PPR_disparity}
\end{figure}


A natural question that may come up when considering searching for fairer models within the Rashomon set is---is it worth it? While it is clear that intentionally searching for fair models \emph{without} a strict bound on accuracy leads to large fairness gains, it is not obvious \emph{a priori} that this holds true within sets of models that are approximately equally accurate. What if the fairness of all the models in the Rashomon set is more or less the same, and a randomly sampled model---akin to selecting a model solely on the basis of accuracy and not paying attention to fairness---is just as fair as the fairest ones within the set? 
We show that this is definitely not the case--- the fairness difference between the average, or randomly sampled, model within the Rashomon set and the fairest models can be very large. We also show experimentally that how you look for fairer models can influence your success---while searching directly for the fairest model is always the most effective method, whether or not you can reach significantly fairer models by randomly sampling models in the Rashomon set is dataset-dependent, and also depends on how you search.

To compare what we gain by being intentional or arbitrary about fairness within the Rashomon set, 
we must show both how to draw randomly from the Rashomon set and how to find the fairest model.
We present \textbf{novel, computationally efficient approaches} for (i) optimizing different fairness metrics over the Rashomon set, as described in Section~\ref{sec:optimizing-fairness}; and (ii) sampling models uniformly at random from the Rashomon set, as described in Section~\ref{sec:sampling}.  We also describe a simple baseline for comparison in Section~\ref{sec:linear}: restricting the model class (here we assume penalized logistic regression models) and learning models from that class with different sources of random variation.  We compare the fairness of the models found by the optimization, uniform sampling, and restricted model class approaches in Section~\ref{sec:intentional-experiments}, and explore policy takeaways in Section~\ref{sec:intentional-takeaways}. 

\subsection{Optimizing fairness over the Rashomon set}
\label{sec:optimizing-fairness}

Despite computational hardness results for finding the fairest model in the Rashomon set~\cite{laufer2024fundamental}, we show that under certain conditions, it is possible to find the fairest model within the Rashomon set $R_N(\epsilon)$ defined on $N$ data records in log-linear time, $\mathcal{O}(N \log N)$. In particular, when we are concerned with mitigating demographic disparity (i.e., equalizing the positive prediction rate or PPR) between two groups, we show that we can find the exact fairest model within the Rashomon set. For equalizing false positive rate (FPR) or true positive rate (TPR) between two groups, we can find a model which is guaranteed to have error rate disparity no more than $\mathcal{O}(\frac{1}{N})$ higher than the fairest model. As we show in Section~\ref{sec:intentional-experiments}, using these algorithms on three real-world datasets, we see that in practice, it is often possible to completely eradicate disparities by searching within the Rashomon set for very small $\epsilon$--- less than half of a percent in many cases.

For PPR, FPR, and TPR, we can express the optimization of the fairness criterion over flip vectors $\theta$, subject to the constraint that 
$\theta$ is in the Rashomon set $R_N(\epsilon)$, as a \emph{knapsack problem}, where each data record $d_i$ has a weight $w_i = |2p_i-1|$ corresponding to the error incurred by its flip, and a value $v_i$ corresponding to how much it reduces disparity. A flip occurring, i.e., $\theta_i=1$, corresponds to the inclusion of element $i$ in the knapsack, adding $w_i$ to the total weight and $v_i$ to the total value.  The 0-1 knapsack problem is then the constrained optimization with capacity $N\epsilon$: $\max \sum_i \theta_i v_i$ subject to $\theta_i \in \{0,1\}$ and $\sum_i \theta_i w_i \le N\epsilon$. 

We note that concurrent work by Laufer et al.~\cite{laufer2024fundamental} formulates the optimization of fairness over the Rashomon set as a subset sum problem (closely related to the knapsack problem) and uses this equivalence to show that their problem (i) is NP-hard to solve in general, and (ii) can be approximated in $\mathcal{O}(N^3)$ time. While the knapsack problem is also NP-hard in general, we present efficient $\mathcal{O}(N \log N)$ solutions for the special cases below.

\ignore{
\subsubsection{Optimizing for statistical parity-Emily's version} 
\label{sec:optimizing-PPR}
\noindent\textbf{Algorithm 1.} There exists a $\mathcal{O}(N \log N)$ knapsack algorithm to find the classifier that minimizes PPR disparity over the Rashomon set $R_N(\epsilon)$.

The goal is to find the flip vector that has accuracy as close as possible to the Bayes-optimal (and within $\epsilon$) while minimizing disparity. Thus, we must find the lowest-accuracy-cost individuals predictions to flip, and flip until we reach the error budget or we totally reduce disparity. The proof rests on a few main ideas: first, recall that predictive parity is equality in the rates of positive prediction for both groups. Thus, flipping the prediction of any individual in each demographic group has the same effect on reducing PPR disparity, namely flipping someone in group $A$ changes the disparity by $\frac{1}{|A|}$ and similarly for someone in group $B$. So, if group $A$ has a lower positive prediction rate, flipping the predictions of individuals in group $A$ from 0 to 1 reduces the disparity by $\frac{1}{|A|}$, and symmetrically for group $B$. Secondly, the accuracy cost of changing any individual is their weight $|2p_i-1|$ where $p_i$ is their Bayes-optimal prediction. Thus, the optimal knapsack solution is to flip the predictions of the lowest-weight individuals from group $A$ and $B$ until predictive parity or the error budget is reached.  We show in Appendix~\ref{appendix:statdisp} that these low-weight individuals in groups $A$ and $B$ can be found in linear time through an incremental search 
and thus the run time is dominated by the $\mathcal{O}(N\log N)$ sorting of items by weight.
}
 
\subsubsection{Optimizing for statistical parity} 
\label{sec:optimizing-PPR}
We present an efficient, $\mathcal{O}(N\log N)$ knapsack approach to find the exact fairest model that minimizes PPR disparity over the Rashomon set $R_N(\epsilon)$, as described in detail in Appendix~\ref{appendix:statdisp}, Algorithm~\ref{algorithm:knapsack}. The goal of this algorithm is to find the individual predictions to flip (setting $\theta_i=1$) to reduce disparity, until we either use up the entire error tolerance $\epsilon$ or completely remove the disparity. Intuitively, we want to flip individuals who will increase the error as little as possible (low weights $w_i$) and reduce the disparity as much as possible (high values $v_i$).  

The key idea for making this efficient is that there are only two distinct values of $v_i$: for instance, if group $A$ has a higher positive prediction rate, flipping the prediction of an individual in group $A$ from 1 to 0 reduces the disparity by $\frac{1}{|A|}$, flipping the prediction of an individual in group $B$ from 0 to 1 reduces the disparity by $\frac{1}{|B|}$, and other flips would increase disparity.
In this case, the optimal knapsack solution is to flip the predictions of the $k_A$ lowest-weight individuals with $f_\text{opt}(x_i)=1$ from group $A$ and the $k_B$ lowest-weight individuals with $f_\text{opt}(x_i)=0$
from group $B$.  We can then find the optimal values of $k_A$ and $k_B$ (that minimize disparity while satisfying the constraint on accuracy) through a linear-time, incremental search, as shown in Appendix~\ref{appendix:statdisp}, Algorithm~\ref{algorithm:knapsack}, and thus the run time is dominated by the $\mathcal{O}(N\log N)$ sorting of items by weight.

\ignore{
Let $P_A$ and $P_B$ be the vectors of Bayes-optimal probabilities $p_i$ for subgroups $A$ (the \emph{protected class}, data records $d_i$ with sensitive attribute value $A_i=a$) and $B$ (the \emph{non-protected class}, data records $d_i$ with sensitive attribute value $A_i\ne a$) respectively. Let $F^\text{opt} = \langle F_A^\text{opt}, F_B^\text{opt} \rangle$ denote the vector of Bayes-optimal binary predictions $f_\text{opt}(x_i)$, and let $F = \langle F_A,F_B\rangle$ denote the vector of binary predictions $f(x_i)$ corresponding to flip vector $\theta = \langle \theta_A, \theta_B\rangle$.  \ignore{We note that
$F_N = F_N^\text{opt} \odot (1 - \theta) + (1 - F_N^\text{opt}) \odot \theta$,
where $\odot$ is the element-wise product between two vectors.}  As shown in Appendix~\ref{appendix:statdisp}, we define the positive prediction rate disparity as: 
\begin{align*}
\text{disparity}_{PPR} = \left|\mathbf{E}[f(x_i) \:|\: d_i\in A] - \mathbf{E}[f(x_i) \:|\: d_i \in B] \right| = \left|\frac{||F_A||_1}{|A|} - \frac{||F_B||_1}{|B|}\right|,
\end{align*}
where $F_A = F^\text{opt}_A \odot (1-\theta_A) + (1-F^\text{opt}_A) \odot \theta_A$, $F_B = F^\text{opt}_B \odot (1-\theta_B) + (1-F^\text{opt}_B) \odot \theta_B$, and $\odot$ denotes element-wise product. Assume wlog that subgroup $A$ has higher PPR, $\frac{||F_A||_1}{|A|} > \frac{||F_B||_1}{|B|}$. Then flipping a prediction in group $A$ from 1 to 0, or flipping a prediction in group $B$ from 0 to 1, reduces disparity by $\frac{1}{|A|}$ or $\frac{1}{|B|}$ respectively, while other flips increase disparity.  Thus we can write the value of element $i$ for the knapsack problem as $v_i = \frac{1}{|A|}$ if $d_i \in A$ and $F^\text{opt}_i = 1$, $v_i = \frac{1}{|B|}$ if $d_i \in B$ and $F^\text{opt}_i = 0$, and $v_i = 0$ otherwise.

In Appendix~\ref{appendix:statdisp}, Algorithm \dn{xxx}, we present a $\mathcal{O}(N \log N)$ knapsack algorithm to find the classifier that minimizes PPR disparity over the Rashomon set $R_N(\epsilon)$. The key idea to make this algorithm efficient is that there are only two distinct values $v_i > 0$, $\frac{1}{|A|}$ and $\frac{1}{|B|}$ for group $A$ and $B$ respectively. Thus, the optimal knapsack solution will consist of the $k_A$ lowest-weight items from group $A$ and the $k_B$ lowest-weight items from group $B$, for some $k_A$ and $k_B$. We show in Appendix~\ref{appendix:statdisp} that $k_A$ and $k_B$ can be found in linear time through an incremental search (note that the naive approach of iterating over all possible $k_A$ and $k_B$ would require quadratic time), and thus the run time is dominated by the $\mathcal{O}(N\log N)$ sorting of items by weight.

Given the value and weight of each record, the algorithm calculates the final disparity and flip vector to decrease the disparity. Since $A$ and $B$ comprise elements of same magnitude, $\frac{1}{|A|}$ and $\frac{1}{|B|}$, respectively, the optimal solution will consist of $k_A$ items of lowest weight $A$ and $k_B$ items of lowest weight $B$, but we do not know $k_A$ and $k_B$. Initially, the algorithm calculates the maximum number of A elements, $k_A^\text{max}$, and B elements thereafter, $k_B^\text{max}$, that fit the capacity. Then, the maximum number of $B$ items that fit the capacity for each $k_A$ between 0 and $k_A^\text{max}$ is calculated and the minimum disparity is updated. 

By keeping track of $k_B$ items for a given $k_A$, one can incrementally update for $k_A-1$ by adding the lowest weight $B$ items not added until the capacity is full, resulting in a $\mathcal{O}(N)$ algorithm. With $\mathcal{O}(N ~\text{log} ~N)$ time to sort the weights to add the lowest weighted items, the overall time complexity is $\mathcal{O}(N ~\text{log} ~N)$. Otherwise, searching through $B$ items separately for each $k_A$ leads to a naive quadratic-time $\mathcal{O}(N^2)$ algorithm. 

 for Eq. \ref{sddefn} derivation, Eq. \ref{valsd} derivation, and algorithm to mitigate statistical disparity.
}

\ignore{
\subsubsection{Optimizing for error rate balance - Emily's version}
\label{sec:optimizing-TPR and FPR}
\noindent\textbf{Algorithm 2.} There exists a $\mathcal{O}(N \log N)$ fractional knapsack algorithm to find the classifier that minimizes FPR and TPR disparity over the Rashomon set $R_N(\epsilon)$.

Again, the goal of this algorithm is to find the lowest-cost individual predictions to flip to reduce, in this case, FPR and TPR disparity. In this case, individuals within groups $A$ and $B$ have varying values for how much they reduce disparity. Thus, the PPR solution does not work, and instead we use an approximation, i.e. the fractional knapsack solution, which flips individuals predictions in order of the ratio of their value (the amount they reduce disparity) to their weight (amount the increase the models error), until an individual will not ``fit'' in the knapsack since maximum weight (i.e. error threshold) is reached. Then, a ``fraction'' of this individual is added to the knapsack. Of course in our case, we cannot flip a fraction of an individual--- thus, rather than adding the fractional element, we show that it would reduce disparity by an amount $\theta_i v_i$ that is $\mathcal{O}(\frac{1}{N})$. Since the fractional knapsack solution $\sum \theta_i v_i$ is an upper bound on the 0-1 knapsack solution, we know that our solution (excluding the fractional element) reduces disparity to within $\mathcal{O}(\frac{1}{N})$ of the optimal disparity. We describe our implementation of fractional knapsack and this proof in more detail in Appendix~\ref{appendix:errbal}.
}

\subsubsection{Optimizing for error rate balance}
\label{sec:optimizing-TPR and FPR}
We present an efficient, $\mathcal{O}(N\log N)$ fractional knapsack approach to find the model that minimizes FPR or TPR disparity over the Rashomon set $R_N(\epsilon)$, to within $\mathcal{O}(\frac{1}{N})$ of the optimal disparity, as described in detail in Appendix~\ref{appendix:errbal}, Algorithm~\ref{algorithm:fractional_knapsack}. Again, the goal of this algorithm is to find the lowest-cost individual predictions to flip to reduce disparity, until we either use up the entire error tolerance $\epsilon$ or completely remove the disparity.

In this case, however, there are more than two distinct values of $v_i$ (how much flipping an individual reduces disparity) so the PPR solution described above does not work. Instead we use an approximation, the fractional knapsack solution, which flips individuals' predictions (setting $\theta_i=1$) in descending order of the ratio of their value $v_i$ (the amount they reduce the model's disparity) to their weight $w_i$ (the amount they increase the model's error). This continues until an individual will not ``fit'' in the knapsack since maximum weight (i.e., error threshold) is reached. Then, a ``fraction'' of this individual is added to the knapsack. In our case, we cannot flip a fraction of an individual--- thus, rather than adding the fractional element, we show that it would reduce disparity by an amount $\theta_i v_i$ that is $\mathcal{O}(\frac{1}{N})$. Since the fractional knapsack solution $\sum \theta_i v_i$ is an upper bound on the 0-1 knapsack solution, we know that our solution (excluding the fractional element) reduces disparity to within $\mathcal{O}(\frac{1}{N})$ of the optimal disparity.

\ignore{
We describe our implementation of fractional knapsack in more detail in Appendix~\ref{appendix:errbal}, and show that since the fractional knapsack solution $\sum \theta_i v_i$ is an upper bound on the 0-1 knapsack solution, 
our solution (without adding the fractional element) reduces disparity to within $\mathcal{O}(\frac{1}{N})$ of the optimal disparity.

Given the notation above, in Appendix~\ref{appendix:errbal} we define the false positive rate disparity as: 
\begin{align*}
\text{disparity}_{FPR} = \left|\mathbf{E}[f(x_i) \:|\: d_i\in A, y_i = 0] - \mathbf{E}[f(x_i) \:|\: d_i \in B, y_i = 0] \right| = \left|\frac{(1-P_A) \cdot F_A}{||1-P_A||_1} - \frac{(1-P_B) \cdot F_B}{||1-P_B||_1}\right|,
\end{align*}
Assume wlog that subgroup $A$ has higher FPR, 
$\frac{(1-P_A) \cdot F_A}{||1-P_A||_1} > \frac{(1-P_B) \cdot F_B}{||1-P_B||_1}$. Then flipping a prediction in group $A$ from 1 to 0, or flipping a prediction in group $B$ from 0 to 1, reduces the disparity by $\frac{1-p_i}{||1-P_A||_1}$ or $\frac{1-p_i}{||1-P_B||_1}$ respectively, while other flips increase disparity.  Thus we can write the value of element $i$ for the knapsack problem as $v_i = \frac{1-p_i}{||1-P_A||_1}$ if $d_i \in A$ and $F^\text{opt}_i = 1$, $v_i = \frac{1-p_i}{||1-P_B||_1}$ if $d_i \in B$ and $F^\text{opt}_i = 0$, and $v_i = 0$ otherwise.

The derivation for true positive rate disparity, $\text{disparity}_{TPR} = \left|\mathbf{E}[f(x_i) \:|\: d_i\in A, y_i = 1] - \mathbf{E}[f(x_i) \:|\: d_i \in B, y_i = 1] \right|$, proceeds similarly (Appendix~\ref{appendix:errbal}), obtaining 
$v_i = \frac{p_i}{||P_A||_1}$ if $d_i \in A$ and $F^\text{opt}_i = 1$, $v_i = \frac{p_i}{||P_B||_1}$ if $d_i \in B$ and $F^\text{opt}_i = 0$, and $v_i = 0$ otherwise.

To minimize FPR or TPR disparity over the Rashomon set $R_N(\epsilon)$, we note that elements have more than two distinct values, so we cannot apply the solution for PPR above. Instead, as we describe in Appendix~\ref{appendix:errbal}, Algorithm \dn{xxx}, we approximate the 0-1 knapsack problem with the fractional knapsack problem: $\max \sum_i \theta_i v_i$ subject to $\theta_i \in [0,1]$ and $\sum_i \theta_i w_i \le N\epsilon$.
The standard solution to the fractional knapsack, which requires $\mathcal{O}(N\log N)$ time, adds elements to the knapsack ($\theta_i = 1$) in descending order of their ratio $\frac{v_i}{w_i}$ until no further elements can be (fully) added, then adds a fraction of the next element ($0 < \theta_i < 1$) to fill the remaining capacity. Rather than adding the fractional element, we show that it would reduce disparity by an amount $\theta_i v_i$
that is $\mathcal{O}(\frac{1}{N})$. Since the fractional knapsack solution $\sum \theta_i v_i$ is an upper bound on the 0-1 knapsack solution, 
we know that our solution (excluding the fractional element) reduces disparity to within $\mathcal{O}(\frac{1}{N})$ of the optimal disparity.
}

\ignore{
\subsection{Sampling models uniformly at random from the Rashomon set - Emily's version}
\label{sec:sampling}
\noindent\textbf{Algorithm 3.} We can use Gibbs sampling to efficiently sample models uniformly from the Rashomon set. \eb{do we have any quantification of efficient?}

The key idea of Gibbs Sampling~\cite{geman1984} is to exploit knowledge of \emph{conditional distributions} even when the full distribution is unknown. In this setting, while we do not know the joint distributions of flip probabilities $\theta$ for all records in the dataset, we can easily calculate the chance $\theta_i$ that a data record $d_i$ will flip if we have the flip probabilities for all other records, i.e. $\theta_{-i}$. 
To see this, let $\theta_{i=0} = \langle\theta_1,\ldots,\theta_{i-1},0,\theta_{i+1},\ldots,\theta_N\rangle$ and 
$\theta_{i=1} = \langle\theta_1,\ldots,\theta_{i-1},1,\theta_{i+1},\ldots,\theta_N\rangle$. Then we know that $\frac{\theta_{i=0}\cdot W_N}{N} \le 
\frac{\theta\cdot W_N}{N} \le
\frac{\theta_{i=1}\cdot W_N}{N}$. This implies that, if $\theta \in R_N(\epsilon)$ and $\theta_i = 1$, then $\theta_{i=0} $ and $\theta_{i=1}$ are both in the Rashomon set, so $\mbox{Pr}(\theta_i = 1 \:|\: \theta_{-i}) = \frac{1}{2}$. If $\theta \in R_N(\epsilon)$ and $\theta_i = 0$, then $\theta_{i=0} \in R_N(\epsilon)$, but we must check whether $\theta_{i=1} \in R_N(\epsilon)$, i.e., whether $\frac{\theta \cdot W_N + w_i}{N} \le \epsilon$. If so, then $\mbox{Pr}(\theta_i = 1 \:|\: \theta_{-i}) = \frac{1}{2}$, and if not, then 
$\mbox{Pr}(\theta_i = 1 \:|\: \theta_{-i}) = 0$.  Given this simple and computationally efficient conditional sampling step, our Gibbs sampling approach starts with the zero vector $\theta_0$, which is guaranteed to be in the Rashomon set, and iteratively samples $\theta_i \sim \text{Bernoulli}(p)$, where $p = \mbox{Pr}(\theta_i = 1 \:|\: \theta_{-i})$ as described above, for each $i \in \{1,\ldots,N\}$.  To ensure uncorrelated samples from the joint distribution, we take one sample every 10 iterations (where one iteration includes resampling all $N$ elements of $\theta$ in randomly permuted order), after an initial burn-in period of 500 iterations.  For each dataset and each value of $\epsilon$ considered, we run 10,000 iterations of Gibbs sampling, resulting in 950 samples.  Please see Appendix~\ref{appendix:gibbs}, Algorithm~\ref{algorithm:gibbs}, for the pseudocode of our Gibbs sampling algorithm.
}

\subsection{Sampling models uniformly at random from the Rashomon set}
\label{sec:sampling}
We now turn to showing how we can sample models uniformly from the Rashomon set, which shows us what typical models from the Rashomon set look like.
While we could just sample random flip vectors and keep the ones that are in the Rashomon set, this approach will be ineffective: as we discuss in Appendix~\ref{appendix:gibbs}, the vast majority of flip vectors will not be in the Rashomon set.  Instead, we propose a new approach based on Gibbs sampling~\cite{geman1984} to sample models uniformly from the Rashomon set.
This approach, described in Appendix~\ref{appendix:gibbs}, Algorithm~\ref{algorithm:gibbs}, is computationally efficient, requiring $\mathcal{O}(N)$ time per sample.

The key idea of Gibbs sampling is to exploit knowledge of \emph{conditional distributions} even when the full distribution is unknown. In our setting, while we do not know the joint distribution of flip probabilities $\theta$ for all records in the dataset, we can easily compute the chance that a data record $d_i$ will flip ($\theta_i = 1$) conditional on which other data records are flipped ($\theta_{-i}$).  We show in Appendix~\ref{appendix:gibbs} that there are only two possibilities: if the flip vector $\theta_{i=1}$ (with $\theta_i = 1$ and all other flips the same as $\theta_{-i}$) is in the Rashomon set, then there is a 50/50 chance that $\theta_i=1$, and otherwise we know $\theta_i = 0$.  We can then redraw $\theta_i$ with the corresponding probability (either 0.5 or 0) of being 1. Given this simple and computationally efficient conditional sampling step, our Gibbs sampling approach starts with the zero vector $\theta_0$, which is guaranteed to be in the Rashomon set, and iteratively samples $\theta_i$
(given the current values of $\theta_{-i}$) for all $N$ data elements. To ensure uncorrelated samples from the joint distribution, we take one sample every 10 iterations (where one iteration includes resampling all $N$ elements of $\theta$ in randomly permuted order), after an initial burn-in period of 500 iterations.  For each dataset and each value of $\epsilon$ considered, we run 10,000 iterations of Gibbs sampling, resulting in 950 samples.

\subsection{Experiments on real data}
\label{sec:intentional-experiments}
We now describe our experimental design for comparing randomly sampled and optimally fair models within Rashomon sets on real data, showing the importance of intentional fairness. Throughout this paper, we present experimental results on three real-world datasets that are commonly used as benchmarks in the fair machine learning literature: German Credit (``German''), Adult, and Heritage Health (``Health'').
Details of all three datasets are described in Appendix~\ref{appendix:datasets}.

As noted above, the Bayes-optimal probabilities $p_i$ are unknown for these real-world datasets, but can be well-estimated using sufficient training data.  Since we wish to compare the methods over all $N$ data records ($N=1,000$ for German, $N=46,443$ for Adult, and $N=184,308$ for Health), we performed 5-fold cross-validation to estimate these probabilities.  For each held-out 20\% of the data, we trained a model $\hat f_\text{opt}(x)$ using the remaining 80\% of the data to approximate the Bayes-optimal model $f_\text{opt}(x)$, and used its predicted probabilities $\hat p_i$ to estimate the Bayes-optimal probabilities $p_i$ for that fold.  More precisely, we trained logistic regression models on each dataset that matched typical (maximal) accuracies reported in the wider literature.  To check the robustness of our results to the choice of model used for estimation of $p_i$, we re-ran all experiments using the estimated probabilities $\hat p_i$ from XGBoost models learned using 5-fold cross-validation (Appendix~\ref{appendix:robustness}), and found no notable differences. 

To test the difference between randomly sampling from the Rashomon set and directly optimizing for the fairest model within the set on real data, for each dataset and each $\epsilon$ value, we compared the model found by optimizing the desired fairness metric (PPR, TPR, or FPR) over the Rashomon set $R_N(\epsilon)$, as described in Section~\ref{sec:optimizing-fairness}, to the distributions of models found by (i) uniform random sampling over all models (flip vectors) $\theta \in R_N(\epsilon)$, as described in Section~\ref{sec:sampling}, and (ii) a simple baseline approach, sampling penalized logistic regression models (and corresponding flip vectors $\theta$) from $R_N(\epsilon)$, as described in Section~\ref{sec:linear} below. 
For each distribution of samples, we report the mean and 95\% interquantile range, i.e., the 2.5 and 97.5 percentiles of the distribution.

We compare these approaches using three fairness criteria: statistical parity, or balanced positive prediction rate (PPR), balanced false positive rate (FPR), and balanced true positive rate (TPR).  
Disparities with respect to all three criteria were measured between the protected class ($A_i=a$) and non-protected class ($A_i\ne a$), using the sensitive attribute value for each dataset described in Appendix~\ref{appendix:datasets}. All three measures of disparity for a given flip vector $\theta$ were computed using the (estimated) Bayes-optimal probabilities $p_i$ and corresponding weights $w_i$, rather than the observed outcomes $y_i$, as described in Appendix~\ref{appendix:optifairness}.
Results for PPR disparity are shown in Figure~\ref{fig:PPR_disparity}, and results for FPR and TPR disparity are shown in Appendix~\ref{appendix:intentional-experiments}, Figures~\ref{fig:FPR_disparity} and~\ref{fig:TPR_disparity}.

\ignore{Algorithms 1 (random sampling) 2, and 3 (optimal fairness) (available in Appendix \todo{APP}, using the estimated Bayes-optimal probabilities $\hat{p_i}$,Figure~\ref{fig:PPR_disparity} shows the average, and upper and lower 95 percentile range for models found with the optimal fairness search strategy, randomly sampling, and randomly sampling within linear models only.}

\subsubsection{Baseline approach: sampling linear models from the Rashomon set}
\label{sec:linear}

As a simple baseline for comparison, which might be representative of how a company would typically choose a predictive model for deployment, we assume that a binary classifier is learned from a separate, large training dataset, where the model class is chosen $\emph{a priori}$ and therefore the set of possible flip vectors $\theta$ is restricted to members of that class. In particular, we assume that an $L_2$-penalized logistic regression model is learned. For consistency (since our experiments use all $N$ data records), we use $k$-fold cross-validation and compute all metrics using predictions (for a given data record $d_i$) made using a model learned from the other $k-1$ folds (excluding the fold that contains $d_i$).  Moreover, since a company would typically explore the space of parameter values and choose a model with high accuracy, we learn penalized logistic regression models with different sources of random variation, evaluate their accuracy, and keep those models which are in the Rashomon set.
More precisely, to sample over the Rashomon set of $L_2$-penalized logistic regression models, for a given dataset and value of $\epsilon \in \{0.001, 0.002, \ldots, 0.02\}$, we sample 1,000 models, where for each model we randomly sample the number of cross-validation folds $k \in \{2, 3, \ldots, 10\}$, the logistic regression solver, and the amount of $L_2$ penalization $C\in \{ 0.001, 0.01, 0.1, 1.0, 10, 100 \}$, and then fit the penalized logistic regression model using scikit-learn.
Given the model's predictions, we compute the flip vector $\theta$ and include the sampled model in the Rashomon set if $\frac{\theta \cdot W_N}{N} \le \epsilon$.

\subsection{Takeaways for policy and practice}
\label{sec:intentional-takeaways}
\begin{itemize}[left=10pt]
    \item \textbf{A randomly sampled model within the Rashomon set is nowhere near as fair as the fairest model at any given $\epsilon$.} As we can see from the gap between the blue and green lines in Figure~\ref{fig:PPR_disparity}, searching intentionally for the fairest model within the Rashomon set leads to much fairer models at the same $\epsilon$ than randomly sampling within the set. This shows us that a random model from the Rashomon set--- one selected on the basis of accuracy alone--- will have an extremely low chance of being the fairest, or even one of the fairer, models within the set. This in turn underscores the necessity of explicitly searching for fairer models before deployment-- i.e., an LDA search. 

    \item \textbf{In practice, it is often possible to completely eradicate disparities by searching within the Rashomon set for quite small $\epsilon$}. As we can see in Figure~\ref{fig:PPR_disparity}, for German Credit and Health datasets, a model exists that completely eradicates demographic disparity in the dataset for $\epsilon<0.005$, i.e., half a percentage point of accuracy loss. While the Adult dataset requires very slightly more than 2\% accuracy loss to fully eradicate the disparity, this is still a small enough gap considered to be acceptable based on case studies of LDA searches~\cite{colfax2022report3}.

    \item \textbf{Using repeated random sampling as a search strategy-- i.e., looking across many models selected on the basis of accuracy and searching for the fairest among them---can give mixed results.} While our theoretical setup does not map onto how LDA searches would be done in practice--- since we search through all the possible mappings of input to output for a dataset instead of generating actual parametric models---loosely, repeated random sampling corresponds to an LDA search that does not directly use protected attribute information until after all the models are trained, i.e., only as a step to evaluate models and choose among them post-hoc. Our optimal search method, on the other hand, corresponds to an LDA search method that uses some direct minimization of disparities across demographic groups during the model creation process, whether that be in hyperparameter tuning, optimization, or other parts of the pipeline. There is disagreement in the legal literature as to whether and to what extent interventions for disparity reduction across demographic groups that use protected class information are legally permissible~\cite{ho2020affirmative, kim2022race,gillis2021input}. Our experimental results show that we gain a lot by being able to directly intervene using protected attributes---however, repeated random sampling without direct use of protected attributes can in some cases be an effective technique as well, even if not as effective as direct intervention. In particular, we see that the German Credit results allow for a large reduction in disparity.  
    At $\epsilon=.02$ (i.e., 2\% error tolerance from the optimal model), the total PPR disparity could be reduced by 46\% compared to the Bayes-optimal model by reaching the 2.5 percentile of the PPR disparity distribution, which could be achieved in practice by taking the fairest (lowest PPR disparity) of 40 samples from the Rashomon set. In addition, note that while random sampling over the \emph{entire} Rashomon set of all possible mappings of $x$ to $y$ is not particularly effective at reducing disparity in the Health dataset, only searching within linear models is more effective--this is promising given that in practice, LDA searches can only be done within various model classes and not across all possible mappings. Divergence in the effectiveness of different random sampling methods based on model class is an interesting phenomenon that we look forward to studying in future work. 
\end{itemize}



\section{Understanding Individual Flip Probabilities}\label{sec:flip_probs}

In this section, we present our results showing how to compute expected \emph{flip probabilities} for every record $d_i$ across all models in the Rashomon set, i.e., the chance that a given individual will experience a change in prediction from the Bayes-optimal model in a randomly sampled model in the Rashomon set. 
Knowing flip probabilities allows us to explore the \emph{arbitrariness} that arises from the Rashomon set:
many authors have pointed to the phenomenon of predictive multiplicity~\cite{marx2019}, where an individual can have different outcomes among different models in a Rashomon set, as a form of inequity through arbitrariness~\cite{marx2019,black2021leave}. By seeing the flip probabilities of any individual in the Rashomon set, we can see who is 
more and less susceptible to potentially arbitrary changes in outcome---and as we discuss in Section~\ref{sec:flip_probs-experiments}, group-level disparities across who is likely to experience a change in prediction. 

\subsection{Preliminaries and assumptions for our large-sample theoretical results}
\label{sec:flip_probs-assumptions}

Throughout Sections~\ref{sec:flip_probs} and~\ref{sec:rashomon_set_size_and_error}, we present various theoretical results, and the corresponding takeaways for policy and practice, about individual flip probabilities, Rashomon set size, and use of error tolerance, in the \emph{large-sample limit} where the number of data records $N\rightarrow\infty$.  \textbf{For full statements and proofs of all theorems, see Appendix~\ref{appendix:proofs}.}  In this subsection, we present the notation needed to understand the theoretical results, along with the key assumptions that these results depend on.  

As in Section~\ref{sec:prelim}, we assume data records $d_i = (x_i, y_i)$ drawn i.i.d. from distribution $D$, with corresponding Bayes-optimal probabilities 
$p_i = \mbox{Pr}(y = 1 \:|\: x = x_i)$, and weights $w_i = |2p_i - 1|$.  Let $\langle d_1, d_2, \ldots \rangle$ denote an infinite sequence of data records drawn i.i.d. from $D$, and let $D_N$ denote records $\langle d_1, d_2, \ldots, d_N \rangle$, with corresponding Bayes-optimal probabilities $P_N=\langle p_1, p_2, \ldots, p_N \rangle$ and weights $W_N=\langle w_1, w_2, \ldots, w_N \rangle$. Moreover, let $W$ be the distribution of weights for data records drawn i.i.d. from $D$, $w_i \sim W$ for all $i$, with probability density function (pdf) $f(w)$.  

Let $R_N(\epsilon)$ denote the Rashomon set of models for error tolerance $\epsilon$ defined over data records $\langle d_1, \ldots, d_N\rangle$. We represent each model in $R_N(\epsilon)$ by a length-$N$ binary \emph{flip vector} $\theta \in \{0,1\}^N$, where $\theta_i = 1$ if $f(x_i) \ne f_\text{opt}(x_i)$, $\theta_i = 0$ if $f(x_i) = f_\text{opt}(x_i)$, and the Bayes-optimal classification $f_\text{opt}(x_i) = \mathbf{1}\{p_i > 0.5\}$.  As shown in Section~\ref{sec:prelim}, a flip vector $\theta \in R_N(\epsilon)$ if and only if $\frac{\theta \cdot W_N}{N } \le \epsilon$.

\textbf{Key assumptions} underlying the theoretical results below are threefold: (1) the number of data records $N$ is large; (2) the distribution of weights $f(w)$ is continuous and positive on the interval [0,1]; and (3) $\epsilon$ is sufficiently small, less than half of the average weight.  We observe that these assumptions are reasonable for all three datasets considered: (1) $N$ is large enough (ranging from $N=1,000$ for German Credit to $N=184,308$ for Health) for the finite-sample results to be very close to their large-sample limits; (2) there is enough variability in the weights $w_i$ to assume that they are drawn from a continuous, positive distribution; and (3) average weights for all three datasets range from 0.50 (German Credit) to 0.74 (Health), while the $\epsilon$ values we consider for our Rashomon sets are at most 0.02. Nevertheless, the assumptions might be violated for very small datasets (insufficient $N$); low-dimensional datasets with discrete-valued predictor variables (insufficient variability in $w_i$); or datasets where the prediction is extremely uncertain, $p_i \approx 0.5$ and thus $w_i \approx 0$, for many data records (average weight too small for the range of $\epsilon$ considered).

\subsection{Individual flip probabilities}
\label{sec:flip_probs-individual}

In order to reason about the arbitrariness of individual predictions, we define the \emph{flip probability} $q_{N,i}$ for a given data record $d_i$, $i \in \{1, \ldots, N\}$, as the proportion of models in the Rashomon set $R_N(\epsilon)$ for which the model prediction $f(x_i)$ differs from the Bayes-optimal prediction $f_\text{opt}(x_i) = \mathbf{1}\{p_i > 0.5\}$, or equivalently, the proportion of flip vectors for which $\theta_i = 1$:
\[q_{N,i} = \frac{|\theta \in R_N(\epsilon): \theta_i = 1|}{|R_N(\epsilon)|}.\]
As $N\rightarrow\infty$ for a given weight distribution $W$ and error tolerance $\epsilon$, flip probabilities become pairwise independent (Appendix~\ref{appendix:proofs}, Lemma~\ref{lemma:independence}), and the flip probability $q_i = \lim_{N\rightarrow\infty} q_{N,i}$ depends only on the weight $w_i$.  Thus we define the \emph{asymptotic flip probability} function $q(w)$ as the flip probability $q_i$ corresponding to a data record $d_i$ with weight $w_i = w$.  We then prove the following theorem (Appendix~\ref{appendix:proofs}, Theorem~\ref{theorem:flip_probs}):

\begin{theorem}[Asymptotic flip probabilities]\label{theorem:flip_probs_mainpaper}
Given the preliminaries and assumptions above, as $N\rightarrow\infty$, the flip probability corresponding to a data record with weight $w_i = w$ converges to
\[
q(w) = \frac{1}{1 + \exp(C(\epsilon)\, w)},
\]
where $C(\epsilon) = g^{-1}(\epsilon)$ and $g(C) = \int_0^1 \frac{w f(w)}{1+\exp(Cw)} \, dw$.
\end{theorem}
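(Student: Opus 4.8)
The plan is to compute the single-record marginal of the uniform measure on $R_N(\epsilon)$ directly, by writing the flip probability as a ratio of subset-counting functions and then controlling that ratio with a large-deviations (exponential-tilting) estimate. Fix the record of interest to have weight $w_i = w$, and let the remaining $N-1$ weights $w_j$ be drawn i.i.d. from $W$. For a threshold $t$, let $T_{N-1}(t) = |\{\theta' \in \{0,1\}^{N-1} : \sum_{j \ne i}\theta'_j w_j \le t\}|$ count the flip vectors over the other coordinates whose weighted flip-cost does not exceed $t$. Splitting $R_N(\epsilon)$ according to whether $\theta_i = 0$ or $\theta_i = 1$ gives $|R_N(\epsilon)| = T_{N-1}(N\epsilon) + T_{N-1}(N\epsilon - w)$ and $|\{\theta \in R_N(\epsilon): \theta_i = 1\}| = T_{N-1}(N\epsilon - w)$, so that
\[
q_{N,i} = \frac{T_{N-1}(N\epsilon - w)}{T_{N-1}(N\epsilon) + T_{N-1}(N\epsilon - w)} = \frac{1}{1 + T_{N-1}(N\epsilon)/T_{N-1}(N\epsilon - w)}.
\]
Everything thus reduces to the asymptotics of the ratio $\rho_N = T_{N-1}(N\epsilon)/T_{N-1}(N\epsilon - w)$, whose two thresholds differ only by the $O(1)$ amount $w$.

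Next I would set up the large-deviations estimate. Writing $T_{N-1}(t) = 2^{N-1}\,\Pr_{P_0}(S_{N-1} \le t)$, where $S_{N-1} = \sum_{j\ne i}\theta'_j w_j$ under the uniform product measure $P_0$ (each $\theta'_j$ i.i.d. Bernoulli$(1/2)$), the key point is that assumption (3), $\epsilon < \tfrac12\,\mathbb{E}[w]$, places the threshold $N\epsilon$ strictly below the mean of $S_{N-1}$, so the event is a left-tail large deviation. By Cram\'er's theorem the exponential rate is governed by the log-moment-generating function $M(\lambda) = \mathbb{E}_w[\log\frac{1 + e^{\lambda w}}{2}]$ (obtained from $\frac1N \log \mathbb{E}_{P_0}[e^{\lambda S_{N-1}}]$ via the law of large numbers, using the continuity and positivity of $f$ from assumption (2)), together with its Legendre transform $I(x) = \sup_\lambda(\lambda x - M(\lambda))$, whose optimizing tilt $\lambda^*(x)$ satisfies $M'(\lambda^*) = x$ and $I'(x) = \lambda^*(x)$.

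I would then take the ratio. Since the two thresholds of $\rho_N$ are separated by the $O(1)$ amount $w$, the leading $\exp(N(\log 2 - I(\epsilon)))$ factors cancel and the ratio is controlled by the local slope of $\log T_{N-1}$ in $t$, which converges to $-I'(\epsilon) = -\lambda^*(\epsilon)$; hence $\rho_N \to \exp(-\lambda^*(\epsilon)\,w)$. Defining $C(\epsilon) = -\lambda^*(\epsilon)$ and substituting $\lambda^* = -C$ into the first-order condition $M'(\lambda^*) = \epsilon$ turns it into $\int_0^1 \frac{w}{1 + e^{Cw}}f(w)\,dw = \epsilon$, i.e. $g(C) = \epsilon$; since $g$ is strictly decreasing with $g(0) = \tfrac12\mathbb{E}[w] > \epsilon$, this has a unique positive root $C = g^{-1}(\epsilon)$. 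Plugging $\rho_N \to \exp(C(\epsilon)\, w)$ into the displayed expression for $q_{N,i}$ yields $q(w) = \frac{1}{1 + \exp(C(\epsilon)\,w)}$, as claimed.

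The hard part will be the ratio step — justifying that the quotient of the two counting functions converges to the exponential of the local slope. The crude rate estimate $T_{N-1}(t)\approx \exp(N(\log 2 - I(t/N)))$ only pins down the exponent to $o(N)$, which is far too coarse to resolve the $O(1)$ difference between thresholds $N\epsilon$ and $N\epsilon - w$; one must instead show that the subexponential (Gaussian-fluctuation) prefactors in numerator and denominator are asymptotically equal. This calls for a uniform local estimate — an Edgeworth-type expansion or local central limit theorem for the exponentially tilted measure $P_{\lambda^*}$ near its own mean — for which the non-degeneracy of the saddle (strict convexity of $M$, guaranteed by the positivity and continuity of $f$) is exactly what is needed. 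A secondary technical point is that the $w_j$ are themselves random, so all estimates are almost-sure (or in-probability) statements over the draw of the dataset, and the empirical log-moment-generating function must be replaced by its population limit $M(\lambda)$ via the law of large numbers with enough uniformity in $\lambda$ to carry the Legendre transform and its derivative through the limit.
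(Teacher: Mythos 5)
Your proposal is correct in outline and reaches the right answer, but it takes a genuinely different route from the paper. You share the paper's starting point --- the decomposition $q_{N,i} = T_{N-1}(N\epsilon - w)\,/\,(T_{N-1}(N\epsilon) + T_{N-1}(N\epsilon - w))$ is exactly the paper's Lemma~\ref{lemma:size} --- but from there the arguments diverge. The paper first establishes asymptotic independence of the coordinates of $\theta$ (Lemma~\ref{lemma:independence}), then derives the \emph{logistic form} of $q(w)$ via a weight-swapping argument: for triples with $w_i + w_j = w_k$, the configurations $(1,1,0)$ and $(0,0,1)$ have equal total cost, which forces the log-odds $h(w)=\log\frac{q(w)}{1-q(w)}$ to satisfy Cauchy's functional equation, and monotonicity then gives $h(w)=-Cw$ (Lemma~\ref{lemma:funcform}). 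The constant $C$ is pinned down only afterwards, through the separate result that the average error tolerance used converges to $\epsilon$ (Theorem~\ref{theorem:all_error} and Corollary~\ref{corr:C}). Your saddle-point computation delivers both the functional form and the identification $g(C)=\epsilon$ in one stroke, via the first-order condition $M'(\lambda^*)=\epsilon$ with $C=-\lambda^*$; your algebra here checks out, including the sign ($\epsilon < \tfrac12\mathbb{E}[w]=M'(0)$ forces $\lambda^*<0$, hence $C>0$). What each approach buys: yours bypasses the independence lemma and the Cauchy-equation step entirely and connects the result to standard Bahadur--Rao-type refinements of Cram\'er's theorem, at the cost of needing a local CLT for the tilted measure to resolve the $O(1)$ threshold shift against the $O(N)$ exponent --- a gap you correctly flag as the hard part. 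The paper's argument is more elementary in flavor but carries an analogous gap: its passage from differences of $\log B$ at arguments separated by $O(1/N)$ to the derivative $\frac{d\log B}{d\epsilon}$ implicitly assumes the finite-$N$ normalized log-sizes converge to a differentiable limit uniformly enough to interchange the limit with the difference quotient, which is the same ``subexponential prefactors must cancel'' issue in different clothing. Neither argument is fully rigorous on this point; yours is at least explicit about what would close it.
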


As a consequence of this theorem, for a Rashomon set $R_N(\epsilon)$ with $N$ large, we can obtain the flip probabilities for each individual, which we outline in Appendix~\ref{appendix:proofs}, Theorem~\ref{theorem:flip_probs}.
Computing these flip probabilities provides us with multiple pieces of valuable information about the Rashomon set.  First, we can use the flip probabilities to \emph{exactly} (in the large-sample limit) and \emph{efficiently} compute the \emph{average} over the entire Rashomon set of any metric (e.g., PPR, FPR, or TPR disparity) which can be decomposed as a linear function of the individual predictions, as shown in Appendix~\ref{appendix:flip_probs}.  This can help us better understand, without the need for computationally expensive random sampling, how much fairness we expect for a model drawn randomly from the Rashomon set, i.e., whether or not we will arrive at a reasonably fair model by optimizing solely for accuracy and not considering fairness.  Second, the flip probabilities $q_{N,i}$ are related to the size of the Rashomon set (Appendix~\ref{appendix:proofs}, Lemma~\ref{lemma:size}), and thus, as we show in Section~\ref{sec:rashomon_set_size}, the asymptotic size of the Rashomon set as $N\rightarrow\infty$ can be computed from the quantity $C(\epsilon)$ defined in Theorem~\ref{theorem:flip_probs_mainpaper}.

Most importantly, however, understanding flip probabilities helps us reason about arbitrariness of prediction in the Rashomon set as we can see who is likely to be more and less susceptible to potentially arbitrary changes in outcome. 
More precisely, the flip probability $q_{N,i}$ for a given individual is the probability that their prediction will differ from that of the Bayes-optimal model, across all models in the Rashomon set, and we note that $0 < q_{N,i} < \frac{1}{2}$ for all $i \in \{1,\ldots,N\}$. Therefore, individuals with $q_{N,i} \approx 0$ have consistent predictions across the Rashomon set, while individuals with $q_{N,i} \not\approx 0$ may receive either classification depending on which model happens to be drawn, i.e., their prediction is \emph{arbitrary}.
While we expect individuals with low-confidence Bayes-optimal probabilities $p_i \approx \frac{1}{2}$ to receive arbitrary predictions, and individuals with high-confidence probabilities $p_i \approx 0$ or $p_i \approx 1$ to receive consistent predictions, the question remains: how far from the decision boundary $p_i = \frac{1}{2}$ must an individual be for their predictions to be consistent?  We see in Figure~\ref{fig:main}(left) that the answer to this question differs across datasets and varies with $\epsilon$.

\ignore{Recall that in our idealized setting, we base the Rashomon set off of the Bayes-optimal model, which in addition to having underlying probabilities of the positive outcome $p_i$, also has a mapping of $X$ to $y$, i.e. a mapping of inputs to actual classification decisions, e.g. who is granted a loan and who is not.

This theorem allows us to calculate the probability that any person's classification outcome will change within the set of models in the Rashomon set. In other words, when we consider some individual $x_i$ within the data set, the expression above gives the probability that their prediction outcome would change in a random model within the Rashomon set. This is a powerful piece of information: understanding the individual flip probabilities of the Rashomon set can help us understand quantities about the overall distribution of the Rashomon set. For example, the average fairness of all models over the Rashomon set for various definitions of fairness (demographic parity, false positive rate parity, true positive rate parity). As we discuss further below in our takeaways, this can help us better understand what kind of fairness we can expect for a model drawn randomly from the Rashomon set-- i.e. what sort of fairness we can expect by optimizing solely for accuracy and not considering fairness.}

In addition, understanding individual flip probabilities within the Rashomon set can shed light on another source of inequity: certain demographic subgroups may have systematically higher flip probabilities than others, meaning that they are more likely to be exposed to arbitrary, inconsistent predictions, with potentially less reliable explanations 
for the outcomes they receive~\cite{black2022model}.\footnote{As a caveat, these flip probabilities will not necessarily translate to who is most likely to get flipped in any given search for a less discriminatory algorithm (LDA), as LDA searches will typically restrict the class of models prior to searching, and thus will not exactly match random sampling from within the Rashomon set. However, it does let us understand who is most likely to get flipped in the largest possible Rashomon set $R_N(\epsilon)$.}
We note that this is a separate form of group-level unfairness from the typical measures of statistical parity and error rate balance, since two groups may have equal positive prediction rates but very different flip probabilities (see Appendix~\ref{appendix:flip_probs} for an example).

\begin{figure}[t]
    \centering
    \begin{subfigure}[b]{0.32\textwidth}  
        \centering
        \includegraphics[width=\textwidth]{ 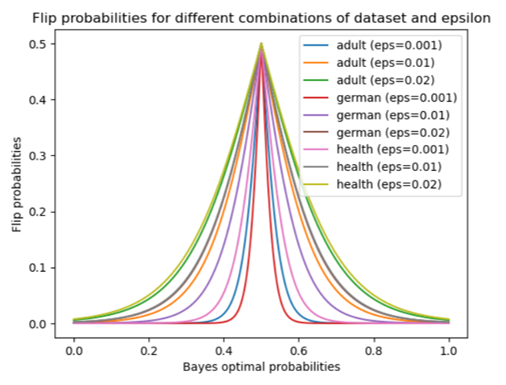}  
    \end{subfigure}
    \ignore{
    \begin{subfigure}[b]{0.32\textwidth}
        \centering
        \includegraphics[width=\textwidth]{ 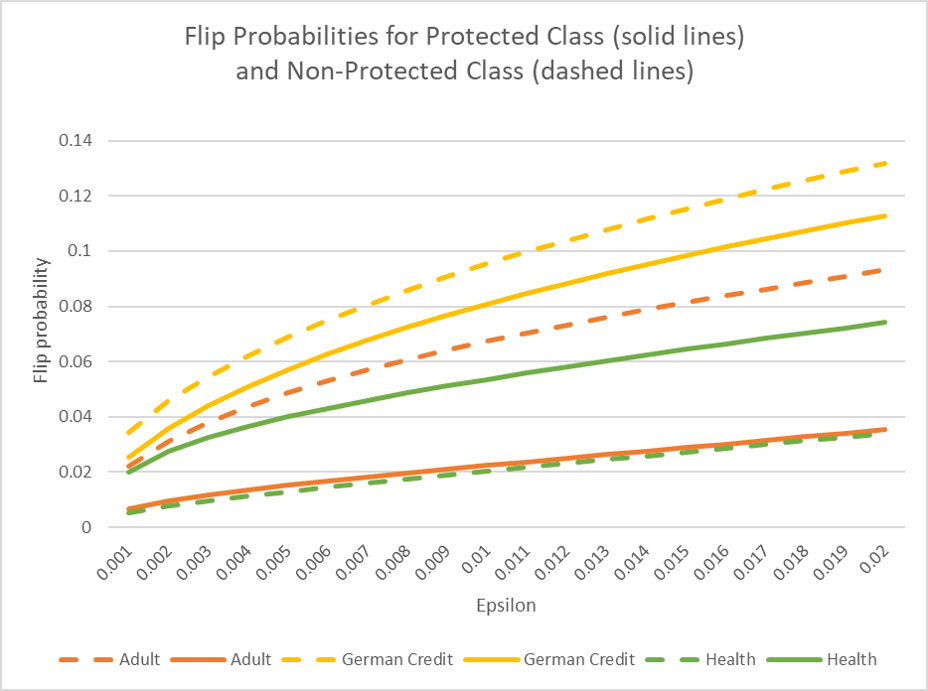}  
    \end{subfigure}
    }
        \begin{subfigure}[b]{0.32\textwidth}  
        \centering
        \includegraphics[width=\textwidth]{ 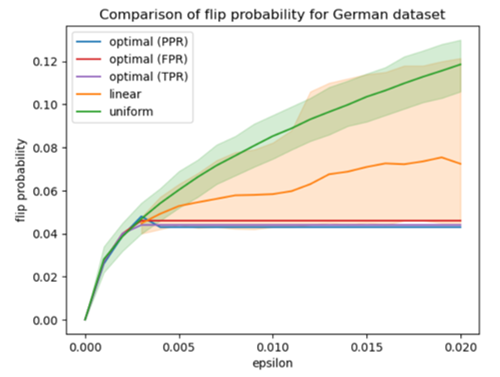}  
    \end{subfigure}    
        \begin{subfigure}[b]{0.32\textwidth}  
        \centering
        \includegraphics[width=\textwidth]{ 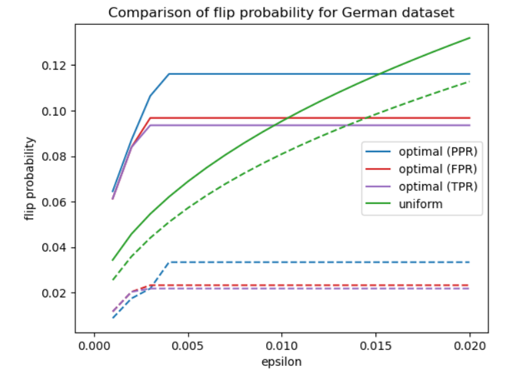}  
    \end{subfigure}
     \ignore{
    \begin{subfigure}[b]{0.32\textwidth}
        \centering
        \includegraphics[width=\textwidth]{ 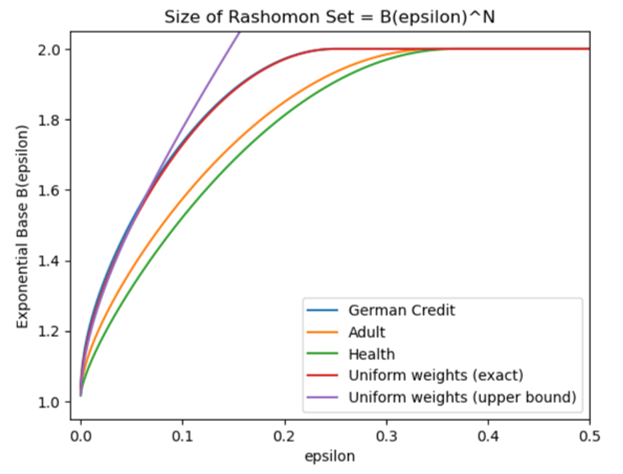}  
    \end{subfigure}
    }
    \caption{Left: Flip probability $q_{N,i}$ as a function of the Bayes-optimal probability $p_i$--- in other words, how likely is an individual $i$ to experience a change of prediction among models in the Rashomon set as a function of their true probability that $y_i = 1$? We show results for 
    the German Credit, Adult, and Health datasets for $\epsilon\in \{0.001, 0.01, 0.02\}$, and see that there is large variation in flip probability distribution both as a function of dataset and $\epsilon$.
   \ignore{Center: Disparities in flip probabilities between different demographic groups ($A_i=1$ vs. $A_i=0$) across the three datasets considered in this paper. Dashed and solid lines represent the historically disadvantaged and advantaged groups respectively, and color represents dataset. As we can see, disparities exist, though it is not always the historically disadvantaged group that experiences higher flip probabilities.}
Center: Overall (population average) flip probability as a function of error tolerance $\epsilon$ for the German Credit dataset, for uniformly sampled models, linear models, and optimally fair models from the Rashomon set. For results for Adult and Health datasets, see Appendix~\ref{appendix:flip_probs}, Figure~\ref{fig:appendix_overall_flip_probs}.
Right: Group average flip probability, comparison between protected group (solid lines) and non-protected group (dashed lines), for the German Credit dataset, as a function of the error tolerance $\epsilon$. Comparison of methods for optimizing PPR, FPR, and TPR (Section~\ref{sec:optimizing-fairness}) 
and uniform random sampling (Section~\ref{sec:sampling}), over the Rashomon set $R_N(\epsilon)$.  For results for Adult and Health datasets, see Appendix~\ref{appendix:flip_probs}, Figure~\ref{fig:appendix_stratified_flip_probs}.   
    \ignore{
    Right: Rashomon set size as a function of $\epsilon$  for Adult, German Credit, and Health datasets, and for uniformly distributed weights. Note that the German Credit and uniform weights curves coincide. The size of the Rashomon set is $|R_N(\epsilon)| = B(\epsilon)^N$, where the exponential base $B$ (plotted here) ranges between 1 (for $\epsilon = 0$) and 2 (for large $\epsilon$). We also separately plot $|R_N(\epsilon)|$ for each dataset in Appendix~\ref{appendix:rashomon_set_size-experiments}, Figure~\ref{fig:appendix_Rashomon_set_size}.}
    }
    \label{fig:main}
    \vspace{-1em}
\end{figure}

\subsection{Experiments on real data}\label{sec:flip_probs-experiments}

As in previous sections, we perform our experiments on the German Credit, Adult, and Health datasets. We first calculate the flip probabilities for all individuals in each dataset for varying values of $\epsilon$, and use these flip probabilities to perform four experiments. 

First, we graph the overall (population average) flip probability for all three datasets for models sampled uniformly at random from the Rashomon set $R_N(\epsilon)$ as a function of $\epsilon$, compared to sampling linear models from the Rashomon set (Section~\ref{sec:linear}) and the models that optimize PPR, FPR, and TPR over the Rashomon set (Section~\ref{sec:optimizing-fairness}).  These graphs are shown in Appendix~\ref{appendix:flip_probs}, Figure~\ref{fig:appendix_overall_flip_probs}.

Second, we use the flip probabilities to calculate the average fairness of the Rashomon set as a function of $\epsilon$ for all three datasets. We display the output in Appendix~\ref{appendix:flip_probs}, Figure~\ref{fig:flip_vs_sampling}. These differ from the results in Section~\ref{sec:intentional-experiments} since these are the average fairness of models across the \emph{entire} Rashomon set, not only from a sample of models, but we note the close correspondence between the sampled and entire-Rashomon-set results.

Third, in Figure~\ref{fig:main}(left) we turn to displaying empirical results about arbitrariness within the Rashomon set: we show how the chance of an individual experiencing a flip in their predictions in the Rashomon set (as a function of how close their Bayes-optimal probability $p_i$ is to the threshold of 0.5) differs across different datasets and values of error tolerance $\epsilon$. To do this, we compute the value of $C(\epsilon)$ for each dataset and $\epsilon$, and then compute the flip probability $q(w) = \frac{1}{1+\exp(C|2p-1|)}$ for a fine grid of $p$ values. 

Fourth, we show the disparities in average flip probability in the three datasets between protected and non-protected groups as a function of $\epsilon$, suggesting that some groups have systematically higher exposure than others to arbitrary, inconsistent decisions.  We compare uniform sampling to the models that optimize PPR, FPR, and TPR over the Rashomon set (as described in Section~\ref{sec:optimizing-fairness}).  Graphs for the German, Adult, and Health datasets are shown in Figure~\ref{fig:main}(right) and Appendix~\ref{appendix:flip_probs}, Figure~\ref{fig:appendix_stratified_flip_probs}. 


 
\subsection{Takeaways for policy and practice}

\begin{itemize}[left=7pt]
\ignore{
    \item \textbf{The average fairness of \emph{all} models on the Rashomon set is nowhere near as fair as the fairest model.} This further supports the finding that intentionally searching for fair models is important, because the fair models are not the norm within the set. This also provides another vantage point on recent work that shows that fair models in the Rashomon set can be hard to find~\cite{ben}.
}
\item\textbf{
Even a small error tolerance leads to a lot of individual flips.}
We observe that, for uniform sampling, the overall flip probability tends to be substantially higher than the error tolerance $\epsilon$. In Figure~\ref{fig:main}(center), 
for $\epsilon = .02$, we see that 12\%, of predictions are flipped on average for the German dataset. In Appendix~\ref{appendix:flip_probs}, Figure~\ref{fig:appendix_overall_flip_probs}, we see that this trend continues, with  Adult and Health having 7\% and 5\% respectively. 
The overall flip probability for models that optimize fairness tends to be higher than the overall flip probability for uniform sampling, for lower $\epsilon$ values where the optimization method is not able to remove all of the disparity.  Once the disparity is removed, the flip probability for optimal models levels off, while the flip probability for uniform sampling continues to increase with $\epsilon$.
    
    \item \textbf{Increasing error tolerance $\epsilon$ not only increases the \emph{number} of flips that occur, but \emph{which individuals} are likely to get flipped: more ``certain'' cases get flipped with higher $\epsilon$.} As $\epsilon$ increases, individuals with true probability $p_i$ further and further away from a 50/50 coin toss, i.e., more ``certain'' cases of a positive or negative outcome get flipped. For example, as we see in Figure~\ref{fig:main}(left), in the German Credit dataset at low $\epsilon$ (red line) everyone who has a predicted Bayes probability below 0.4 or above 0.6 has near-zero chance of experiencing a flip in prediction, but at higher $\epsilon$ (brown line), we see that individuals with $p_i$ between 0.1 and 0.9 have a non-negligible chance of getting flipped.  Some prior work has suggested the normative view that individuals with higher certainty in their outcome should be flipped less often~\cite{black2021leave}---to the extent that this is true in certain contexts, it may be important to balance the flip probability over a threshold of certainty with the need to reduce outcome-based unfairness. We 
    also see \textbf{large differences in flip probabilities between datasets}: for the same $\epsilon$ value, an individual with a given true probability $p_i$ is much less likely to be flipped for German Credit as compared to Adult or Health.

    \item \textbf{Asymmetries in the underlying model---e.g. uneven distributions of predicted probabilities across groups---lead to disparities in flip probabilities across demographic groups.} As we can see from Figure~\ref{fig:main}(right) and Appendix~\ref{appendix:flip_probs}, Figure~\ref{fig:appendix_stratified_flip_probs}, all three datasets have disparities in their average flip probabilities, though it is not always the disadvantaged group with a higher flip probability. Since the flip probability for uniform random sampling is a function of an individual's weight within the data--i.e., their distance from the threshold probability of $0.5$--the individuals who are flipped more often 
    are those for whom
    the Bayes-optimal model is less certain of its prediction. In the Adult dataset, it is 
    the advantaged group that has a higher density of true probabilities $p_i$ around 0.5, meaning that they are more likely to get flipped. In the German Credit and Health datasets, the disadvantaged group has a higher density of $p_i \approx 0.5$, and thus a higher flip probability. 

    \item  \textbf{We observe across datasets that optimizing for fairness and uniform sampling lead to large differences in who is flipped}: while both optimization and uniform sampling approaches tend to flip individuals who are near the decision boundary and thus have lower weights $w_i$, the optimization approaches also tend to flip individuals who are from the group that is less represented in the dataset and thus have higher values $v_i$, because flipping one person's prediction has a larger impact on the group average for the group that is smaller in size.  In our datasets, the disadvantaged group (women for German and Adult, individuals over the age of 60 for Health) is also less represented. Thus, if the disadvantaged group is already flipped more than the advantaged group on average, because they tend to be closer to the decision boundary (as is the case for German and Health), optimizing for fairness will further exacerbate this disparity in who is receiving arbitrary and inconsistent predictions.  For example, 
for German Credit, at $\epsilon=0.004$ (the point at which PPR disparity is eliminated), the flip proportion for uniform random sampling is balanced (6.2\% for women vs. 5.1\% for men), but the model that optimizes PPR demonstrates a substantial disparity in who is flipped (11.6\% for women vs. 3.3\% for men).  On the other hand, if the disadvantaged group is flipped substantially less than the advantaged group on average, because they tend to be farther from the decision boundary (as is the case for Adult), optimizing for fairness will instead mitigate this disparity. 
 
\end{itemize}

\section{Rashomon Set Size and Error Tolerance}\label{sec:rashomon_set_size_and_error}
\ignore{
\begin{figure}[t]
    \centering
    \begin{subfigure}[b]{0.32\textwidth}  
        \centering
        \includegraphics[width=\textwidth]{  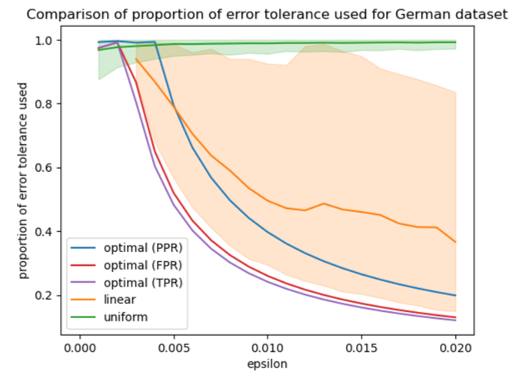}  
    \end{subfigure}
    \hfill
    \begin{subfigure}[b]{0.32\textwidth}
        \centering
        \includegraphics[width=\textwidth]{  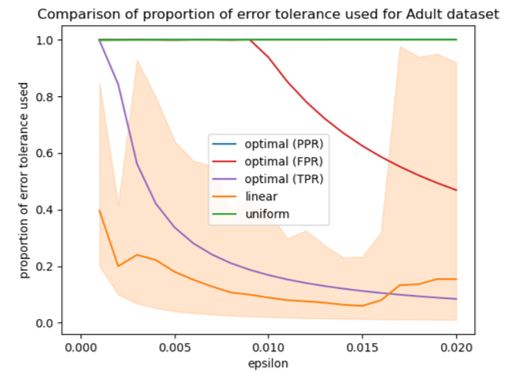}  
    \end{subfigure}
    \begin{subfigure}[b]{0.32\textwidth}
        \centering
        \includegraphics[width=\textwidth]{  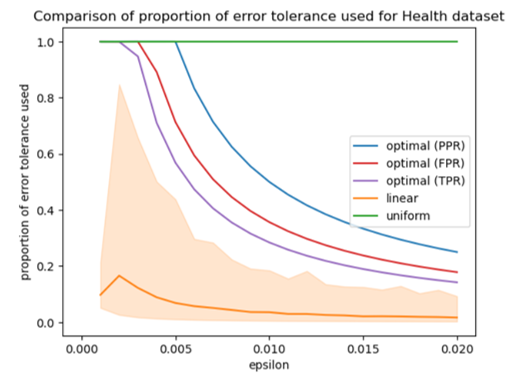}  
    \end{subfigure}
    \caption{Proportion of error tolerance used, $\frac{\theta \cdot W_N}{N\epsilon}$, for the German, Adult, and Health datasets, as a function of the error tolerance $\epsilon$. Comparison of methods for optimizing PPR (Section~\ref{sec:optimizing-PPR}), optimizing FPR (Section~\ref{sec:optimizing-TPR and FPR}), optimizing TPR (Section~\ref{sec:optimizing-TPR and FPR}), uniform random sampling (Section~\ref{sec:sampling}), and sampling linear models (Section~\ref{sec:linear}) over the Rashomon set $R_N(\epsilon)$.}
    \label{fig:error_proportion}
    \vspace{-1em}
\end{figure}
}

\begin{figure}[t]
    \centering
\includegraphics[width=\textwidth]{  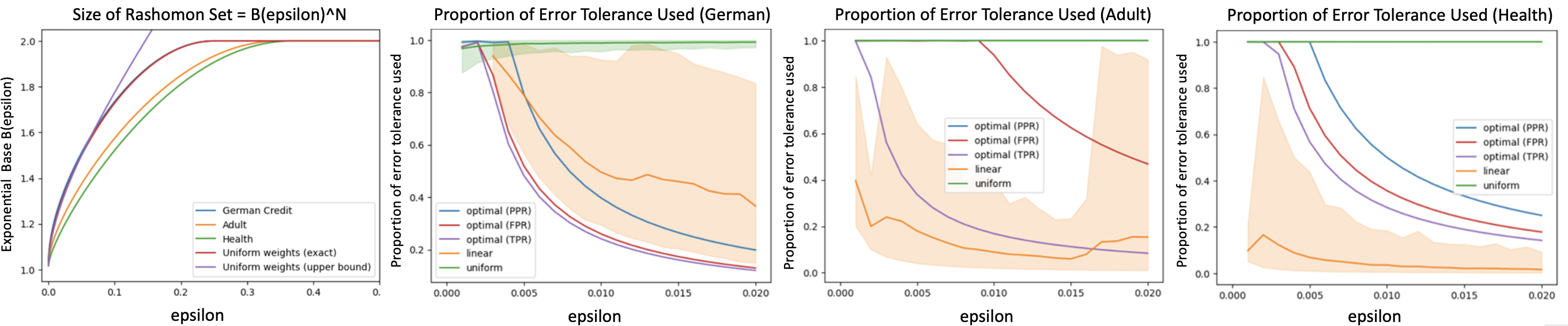}
    \caption{Left: Rashomon set size as a function of $\epsilon$  for Adult, German Credit, and Health datasets, and for uniformly distributed weights. Note that the German Credit and uniform weights curves coincide. The size of the Rashomon set is $|R_N(\epsilon)| = B(\epsilon)^N$, where the exponential base $B$ (plotted here) ranges between 1 (for $\epsilon = 0$) and 2 (for large $\epsilon$). We also separately plot $|R_N(\epsilon)|$ for each dataset in Appendix~\ref{appendix:rashomon_set_size-experiments}, Figure~\ref{fig:appendix_Rashomon_set_size}.
    Right three figures: Proportion of error tolerance used, $\frac{\theta \cdot W_N}{N\epsilon}$, for the German, Adult, and Health datasets, as a function of the error tolerance $\epsilon$. Comparison of methods for optimizing PPR (Section~\ref{sec:optimizing-PPR}), optimizing FPR (Section~\ref{sec:optimizing-TPR and FPR}), optimizing TPR (Section~\ref{sec:optimizing-TPR and FPR}), uniform random sampling (Section~\ref{sec:sampling}), and sampling linear models (Section~\ref{sec:linear}) over the Rashomon set $R_N(\epsilon)$.}
    \label{fig:error_proportion}
\end{figure}

In this section, we present results on the size of the Rashomon set and the distribution of how much of the error tolerance $\epsilon$ is used in the models of the Rashomon set. From these results, we suggest another set of takeaways--- that when a company sets out to do a search for a less discriminatory algorithm (LDA), they should choose the highest error tolerance possible. However, especially when relying on repeated random sampling as an LDA search method, they should make sure they are comfortable with having a model that uses all of the error tolerance provided.

\subsection{Rashomon set size}\label{sec:rashomon_set_size}

We derive an analytical expression for the asymptotic size of the Rashomon set, $|R_N(\epsilon)|$, as a function of the error tolerance $\epsilon$, as the number of data records $N$ that the Rashomon set is defined over goes to $\infty$.  We note that $|R_N(\epsilon)|$ also depends on the distribution of weights $f(w)$ and thus is dataset-dependent.  We provide the theorem here, with proof in Appendix~\ref{appendix:proofs}, Theorem~\ref{theorem:asymptotic_size}.

\begin{theorem}[Asymptotic size of Rashomon set]\label{theorem:asymptotic_size_mainpaper}
Given the preliminaries and assumptions in Section~\ref{sec:flip_probs-assumptions} above,
  let $R_N(\epsilon)$ denote the Rashomon set of models for error tolerance $\epsilon$ defined over data records $\langle d_1, \ldots, d_N\rangle$.  Then 
\[
\lim_{N\rightarrow\infty} \frac{\log |R_N(\epsilon)|}{N} = \log B(\epsilon),
\]
where $B(\epsilon) = \exp\left(\int_0^\epsilon C(x) dx\right)$, $C(\epsilon) = g^{-1}(\epsilon)$, and $g(C) = \int_0^1 \frac{w f(w)}{1+\exp(Cw)} \, dw$.
\end{theorem}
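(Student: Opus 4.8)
The plan is to treat $|R_N(\epsilon)|$ as the number of binary vectors $\theta\in\{0,1\}^N$ whose weighted sum $\theta\cdot W_N$ falls below the threshold $N\epsilon$, and to pin down its exponential growth rate by a Cramér-style exponential-tilting (large-deviations) argument. The central object is the partition function $Z_N(C)=\sum_{\theta\in\{0,1\}^N}\exp(-C\,\theta\cdot W_N)=\prod_{i=1}^N(1+e^{-Cw_i})$ for a tilting parameter $C>0$. Since $\log Z_N(C)=\sum_{i=1}^N\log(1+e^{-Cw_i})$ is a sum of i.i.d.\ bounded terms, the strong law of large numbers gives $\tfrac1N\log Z_N(C)\to\phi(C):=\mathbb{E}_{w\sim W}[\log(1+e^{-Cw})]$ almost surely over the draw of weights, and differentiating under the integral sign yields the key identity $\phi'(C)=-g(C)$, with $g$ as defined in the theorem. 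I would then prove matching upper and lower bounds on $\tfrac1N\log|R_N(\epsilon)|$, both of which equal $H(\epsilon):=C(\epsilon)\epsilon+\phi(C(\epsilon))$.

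For the upper bound I would use a Chernoff/Markov step: for every $C>0$, $|R_N(\epsilon)|=\sum_\theta \mathbf{1}\{\theta\cdot W_N\le N\epsilon\}\le e^{CN\epsilon}Z_N(C)$, so $\limsup_N \tfrac1N\log|R_N(\epsilon)|\le \inf_{C>0}\big(C\epsilon+\phi(C)\big)$. Because $\phi'(C)=-g(C)$, the infimum is attained where $g(C)=\epsilon$, i.e.\ at $C=C(\epsilon)=g^{-1}(\epsilon)$. Here assumption (3), $\epsilon<\tfrac12\mathbb{E}[w]=g(0)$, combined with the strict monotonicity of $g$ (guaranteed by positivity of $f$ in assumption (2)), is exactly what ensures $C(\epsilon)$ exists and is finite and positive, so the infimum is interior and equals $H(\epsilon)$.

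For the matching lower bound I would change measure to the product Gibbs distribution $\mu_{C(\epsilon)}$ under which the coordinates are independent with $\Pr(\theta_i=1)=\tfrac{1}{1+e^{C(\epsilon)w_i}}=q(w_i)$ --- precisely the asymptotic flip probabilities of Theorem~\ref{theorem:flip_probs_mainpaper}. Under $\mu_{C(\epsilon)}$ the mean of $\theta\cdot W_N$ is $N g(C(\epsilon))=N\epsilon$ and, by positivity of $f$, its variance is $\Theta(N)$, so the central limit theorem makes the band $\{N(\epsilon-\delta)\le \theta\cdot W_N\le N\epsilon\}$ carry $\mu_{C(\epsilon)}$-probability bounded below (tending to $\tfrac12$). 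Restricting $Z_N(C(\epsilon))$ to this band and bounding $e^{-C(\epsilon)(\theta\cdot W_N)}\le e^{-C(\epsilon)N(\epsilon-\delta)}$ there gives $|R_N(\epsilon)|\ge e^{C(\epsilon)N(\epsilon-\delta)}Z_N(C(\epsilon))\,\mu_{C(\epsilon)}(\text{band})$, hence $\liminf_N\tfrac1N\log|R_N(\epsilon)|\ge C(\epsilon)(\epsilon-\delta)+\phi(C(\epsilon))$; letting $\delta\to0$ recovers $H(\epsilon)$. I expect this concentration step to be the main obstacle: the entire lower bound hinges on showing the tilted mass just below $N\epsilon$ is not exponentially small, which is where the non-degeneracy of the weights (continuity and positivity of $f$) and a CLT/variance estimate are indispensable.

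Finally I would identify $H(\epsilon)$ with the claimed $\log B(\epsilon)=\int_0^\epsilon C(x)\,dx$. Differentiating $H(\epsilon)=C(\epsilon)\epsilon+\phi(C(\epsilon))$ and using $\phi'(C)=-g(C)$ together with $g(C(\epsilon))=\epsilon$, the two terms involving $C'(\epsilon)$ cancel and leave $H'(\epsilon)=C(\epsilon)$. The boundary value is $\lim_{\epsilon\to0^+}H(\epsilon)=\lim_{C\to\infty}\phi(C)=0$, since $\log(1+e^{-Cw})\to0$ pointwise for $w>0$ and $C(\epsilon)\to\infty$ as $\epsilon\to0^+$. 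Therefore $H(\epsilon)=\int_0^\epsilon C(x)\,dx=\log B(\epsilon)$, which together with the two bounds establishes $\lim_N \tfrac1N\log|R_N(\epsilon)|=\log B(\epsilon)$ and completes the proof.
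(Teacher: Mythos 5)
Your proof is correct, but it takes a genuinely different route from the paper's. The paper builds the result out of a chain of combinatorial lemmas: it first expresses the flip probability $q_{N,i}$ as a ratio of sizes of Rashomon sets over $N-1$ records (Lemma~\ref{lemma:size}), derives the logistic form $q(w)=1/(1+e^{Cw})$ from a symmetry argument plus Cauchy's functional equation (Lemma~\ref{lemma:funcform}), combines these to obtain the differential relation $\lim_N \frac{1}{N}\frac{d}{d\epsilon}\log|R_N(\epsilon)| = C(\epsilon)$ and integrates it (Lemma~\ref{lemma:asymptotic_size}), and only then pins down $C(\epsilon)=g^{-1}(\epsilon)$ via the separate theorem that the full error tolerance is used asymptotically (Theorem~\ref{theorem:all_error} and Corollary~\ref{corr:C}). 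You instead run a self-contained Cram\'er/exponential-tilting argument: the Chernoff bound gives $\limsup \le \inf_C(C\epsilon+\phi(C))$, the tilted product measure (whose marginals are exactly the paper's flip probabilities, recovered here as a byproduct rather than a prerequisite) plus a CLT band estimate gives the matching $\liminf$, and Legendre duality identifies $H(\epsilon)=C(\epsilon)\epsilon+\phi(C(\epsilon))$ with $\int_0^\epsilon C(x)\,dx$. What your approach buys: matching upper and lower bounds that actually establish existence of the limit (which the paper's derivative-and-integrate manipulations implicitly assume), a single argument that delivers $C(\epsilon)=g^{-1}(\epsilon)$ and the rate function together, and a closed-form expression $\log B(\epsilon)=C(\epsilon)\epsilon+\mathbb{E}[\log(1+e^{-C(\epsilon)w})]$ that avoids numerically integrating $C(x)$. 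What the paper's approach buys is that the intermediate objects (flip probabilities, their independence, the use-all-the-error theorem) are themselves headline results it needs elsewhere, so the lemma chain does double duty. The one step you should make sure to nail down is the band-probability lower bound: you only need $\frac{1}{N}\log\mu_{C(\epsilon)}(\text{band})\to 0$, which is weaker than the CLT statement you invoke and survives the $O(\sqrt{N})$ fluctuation of the empirical mean $\frac{1}{N}\sum_i w_i q(w_i)$ around $\epsilon$, so this is not a gap.
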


In other words, for large $N$, the size of the Rashomon 
set $|R_N(\epsilon)|$ converges (in the sense above) to 
$B(\epsilon)^N$.  Thus the size of the Rashomon set grows exponentially in $N$, the number of elements in the dataset, but the base of the exponential function $B$ is an increasing function of $\epsilon$.  For $\epsilon=0$ and $f(w)$ continuous, $|R_N(\epsilon)|=1$ regardless of $N$, so $B=1$.  For sufficiently large $\epsilon$, all $2^N$ flip vectors are in the Rashomon set, so $B=2$.  But the rate at which $B$ increases from 1 to 2 with $\epsilon$ will vary between datasets, depending on the distribution of weights $f(w)$, as we show in Figure~\ref{fig:error_proportion}(left). We give details on how to calculate $B(\epsilon)$, and therefore the size of the Rashomon set $B(\epsilon)^N$, in Appendix~\ref{appendix:proofs}, Theorem~\ref{theorem:asymptotic_size}. We also derive an exact value and an upper bound for $B(\epsilon)$ when the distribution of weights within the data records is uniform (Appendix~\ref{appendix:proofs}, Corollary~\ref{corr:asymptotic_size_uniform}).

As we discuss in our takeaways, although a company has no control over $N$, it does have control over $\epsilon$. As $\epsilon$ determines the base of the exponent $B$, this means that the size of the Rashomon set $|R_N(\epsilon)| = B(\epsilon)^N$ grows extremely quickly in $\epsilon$ as well.

\subsection{Usage of error within the Rashomon set}

We now show that as the number of data records $N$ over which the Rashomon set $R_N(\epsilon)$ is defined goes to infinity (i.e., as the dataset grows large), as long as the error tolerance $\epsilon$ is sufficiently small (less than half of the average weight $w_i$), the models in the Rashomon set will use almost all of the error tolerance. That is, the average accuracy of a model in the Rashomon set will converge to the accuracy of the Bayes-optimal model minus $\epsilon$. 

Let $\overline{acc}(R_N(\epsilon))$ denote the average accuracy of models in $R_N(\epsilon)$, and let $acc_N(\theta_0)$ denote the accuracy of the Bayes-optimal classifier $f(x_i) = \mathbf{1}\{p_i > 0.5\}$ for data records $\langle d_1,\ldots,d_N\rangle$. The \emph{average error tolerance used} is the difference 
$acc_N(\theta_0) - \overline{acc}(R_N(\epsilon))$, and must be less than or equal to $\epsilon$. \ignore{We show in Appendix~\ref{appendix:proofs}, Definition~\ref{def:avg_acc}, that the error tolerance used can be written in terms of the flip probabilities as $\frac{1}{N} \sum_{i=1\ldots N} w_i q_{N,i}$, for finite samples.  We can also consider the \emph{asymptotic average error tolerance used}, $\lim_{N\rightarrow\infty} (acc_N(\theta_0) - \overline{acc}(R_N(\epsilon)))$, and we show in Appendix~\ref{appendix:proofs}, Definition~\ref{def:avg_acc}, that it can be written in terms of the asymptotic flip probabilities $q(w)$ as 
$\int_0^1 w \, f(w) \, q(w) \, dw$.}  We now formally state the main result below, with proof in Appendix~\ref{appendix:proofs} (Theorem~\ref{theorem:all_error}):

\begin{theorem}[Asymptotic use of the entire error tolerance]\label{theorem:all_error_mainpaper}

Given the preliminaries and assumptions in Section~\ref{sec:flip_probs-assumptions} and the definitions above, let $R_N(\epsilon)$ denote the Rashomon set of models for error tolerance $\epsilon$ defined over data records $\langle d_1, \ldots, d_N\rangle$.  

Then as $N\rightarrow\infty$, the average error tolerance used by models in the Rashomon set converges to $\epsilon$:
\[
\lim_{N\rightarrow\infty} (acc_N(\theta_0) - \overline{acc}(R_N(\epsilon))) = \epsilon.
\]
\end{theorem}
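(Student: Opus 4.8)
The plan is to reduce the statement to a fact about individual flip probabilities and then invoke the asymptotic flip-probability theorem. First I would use the identity $acc(\theta) = acc(\theta_0) - \frac{\theta \cdot W_N}{N}$ established in Section~\ref{sec:prelim} to express the average error tolerance used as the uniform average over the Rashomon set:
\[
acc_N(\theta_0) - \overline{acc}(R_N(\epsilon)) = \frac{1}{|R_N(\epsilon)|}\sum_{\theta \in R_N(\epsilon)} \frac{\theta \cdot W_N}{N}.
\]
Swapping the order of summation over models $\theta$ and over coordinates $i$, and recalling the definition $q_{N,i} = \frac{|\{\theta \in R_N(\epsilon): \theta_i = 1\}|}{|R_N(\epsilon)|}$, this average becomes $\frac{1}{N}\sum_{i=1}^N w_i\, q_{N,i}$. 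So the quantity of interest is exactly the weighted average of the individual flip probabilities, and I note immediately that it is at most $\epsilon$ for every finite $N$, since each $\theta \in R_N(\epsilon)$ satisfies the budget constraint $\frac{\theta \cdot W_N}{N} \le \epsilon$.

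Next I would pass to the large-sample limit. By Theorem~\ref{theorem:flip_probs_mainpaper}, as $N\to\infty$ the finite-sample flip probability $q_{N,i}$ converges to $q(w_i) = \frac{1}{1+\exp(C(\epsilon)\,w_i)}$, and the pairwise-independence and concentration machinery used to prove that theorem (Lemma~\ref{lemma:independence}) should give control uniform enough to replace $q_{N,i}$ by $q(w_i)$ inside the average with vanishing error. Applying the strong law of large numbers to the i.i.d. terms $w_i\, q(w_i)$ (each a bounded function of the i.i.d. weights $w_i \sim W$) then yields
\[
\frac{1}{N}\sum_{i=1}^N w_i\, q_{N,i} \;\longrightarrow\; \int_0^1 w\, f(w)\, q(w)\, dw.
\]

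Finally I would identify the limiting integral. Substituting $q(w) = \frac{1}{1+\exp(C(\epsilon)w)}$ gives $\int_0^1 \frac{w f(w)}{1+\exp(C(\epsilon)w)}\,dw = g(C(\epsilon))$, and since $C(\epsilon) = g^{-1}(\epsilon)$ by the definition in Theorem~\ref{theorem:flip_probs_mainpaper}, this equals precisely $\epsilon$, completing the argument. The main obstacle is the interchange in the second step: justifying that $\frac{1}{N}\sum_i w_i q_{N,i}$ converges to $\int_0^1 w f(w) q(w)\,dw$ rather than merely establishing $q_{N,i}\to q(w_i)$ pointwise, since the index $i$ ranges over all $N$ coordinates as $N$ itself grows (a diagonal limit). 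I expect this to rest on a uniform bound on $|q_{N,i}-q(w_i)|$ derived from concentration of $\theta \cdot W_N$ about its typical value $N\,g(C(\epsilon))$ over the Rashomon set. Because the upper bound $\frac{1}{N}\sum_i w_i q_{N,i}\le \epsilon$ is automatic, the genuine content is the matching lower bound---namely, that models in $R_N(\epsilon)$ asymptotically exhaust the entire error tolerance rather than leaving a constant fraction of it unused.
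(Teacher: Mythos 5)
Your reduction to $\frac{1}{N}\sum_i w_i q_{N,i}$ and the observation that the upper bound $\le \epsilon$ is automatic both match the paper exactly, and you correctly identify that the entire content lies in the matching lower bound. But your route to that lower bound is circular. You close the argument by writing $\int_0^1 w f(w) q(w)\,dw = g(C(\epsilon)) = \epsilon$ ``since $C(\epsilon) = g^{-1}(\epsilon)$ by the definition in Theorem~\ref{theorem:flip_probs_mainpaper}.'' In the paper's logical development, however, the identity $C(\epsilon)=g^{-1}(\epsilon)$ is not a definition: Lemma~\ref{lemma:funcform} establishes only the functional form $q(w) = \frac{1}{1+\exp(Cw)}$ with $C$ an undetermined constant, and the value of $C$ is pinned down in Corollary~\ref{corr:C}, whose proof begins ``From Theorem~\ref{theorem:all_error}, we have $\int_0^1 w\,q(w)\,f(w)\,dw = \epsilon$.'' In other words, $g(C(\epsilon))=\epsilon$ is exactly the statement being proved here, restated through Lemma~\ref{lemma:funcform}; invoking it as an input assumes the conclusion.

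The paper's actual argument for the lower bound is a concentration-of-mass argument on Rashomon set sizes. Using the assumption that $\epsilon$ is less than half the average weight, one first shows $C(\epsilon)>0$ (otherwise $q(w)\ge\frac{1}{2}$ would force average error exceeding $\epsilon$, contradicting the automatic upper bound). Then, for any $\delta>0$, Lemma~\ref{lemma:asymptotic_size} gives
\[
\lim_{N\rightarrow\infty}\frac{1}{N}\log\frac{|R_N(\epsilon)|}{|R_N(\epsilon-\delta)|} = \int_{\epsilon-\delta}^{\epsilon} C(x)\,dx > 0,
\]
so $|R_N(\epsilon)|/|R_N(\epsilon-\delta)|$ diverges and the fraction of models using less than $\epsilon-\delta$ of the budget becomes negligible. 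Hence the asymptotic average error used is at least $\epsilon-\delta$ for every $\delta>0$, which combined with the upper bound yields $\epsilon$. This is the missing idea in your proposal; your concluding remark correctly anticipates that the lower bound should come from ``concentration of $\theta\cdot W_N$ about its typical value,'' but you never supply an argument for that concentration that does not already presuppose $g(C(\epsilon))=\epsilon$. (Your concern about the diagonal limit $\frac{1}{N}\sum_i w_i q_{N,i}\rightarrow\int_0^1 w f(w) q(w)\,dw$ is fair, but the paper passes over this interchange just as informally, so it is not the decisive gap.)
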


This result implies that, for large $N$, there is a clear tradeoff between having a larger space of models to search over (since the size of the Rashomon set grows very rapidly with increasing $\epsilon$) and the accuracy of the models one might find with this search, since the vast majority of models in the Rashomon set have accuracy very close to the Bayes-optimal accuracy minus $\epsilon$.
While we are not typically interested in the Rashomon set for very large values of $\epsilon$ where the assumption that $\epsilon$ is less than half of the average weight would not hold, we note that in such cases the entire error tolerance would not be used. Instead, as $N$ becomes large, all flip probabilities would converge to 0.5, all or almost all of the $2^N$ possible flip vectors would be in the Rashomon set, and the average amount of error tolerance used would converge to half the average weight, which is less than $\epsilon$.

\subsection{Experiments on real data}

\subsubsection{Rashomon set size experiments}
\label{sec:rashomon_set_size-experiments}

Given that the size of the Rashomon set $|R_N(\epsilon)|$ can be written as $B(\epsilon)^N$, where the exponential base $B$ increases from 1 to 2 for increasing $\epsilon$, we plot the values of $B$ as a function of $\epsilon$ for the German Credit, Adult, and Health datasets in Figure~\ref{fig:error_proportion}(left).
As noted above, we also derived both the exact value and the upper bound of $B$ for uniformly distributed weights (Appendix~\ref{appendix:proofs}, Corollary~\ref{corr:asymptotic_size_uniform}), and we plot these in Figure~\ref{fig:error_proportion}(left) for comparison. For small $\epsilon$, the upper bound for uniformly distributed weights, $B(\epsilon) = \exp(\pi \sqrt{\epsilon/3})$, coincides closely with the exact values. \ignore{For the maximum $\epsilon$ value we consider, $\epsilon = 0.2$, we have $B=1.32$ for German Credit (and uniform weights), $B=1.22$ for Adult, and $B=1.17$ for Health.} We also plot the Rashomon set size $|R_N(\epsilon)|$ separately for the German Credit, Adult, and Health datasets in Appendix~\ref{appendix:rashomon_set_size-experiments}, Figure~\ref{fig:appendix_Rashomon_set_size}. While we do not yet have a way of computing the (reduced) Rashomon set size when restricting our search to the space of linear models ($L_2$-penalized logistic regression) as described in Section~\ref{sec:linear}, we can nevertheless examine what fraction of the sampled linear models are in the Rashomon set as a function of $\epsilon$.  This is shown for the German Credit, Adult, and Health Datasets in Appendix~\ref{appendix:rashomon_set_size-experiments}, Figure~\ref{fig:appendix_linear_proportion_in_Rashomon_set}.

\subsubsection{Use of error tolerance experiments}

Given that, as $N\rightarrow\infty$, we expect the entire error tolerance $\epsilon$ to be used by models in the Rashomon set, we examine whether this holds for the three real-world datasets as well.  In Figure~\ref{fig:error_proportion}(right), we plot the average proportion of the error tolerance used, $\frac{\theta \cdot W_N}{N\epsilon}$, for 950 flip vectors sampled uniformly at random from the Rashomon set $R_N(\epsilon)$, as described in Section~\ref{sec:sampling}, for each $\epsilon \in \{0.001, 0.002, \ldots, 0.02\}$.  We also plot the 95\% interquantile range for proportion of the error tolerance used, using the 2.5 and 97.5 percentiles of this distribution.  For comparison, we also plot for each dataset in Figure~\ref{fig:error_proportion}(right) the proportion of the error tolerance used when (i) optimizing PPR, TPR, FPR, and over the Rashomon set, as in Sections~\ref{sec:optimizing-PPR} and~\ref{sec:optimizing-TPR and FPR}, and (ii) searching over the set of linear models ($L_2$-penalized logistic regression) in the Rashomon set, as in Section~\ref{sec:linear}. 

\subsection{Takeaways for policy and practice}
\begin{itemize}[left=7pt]
    \item \textbf{Companies should set as large of an error tolerance as possible when defining the parameters for an LDA search, since increasing $\epsilon$ drastically increases the size of the Rashomon set, especially for smaller $\epsilon$.}
    For example, in the German Credit dataset ($N=1,000$), increasing $\epsilon$ from 0.005 to 0.02 moves the exponential base from 1.16 to 1.32 (Figure~\ref{fig:error_proportion} (left)), increasing the Rashomon set size from $5\times 10^{65}$ to $6 \times 10^{119}$ (Appendix~\ref{appendix:rashomon_set_size-experiments}, Figure~\ref{fig:appendix_Rashomon_set_size}).
    \item \textbf{Especially when using random sampling to search for fairer models, companies should expect the models they find to use up all of the error tolerance.} 
    In Figure~\ref{fig:error_proportion}(right), we see that, as expected from Theorem~\ref{theorem:all_error_mainpaper}, the average proportion of the error tolerance used by uniform random sampling over the entire Rashomon set is very close to 1 for all three datasets, and for the larger datasets (Adult and Health), even the 2.5 percentile of the distribution is virtually indistinguishable from 1.  Similarly, for the optimization approaches, all of the error tolerance is used until the entire disparity is mitigated; then the proportion of error tolerance used decreases as $\frac{1}{\epsilon}$ for larger $\epsilon$.
    Thus, while the widest \emph{possible} error tolerance should be used, the company should be ready to use a model within the outer limits of that tolerance.
    \item \textbf{Caveats: Searching within particular model classes may \emph{not} use up all of the error tolerance.} As we see from Figure~\ref{fig:error_proportion}(right), when restricting the search to linear models, the (non-exhaustive) set of linear models we found in the Rashomon set did not use up all of the error tolerance.  In fact, the average proportion of the error tolerance used by randomly sampled linear models is much smaller than 1: about 40\%, 15\%, and 2\% for German Credit, Adult, and Health datasets respectively.  
\end{itemize}

\section{Conclusion}\label{sec:conclusion}
We introduce key results that help us to understand the largest possible Rashomon set, from the average fairness of models within the Rashomon set, to the probability of individuals having their prediction changed across all models in the set, and the size of the Rashomon set. These results lead us to several takeaways: (1) it is critical to search for fair models within the Rashomon set (to be \emph{intentional} about fairness); (2) the arbitrariness of prediction within the Rashomon set changes drastically depending on the dataset and the error tolerance $\epsilon$; and (3) companies should think carefully about setting $\epsilon$ when searching for fairer models within the Rashomon set, balancing flexibility of the search with accuracy of the resulting models. We hope this work shows the importance of searching for fair models within the Rashomon set, and sheds light on how to balance fairness gains with risks of arbitrariness.

\begin{acks}
We gratefully acknowledge funding support from the National Science Foundation Program on Fairness in Artificial Intelligence in Collaboration with Amazon, grant IIS-2040898. Any opinions, findings, and conclusions or recommendations expressed in this material are those of the authors and do not necessarily reflect the views of the National Science Foundation or Amazon.
\end{acks}

\bibliographystyle{acm}
\bibliography{main}

\begin{thebibliography}{10}

\bibitem{adult1996}
{\sc Becker, B., and Kohavi, R.}
\newblock Adult dataset, 1996.

\bibitem{black2021leave}
{\sc Black, E., and Fredrikson, M.}
\newblock Leave-one-out unfairness.
\newblock In {\em Proceedings of the 2021 ACM Conference on Fairness, Accountability, and Transparency\/} (New York, NY, USA, 2021), FAccT '21, Association for Computing Machinery, p.~285–295.

\bibitem{black2023less}
{\sc Black, E., Koepke, J.~L., Kim, P., Barocas, S., and Hsu, M.}
\newblock Less discriminatory algorithms.
\newblock {\em Geo. L. J. 113\/} (forthcoming 2024).

\bibitem{black2022selective}
{\sc Black, E., Leino, K., and Fredrikson, M.}
\newblock Selective ensembles for consistent predictions.
\newblock In {\em International Conference on Learning Representations\/} (2022).

\bibitem{black2022model}
{\sc Black, E., Raghavan, M., and Barocas, S.}
\newblock Model multiplicity: Opportunities, concerns, and solutions.
\newblock In {\em Proceedings of the 2022 ACM Conference on Fairness, Accountability, and Transparency\/} (New York, NY, USA, 2022), FAccT '22, Association for Computing Machinery, p.~850–863.

\bibitem{black2022consistent}
{\sc Black, E., Wang, Z., and Fredrikson, M.}
\newblock Consistent counterfactuals for deep models.
\newblock In {\em International Conference on Learning Representations\/} (2022).

\bibitem{breiman2001statistical}
{\sc Breiman, L.}
\newblock Statistical modeling: The two cultures (with comments and a rejoinder by the author), 2001.

\bibitem{colfax2022report3}
{\sc Colfax, R.}
\newblock Fair lending monitorship of upstart network’s lending model: Third report of the independent monitor.
\newblock \url{https://www.relmanlaw.com/media/cases/1333_PUBLIC\%20Upstart\%20Monitorship\%203rd\%20Report\%20FINAL.pdf}, 2022.

\bibitem{cooper2024arbitrariness}
{\sc Cooper, A.~F., Lee, K., Choksi, M.~Z., Barocas, S., De~Sa, C., Grimmelmann, J., Kleinberg, J., Sen, S., and Zhang, B.}
\newblock Arbitrariness and social prediction: The confounding role of variance in fair classification.
\newblock In {\em Proceedings of the AAAI Conference on Artificial Intelligence\/} (2024), vol.~38, pp.~22004--22012.

\bibitem{creel2022algorithmic}
{\sc Creel, K., and Hellman, D.}
\newblock The algorithmic leviathan: Arbitrariness, fairness, and opportunity in algorithmic decision-making systems.
\newblock {\em Canadian Journal of Philosophy 52}, 1 (2022), 26--43.

\bibitem{d2020underspecification}
{\sc D'Amour, A., Heller, K., Moldovan, D., Adlam, B., Alipanahi, B., Beutel, A., Chen, C., Deaton, J., Eisenstein, J., Hoffman, M.~D., et~al.}
\newblock Underspecification presents challenges for credibility in modern machine learning, 2020.

\bibitem{dong2019variable}
{\sc Dong, J., and Rudin, C.}
\newblock Variable importance clouds: A way to explore variable importance for the set of good models, 2019.

\bibitem{german2017}
{\sc Ferreira, L.}
\newblock German credit risk - with target, 2017.

\bibitem{geman1984}
{\sc Geman, S., and Geman, D.}
\newblock Stochastic relaxation, gibbs distributions, and the bayesian restoration of images.
\newblock {\em IEEE Transactions on Pattern Analysis and Machine Intelligence 6}, 6 (1984), 721--741.

\bibitem{gillis2021input}
{\sc Gillis, T.~B.}
\newblock The input fallacy.
\newblock {\em Minn. L. Rev. 106\/} (2021), 1175.

\bibitem{gillis2024operationalizing}
{\sc Gillis, T.~B., Meursault, V., and Ustun, B.}
\newblock Operationalizing the search for less discriminatory alternatives in fair lending.
\newblock In {\em The 2024 ACM Conference on Fairness, Accountability, and Transparency\/} (2024), pp.~377--387.

\bibitem{gomez2024algorithmic}
{\sc Gomez, J.~F., Machado, C., Paes, L.~M., and Calmon, F.}
\newblock Algorithmic arbitrariness in content moderation.
\newblock In {\em The 2024 ACM Conference on Fairness, Accountability, and Transparency\/} (2024), pp.~2234--2253.

\bibitem{ho2020affirmative}
{\sc Ho, D.~E., and Xiang, A.}
\newblock Affirmative algorithms: The legal grounds for fairness as awareness, 2020.

\bibitem{german1994}
{\sc Hofmann, H.}
\newblock German credit dataset, 1994.

\bibitem{kim2022race}
{\sc Kim, P.~T.}
\newblock Race-aware algorithms: Fairness, nondiscrimination and affirmative action.
\newblock {\em Cal. L. Rev. 110\/} (2022), 1539.

\bibitem{laufer2024fundamental}
{\sc Laufer, B., Raghavan, M., and Barocas, S.}
\newblock Fundamental limits in the search for less discriminatory algorithms--and how to avoid them.
\newblock {\em arXiv preprint arXiv:2412.18138\/} (2024).

\bibitem{marx2019}
{\sc Marx, C.~T., du~Pin~Calmon, F., and Ustun, B.}
\newblock Predictive multiplicity in classification., 2019.

\bibitem{mata2022computing}
{\sc Mata, K., Kanamori, K., and Arimura, H.}
\newblock Computing the collection of good models for rule lists.
\newblock {\em arXiv preprint arXiv:2204.11285\/} (2022).

\bibitem{health2012}
{\sc Merkin, R.}
\newblock Heritage health prize, 2012.

\bibitem{pmlr-v124-pawelczyk20a}
{\sc Pawelczyk, M., Broelemann, K., and Kasneci, G.}
\newblock On counterfactual explanations under predictive multiplicity, 2020.

\bibitem{marketmarkers}
{\sc Phil~Brierley, David~Vogel, R.~A.}
\newblock Heritage provider network health prize round 1 milestone prize how we did it – team market makers, 2012.

\bibitem{colfax2024report4}
{\sc {Relman Colfax}}.
\newblock Fair lending monitorship of upstart network’s lending model: Fourth and final report of the independent monitor.
\newblock \url{https://www.relmanlaw.com/media/news/1512_Upstart%20Final%20Report.pdf}, 2024.

\bibitem{rodolfa2021empirical}
{\sc Rodolfa, K.~T., Lamba, H., and Ghani, R.}
\newblock Empirical observation of negligible fairness--accuracy trade-offs in machine learning for public policy, 2021.

\bibitem{rudin2024amazingthings}
{\sc Rudin, C., Zhong, C., Semenova, L., Seltzer, M., Parr, R., Liu, J., Katta, S., Donnelly, J., Chen, H., and Boner, Z.}
\newblock Amazing things come from having many good models, 2024.

\bibitem{semenova2019study}
{\sc Semenova, L., Rudin, C., and Parr, R.}
\newblock A study in rashomon curves and volumes: A new perspective on generalization and model simplicity in machine learning, 2019.

\bibitem{semenova2022existence}
{\sc Semenova, L., Rudin, C., and Parr, R.}
\newblock On the existence of simpler machine learning models.
\newblock In {\em Proceedings of the 2022 ACM Conference on Fairness, Accountability, and Transparency\/} (2022), pp.~1827--1858.

\bibitem{watson2023predictive}
{\sc Watson-Daniels, J., Parkes, D.~C., and Ustun, B.}
\newblock Predictive multiplicity in probabilistic classification.
\newblock In {\em Proceedings of the AAAI Conference on Artificial Intelligence\/} (2023), vol.~37, pp.~10306--10314.

\bibitem{xin2022exploring}
{\sc Xin, R., Zhong, C., Chen, Z., Takagi, T., Seltzer, M., and Rudin, C.}
\newblock Exploring the whole rashomon set of sparse decision trees.
\newblock {\em Advances in neural information processing systems 35\/} (2022), 14071--14084.

\end{thebibliography}
\clearpage
\appendix
\section{Optimizing Fairness over the Rashomon Set}\label{appendix:optifairness}
In this section, we propose efficient algorithms to find (i) the exact fairest model in the Rashomon set as measured by positive prediction rate (PPR) disparity, and (ii) a model that is guaranteed to have false positive rate (FPR) or true positive rate (TPR) disparity within $\mathcal{O}(\frac{1}{N})$ of the fairest model in the Rashomon set.

\subsection{Optimizing for statistical parity}
\label{appendix:statdisp} 
In this sub-section, we propose an efficient knapsack solution (Algorithm~\ref{algorithm:knapsack}), 
to find the exact fairest model that minimizes disparities in positive prediction rate (PPR) over the Rashomon set $R_N(\epsilon)$, in $\mathcal{O}(N\log N)$ time.

Our first step is to derive expressions for the FPR and TPR disparities corresponding to a given flip vector $\theta$.  To do so, let $P_A$ and $P_B$ be the vectors of Bayes-optimal probabilities $p_i$ for subgroups $A$ (the \emph{protected class}, data records $d_i$ with sensitive attribute value $A_i=a$) and $B$ (the \emph{non-protected class}, data records $d_i$ with sensitive attribute value $A_i\ne a$) respectively.  Let $F^\text{opt} = \langle F_A^\text{opt}, F_B^\text{opt} \rangle$ denote the vector of Bayes-optimal binary predictions $f_\text{opt}(x_i)$, and let $F = \langle F_A,F_B\rangle$ denote the vector of binary predictions $f(x_i)$ corresponding to flip vector $\theta = \langle \theta_A, \theta_B\rangle$. 
We note that $F_A = F^\text{opt}_A \odot (1-\theta_A) + (1-F^\text{opt}_A) \odot \theta_A$, and $F_B = F^\text{opt}_B \odot (1-\theta_B) + (1-F^\text{opt}_B) \odot \theta_B$, where $\odot$ denotes element-wise product.

We can then define the positive prediction rate disparity as:
\begin{align*}
\text{disparity}_{PPR} &= \left|\mathbf{E}[f(x_i) \:|\: d_i\in A] - \mathbf{E}[f(x_i) \:|\: d_i \in B] \right| \\ &= \left|\mbox{Pr}(f(x_i)=1 \:|\: d_i\in A) - \mbox{Pr}(f(x_i)=1 \:|\: d_i\in B)\right| \\ &= \left|\frac{||F_A||_1}{|A|} - \frac{||F_B||_1}{|B|}\right| \\ &= \left|\frac{F_A \cdot \mathbf{1}}{|A|} - \frac{F_B \cdot \mathbf{1}}{|B|}\right| \\ &=\left|\frac{(F^\text{opt}_A \odot (1-\theta) + (1-F^\text{opt}_A) \odot \theta) \cdot \mathbf{1}}{|A|} - \frac{(F^\text{opt}_B \odot (1-\theta) + (1-F^\text{opt}_B) \odot \theta) \cdot \mathbf{1}}{|B|}\right| \\ &=\left|\frac{F^\text{opt}_A \cdot (1-\theta) + (1-F^\text{opt}_A) \cdot \theta}{|A|} - \frac{F^\text{opt}_B \cdot (1-\theta) + (1-F^\text{opt}_B) \cdot \theta}{|B|}\right|.
\end{align*}

As noted in Section~\ref{sec:optimizing-fairness}, we can express the minimization of PPR disparity over flip vectors $\theta$, subject to the constraint that  $\theta$ is in the Rashomon set $R_N(\epsilon)$, as a \emph{knapsack problem}, where each data record $d_i$ has a weight $w_i = |2p_i-1|$ and a value $v_i$, and $\theta_i=1$ corresponds to the inclusion of element $i$ in the knapsack, adding $w_i$ to the total weight and $v_i$ to the total value.  The 0-1 knapsack problem is then the constrained optimization with capacity $N\epsilon$: $\max \sum_i \theta_i v_i$ subject to $\theta_i \in \{0,1\}$ and $\sum_i \theta_i w_i \le N\epsilon$.

We now consider the expression for $v_i$, the change in PPR disparity when the prediction $f(x_i)$ is flipped (i.e., when $\theta_i$ is changed from 0 to 1).  Assume without loss of generality that subgroup $A$ has higher PPR, $\frac{||F_A||_1}{|A|} > \frac{||F_B||_1}{|B|}$. Then we see from the expression for PPR disparity above that flipping a prediction in group $A$ from 1 to 0, or flipping a prediction in group $B$ from 0 to 1, reduces disparity by $\frac{1}{|A|}$ or $\frac{1}{|B|}$ respectively, while other flips increase disparity.  To see this, for a data record $d_i \in A$ with $F^\text{opt}_i=1$,
\begin{align}
&v_i = \frac{F^\text{opt}_A \cdot (1-\theta) + (1-F^\text{opt}_A) \cdot \theta}{|A|}\Bigg|_{\theta=0, F^\text{opt}_A=1} - \frac{F^\text{opt}_A \cdot (1-\theta) + (1-F^\text{opt}_A) \cdot \theta}{|A|}\Bigg|_{\theta=1,F^\text{opt}_A=1} = \frac{1}{|A|}.
\end{align}
We note that $v_i$ for other cases can be calculated similarly. Thus we can write the value of element $i$ for the knapsack problem as $v_i = \frac{1}{|A|}$ if $d_i \in A$ and $F^\text{opt}_i = 1$, $v_i = \frac{1}{|B|}$ if $d_i \in B$ and $F^\text{opt}_i = 0$, and $v_i = 0$ otherwise.

We now make the key observation that enables our efficient knapsack algorithm: there are only two distinct values $v_i > 0$, $\frac{1}{|A|}$ and $\frac{1}{|B|}$ for group $A$ and $B$ respectively. Thus, the optimal knapsack solution will consist of the $k_A$ lowest-weight items from group $A$ and the $k_B$ lowest-weight items from group $B$, for some $k_A$ and $k_B$.

Optimal values of $k_A$ and $k_B$ (i.e., those values that most reduce the disparity) could be calculated by a $\mathcal{O}(N^2)$ brute-force search across all combinations of $k_A$ and $k_B$ that fit the capacity. However, through incremental search, that is, by keeping track of the optimal $k_B$ for a given $k_A$, one can incrementally update $k_B$ for $k_A-1$ by adding the remaining lowest weight $B$ items until the capacity is full, resulting in an incremental linear $\mathcal{O}(N)$ search. Thus the run time is dominated by the $\mathcal{O}(N\log N)$ sorting of items by weight. We present the algorithm below. 

\begin{algorithm} [H]                 
\caption{0-1 Knapsack Algorithm for minimizing $PPR$ disparity}\label{algorithm:knapsack}
\begin{algorithmic}[1]
\STATE Given: data records $D_N=(x_i,y_i)|_{i=1}^{N}$, Bayes-optimal (true) probabilities $P_N=p_i|_{i=1}^{N}$, capacity $N\epsilon$
\STATE Output: final disparity $fd$, flip vector $\theta$ \newline

\STATE Calculate the value $v_i$ of each record (either $\frac{1}{|A|}$, $\frac{1}{|B|}$, or 0) as described in the text above\;
\STATE Calculate the weight of each record $i$, $w_i=|2p_i-1|$\;
\STATE Calculate the initial disparity in the data, $id$\; \newline

\IF{$id = 0$ or $\epsilon = 0$}
\RETURN $\theta=0$, $fd=id$
\ENDIF \newline

\STATE Calculate weights $W_A$, sorted in ascending order, along with their indices $I_A$ for records in $A$ with $v_i > 0$\;
\STATE Calculate weights $W_B$, sorted in ascending order, along with their indices $I_B$ for records in $B$ with $v_i > 0$\; \newline

\STATE Calculate the maximum number of items in $A$ with $v_i>0$, $maxA$, that fit the capacity, adding items in ascending order of weight
\STATE Calculate the maximum number of items in $B$ with $v_i>0$, $maxB$, that fit the capacity along with the $maxA$ items, adding items in ascending order of weight \newline

\STATE Initialize the best value ($bestval$) to $\frac{maxA}{|A|} + \frac{maxB}{|B|}$
\STATE Initialize the best number of items in $A$ ($k_A$) to $maxA$ and the best number of items in $B$ ($k_B$) to $maxB$ \newline

\FOR{$a$ from $maxA-1$ to 0}
\STATE Remove the highest-weight item in $A$ with $v_i>0$, and add items in $B$ with $v_i>0$ until the capacity is filled, adding items in ascending order of weight. Let $b$ be the total number of items in $B$ with $v_i>0$ that have been added.
\IF{$\frac{a}{|A|} + \frac{b}{|B|}>bestval$}
\STATE Set $k_A=a$, $k_B=b$, $bestval=\frac{a}{|A|} + \frac{b}{|B|}$
\ENDIF
\ENDFOR \newline

\STATE Calculate optimal flip vector $\theta$ and final disparity $fd$, setting $\theta_i=1$ for the $k_A$ lowest-weight items in $A$ with $v_i>0$ 
 and the $k_B$ lowest-weight items in $B$ with $v_i>0$.
\RETURN $\theta$, $fd$
\end{algorithmic}
\end{algorithm}

\subsection{Optimizing for error rate balance}
\label{appendix:errbal}
In this sub-section, we propose an efficient, $\mathcal{O}(N\log N)$ fractional knapsack solution (Algorithm~\ref{algorithm:fractional_knapsack}), 
to find the model that minimizes disparities in false positive rate (FPR) or true positive rate (TPR) over the Rashomon set $R_N(\epsilon)$, to within $\mathcal{O}(\frac{1}{N})$ of the optimal disparity, in $\mathcal{O}(N\log N)$ time.

Our first step is to derive expressions for the FPR and TPR disparities corresponding to a given flip vector $\theta$.  To do so, as in Appendix~\ref{appendix:statdisp}, let $P_A$ and $P_B$ be the vectors of Bayes-optimal probabilities $p_i$ for subgroups $A$ (the \emph{protected class}, data records $d_i$ with sensitive attribute value $A_i=a$) and $B$ (the \emph{non-protected class}, data records $d_i$ with sensitive attribute value $A_i\ne a$) respectively.  Let $F^\text{opt} = \langle F_A^\text{opt}, F_B^\text{opt} \rangle$ denote the vector of Bayes-optimal binary predictions $f_\text{opt}(x_i)$, and let $F = \langle F_A,F_B\rangle$ denote the vector of binary predictions $f(x_i)$ corresponding to flip vector $\theta = \langle \theta_A, \theta_B\rangle$. 
We note that $F_A = F^\text{opt}_A \odot (1-\theta_A) + (1-F^\text{opt}_A) \odot \theta_A$, and $F_B = F^\text{opt}_B \odot (1-\theta_B) + (1-F^\text{opt}_B) \odot \theta_B$, where $\odot$ denotes element-wise product.

We can then define the false positive rate disparity and true positive rate disparity as:
\begin{align*}
\text{disparity}_{FPR} &= \left|\mathbf{E}[f(x_i) \:|\: d_i\in A, y_i = 0] - \mathbf{E}[f(x_i) \:|\: d_i \in B, y_i = 0] \right| \\
 &= \left|\mbox{Pr}(f(x_i)=1 \:|\: d_i\in A, y_i=0) - \mbox{Pr}(f(x_i)=1 \:|\: d_i\in B, y_i=0)\right| \\
&= \left| \frac{\mbox{Pr}(f(x_i)=1, y_i=0 \:|\: d_i\in A)}{\mbox{Pr}(y_i=0 \:|\: d_i\in A)} - \frac{\mbox{Pr}(f(x_i)=1, y_i=0 \:|\: d_i\in B)}{\mbox{Pr}(y_i=0 \:|\: d_i\in B)} \right| \\
&= \left|\frac{(1-P_A) \cdot F_A}{||1-P_A||_1} - \frac{(1-P_B) \cdot F_B}{||1-P_B||_1}\right|.
\end{align*}
\begin{align*}
\text{disparity}_{TPR} &= \left|\mathbf{E}[f(x_i) \:|\: d_i\in A, y_i = 1] - \mathbf{E}[f(x_i) \:|\: d_i \in B, y_i = 1] \right| \\
 &= \left|\mbox{Pr}(f(x_i)=1 \:|\: d_i\in A, y_i=1) - \mbox{Pr}(f(x_i)=1 \:|\: d_i\in B, y_i=1)\right| \\
&= \left| \frac{\mbox{Pr}(f(x_i)=1, y_i=1 \:|\: d_i\in A)}{\mbox{Pr}(y_i=1 \:|\: d_i\in A)} - \frac{\mbox{Pr}(f(x_i)=1, y_i=1 \:|\: d_i\in B)}{\mbox{Pr}(y_i=1 \:|\: d_i\in B)} \right| \\
&= \left|\frac{P_A \cdot F_A}{||P_A||_1} - \frac{P_B \cdot F_B}{||P_B||_1}\right|.
\end{align*}

We now compute the values $v_i$ (the change in disparity when the prediction $f(x_i)$ is flipped, i.e., when $\theta_i$ is changed from 0 to 1) for FPR and TPR respectively.  

For FPR, assume without loss of generality that subgroup $A$ has higher FPR, $\frac{(1-P_A) \cdot F_A}{||1-P_A||_1} > \frac{(1-P_B) \cdot F_B}{||1-P_B||_1}$.
Then flipping a prediction in group $A$ from 1 to 0, or flipping a prediction in group $B$ from 0 to 1, reduces the disparity by $\frac{1-p_i}{||1-P_A||_1}$ or $\frac{1-p_i}{||1-P_B||_1}$ respectively, while other flips increase disparity.  Thus we can write the value of element $i$ for the knapsack problem as $v_i = \frac{1-p_i}{||1-P_A||_1}$ if $d_i \in A$ and $F^\text{opt}_i = 1$, $v_i = \frac{1-p_i}{||1-P_B||_1}$ if $d_i \in B$ and $F^\text{opt}_i = 0$, and $v_i = 0$ otherwise.

For TPR, assume without loss of generality that subgroup $A$ has higher TPR, $\frac{P_A \cdot F_A}{||P_A||_1} > \frac{P_B \cdot F_B}{||P_B||_1}$. Then flipping a prediction in group $A$ from 1 to 0, or flipping a prediction in group $B$ from 0 to 1, reduces the disparity by $\frac{p_i}{||P_A||_1}$ or $\frac{p_i}{||P_B||_1}$ respectively, while other flips increase disparity.  Thus we can write the value of element $i$ for the knapsack problem as $v_i = \frac{p_i}{||P_A||_1}$ if $d_i \in A$ and $F^\text{opt}_i = 1$, $v_i = \frac{p_i}{||P_B||_1}$ if $d_i \in B$ and $F^\text{opt}_i = 0$, and $v_i = 0$ otherwise.

To minimize FPR or TPR disparity over the Rashomon set $R_N(\epsilon)$, we note that elements have more than two distinct values, so we cannot apply the solution for PPR above. Instead, we approximate the 0-1 knapsack problem with the fractional knapsack problem: $\max \sum_i \theta_i v_i$ subject to $\theta_i \in [0,1]$ and $\sum_i \theta_i w_i \le N\epsilon$.
The standard solution to the fractional knapsack, which requires $\mathcal{O}(N\log N)$ time, adds elements to the knapsack (setting $\theta_i = 1$) in descending order of their ratio $\frac{v_i}{w_i}$ until no further elements can be (fully) added, then adds a fraction of the next element ($0 < \theta_i < 1$) to fill the remaining capacity. Rather than adding the fractional element, we show that it would reduce disparity by an amount $\theta_i v_i$ that is $\mathcal{O}(\frac{1}{N})$.

To see this, we note for FPR disparity, for an individual in group $A$, that $\theta_i < 1$, and $v_i = \frac{1-p_i}{||1-P_A||_1} < \frac{1}{||1-P_A||_1} = \frac{1}{N \Pr(y_i = 0,\, d_i \in A)} = \mathcal{O}(\frac{1}{N})$. Therefore $\theta_i v_i = \mathcal{O}(\frac{1}{N})$.  The other cases, for TPR disparity and for group B, proceed similarly.  

Finally, since the fractional knapsack solution $\sum \theta_i v_i$ is an upper bound on the 0-1 knapsack solution, we know that our solution (excluding the fractional element) reduces disparity to within $\mathcal{O}(\frac{1}{N})$ of the optimal disparity.

Thus we propose an $\mathcal{O}(N \log N)$ fractional knapsack algorithm to find the final disparity and flip vector. The algorithm is a linear scan of values and weights, sorted by the ratio of their value to their weight. Thus the run time is dominated by the $\mathcal{O}(N\log N)$ sorting of items by their ratio of value to weight. We present the algorithm below. 
\clearpage
\begin{algorithm} [H]                 
\caption{Fractional Knapsack Algorithm for minimizing $FPR$ or $TPR$ disparity}\label{algorithm:fractional_knapsack}
\begin{algorithmic}[1]
\STATE Given: data records $D_N=(x_i,y_i)|_{i=1}^{N}$, Bayes-optimal (true) probabilities $P_N=p_i|_{i=1}^{N}$, capacity $N\epsilon$
\STATE Output: final disparity $fd$, flip vector $\theta$ \newline
\STATE Calculate the value $v_i$ of each record as described in the text above\;
\STATE Calculate the weight of each record $i$, $w_i=|2p_i-1|$\;
\STATE Calculate the initial disparity in the data, $id$\; \newline
\IF{$id = 0$ or $\epsilon = 0$}
\RETURN $\theta=0$, $fd=id$, $fracVal=0$
\ENDIF \newline
 \STATE Calculate weights $W$ and values $V$, along with their indices $I$, of records with $v_i > 0$, sorted by $\frac{v_i}{w_i}$ in descending order
\STATE Initialize record index variable $i$, total weight $totWei$, and total value $totVal$ to 0 \newline

\WHILE{$totVal < id$ \textbf{and} $totWei < Capacity$ \textbf{and} $i < len(I)$}
\STATE \# Attempt to add next element $i$ with $v_i>0$; note that elements are added in descending order of $\frac{v_i}{w_i}$.
    \IF{$totWei + w_i \le Capacity$}
        \STATE Add weight of element $i$ to total weight $totWei$
        \STATE Add value of element $i$ to total value $totVal$
        \STATE Increment $i$ by 1
    \ELSE
        \STATE Calculate the fractional value of element $i$ that would fill the entire capacity, $fracVal = \left(\frac{Capacity - totWei}{w_i}\right) v_i$
        \STATE \textbf{break}
    \ENDIF
\ENDWHILE \newline

\STATE Calculate flip vector $\theta$, setting $\theta_i=1$ for all elements $i$ added to the knapsack, excluding the fractional element.
\STATE Calculate final disparity $fd = id - totVal$
\STATE \# Note that $fracVal$ is an upper bound on the difference between $fd$ and the true optimal disparity.
\RETURN $fd$, $\theta$, $fracVal$
\end{algorithmic}
\end{algorithm}
\clearpage

\section{Gibbs sampling algorithm for uniform sampling from the Rashomon set}
\label{appendix:gibbs}

\begin{algorithm}[H]
\caption{Sampling Flip Vectors Uniformly at Random from the Rashomon Set}\label{algorithm:gibbs}
\begin{algorithmic}[1]
    \STATE Given: error tolerance $\epsilon$; number of data records $N$; weight vector $W_N = \langle w_1,\ldots,w_N\rangle$. Note that $w_i = |2p_i -1|$, where $p_i$ is the Bayes-optimal probability $\mbox{Pr}(y=1 \:|\: x=x_i)$ for data record $d_i$, $i\in \{1,\ldots,N\}$.
    \STATE Initialize $\Theta$ as an empty list
    \STATE Initialize $\theta = \theta_0$, where $\theta_0$ is the length-$N$ binary flip vector consisting of all zeros.
    \FOR{$t = 1$ to $T$}
        \FOR{$i$ in random permutation of $\{1, \dots, N\}$}
         \STATE Calculate current amount of error tolerance used, $E_{\text{current}} = \frac{\theta\cdot W_N}{N}$
            \STATE Calculate $\Pr(\theta_i = 1 \mid \theta_{-i})$:
            \IF{$E_{\text{current}} +(1-\theta_i)\frac{w_i}{N} \le \epsilon$}
                \STATE $\Pr(\theta_i = 1 \mid \theta_{-i}) = \frac{1}{2}$
            \ELSE
                \STATE $\Pr(\theta_i = 1 \mid \theta_{-i}) = 0$
            \ENDIF
            \STATE Sample $\theta_i$ from $\text{Bernoulli}(\Pr(\theta_i = 1 \mid \theta_{-i}))$
        \ENDFOR
        \IF{$t > B$ and $(t - B) \bmod K == 0$}
            \STATE Append $\theta$ to $\Theta$
        \ENDIF
    \ENDFOR
    \RETURN $\Theta$
\end{algorithmic}
\end{algorithm}

The definition of the Rashomon set $R_N(\epsilon)$ in Section~\ref{sec:prelim} suggests that a simple \emph{rejection sampling} approach could be used to draw models uniformly at random from the Rashomon set. That is, one could draw a binary flip vector $\theta \in \{0,1\}^N$ uniformly at random from the set of all $2^N$ possible flip vectors by drawing $\theta_i \sim \text{Bernoulli}(0.5)$ for all $i \in \{1 \ldots N\}$, and then keep only those vectors $\theta$ that are in the Rashomon set, i.e.,  with $\frac{\theta \cdot W_N}{N} \le \epsilon$.  The problem with this simple approach is that, as $N$ increases, the probability that $\theta$ is in the Rashomon set goes to 0. This can be seen from Appendix~\ref{appendix:proofs}, Theorem~\ref{theorem:asymptotic_size}: for $\epsilon$ less than half the average weight, $B(\epsilon) < 2$, and $\lim_{N \rightarrow \infty} \frac{|R_N(\epsilon)|}{2^N}$ = $\lim_{N \rightarrow \infty} \frac{B(\epsilon)^N}{2^N}$ = 0.

Thus we propose an alternative approach based on \emph{Gibbs sampling}~\cite{geman1984}. The key idea is to sequentially sample one element $\theta_i$ of the flip vector at a time, conditional on all the other elements $\theta_{-i}$.  While we do not have a closed form for the joint distribution of $\langle\theta_1,\ldots,\theta_N\rangle$ for $\theta \in R_N(\epsilon)$, computing the \emph{conditional} distribution of $\theta_i$ given $\theta_{-i}$ is straightforward. Let $\theta_{i=0} = \langle\theta_1,\ldots,\theta_{i-1},0,\theta_{i+1},\ldots,\theta_N\rangle$ and 
$\theta_{i=1} = \langle\theta_1,\ldots,\theta_{i-1},1,\theta_{i+1},\ldots,\theta_N\rangle$. Then we know that $\frac{\theta_{i=0}\cdot W_N}{N} \le 
\frac{\theta\cdot W_N}{N} \le
\frac{\theta_{i=1}\cdot W_N}{N}$. This implies that, if $\theta \in R_N(\epsilon)$ and $\theta_i = 1$, then $\theta_{i=0} $ and $\theta_{i=1}$ are both in the Rashomon set, so $\mbox{Pr}(\theta_i = 1 \:|\: \theta_{-i}) = \frac{1}{2}$. If $\theta \in R_N(\epsilon)$ and $\theta_i = 0$, then $\theta_{i=0} \in R_N(\epsilon)$, but we must check whether $\theta_{i=1} \in R_N(\epsilon)$, i.e., whether $\frac{\theta \cdot W_N + w_i}{N} \le \epsilon$. If so, then $\mbox{Pr}(\theta_i = 1 \:|\: \theta_{-i}) = \frac{1}{2}$, and if not, then 
$\mbox{Pr}(\theta_i = 1 \:|\: \theta_{-i}) = 0$.  

Given this simple and computationally efficient conditional sampling step, our Gibbs sampling approach starts with the zero vector $\theta_0$, which is guaranteed to be in the Rashomon set, and iteratively samples $\theta_i \sim \text{Bernoulli}(p)$, where $p = \mbox{Pr}(\theta_i = 1 \:|\: \theta_{-i})$ as described above, for each $i \in \{1,\ldots,N\}$.  To ensure uncorrelated samples from the joint distribution, we take one sample every 10 iterations (where one iteration includes resampling all $N$ elements of $\theta$ in randomly permuted order), after an initial burn-in period of 500 iterations.  For each dataset and each value of $\epsilon$ considered, we run 10,000 iterations of Gibbs sampling, resulting in 950 samples. 

Algorithm~\ref{algorithm:gibbs} presents the pseudocode for our Gibbs sampling approach, enabling us to sample length-$N$ binary flip vectors uniformly at random from the Rashomon set $R_N(\epsilon)$.
The sampling algorithm follows the idea~\cite{geman1984}, in which the Markov chain is the sequence of flip vectors $\theta^{(0)}, \theta^{(1)}, \dots, \theta^{(T)}$ generated as the algorithm progresses. Each $\theta^{(t)}$ is a point $\theta \in \{0,1\}^N$ that does not violate the Rashomon set constraint $\frac{\theta \cdot W_N}{N} \le \epsilon$, and thus $\theta^{(t)} \in R_N(\epsilon)$.

Specifically, we start by initializing the flip vectors $\Theta$ as an empty list. We then initialize flip vector $\theta = \theta_0$, the length-$N$ binary vector of zeros, which is guaranteed to be in the Rashomon set since $\frac{\theta_0 \cdot W_N}{N} = 0$.

Throughout $T=10,000$ iterations, where each iteration involves resampling each of the $N$ elements of $\theta$ in randomly permuted order, we keep track of the current amount of error tolerance used, $E_{\text{current}} = \frac{\theta \cdot W_N}{N}$, which can be done through incremental updates of $E_\text{current}$ whenever an element $\theta_i$ is modified.  To resample element $\theta_i$, we first compute $\Pr(\theta_i = 1 \mid \theta_{-i})$, the probability of $\theta_i = 1$ conditional on the current values of the other elements of $\theta$, and then draw $\theta_i \sim \text{Bernoulli}(\Pr(\theta_i = 1 \mid \theta_{-i}))$.  To compute $\Pr(\theta_i = 1 \mid \theta_{-i})$, we note that all flip vectors in the Rashomon set must be equally likely to be drawn.  Thus, if $\theta_{i=0}$ and $\theta_{i=1}$ are both in the Rashomon set, we know $\Pr(\theta_i = 1 \mid \theta_{-i}) = \frac{1}{2}$, while if $\theta_{i=0}$ is in the Rashomon set and $\theta_{i=1}$ is not, then 
$\Pr(\theta_i = 1 \mid \theta_{-i}) = 0$.  We note that $\theta_{i=0}$ will always be in the Rashomon set, as described in the main text.  To check if $\theta_{i=1}$ is in the Rashomon set, we must evaluate whether $\frac{\theta_{i=1} \cdot W_N}{N} = E_\text{current} + (1-\theta_i)\frac{w_i}{N}\le \epsilon$.

To ensure that each sampled flip vector $\theta$ is drawn independently from the joint distribution of $\langle\theta_1,\ldots,\theta_N\rangle$, we begin recording $\theta$ only after the number of iterations exceeds the burn-in period $B = 500$, and thereafter sample one value of $\theta$ every $K = 10$ iterations (i.e., at iterations 510, 520, \ldots). When the algorithm terminates, all recorded flip vectors are appended to $\Theta$, resulting in the final list of sampled vectors.

We see that each iteration requires stepping through the $\mathcal{O}(N)$ data records.  For each data record, we perform an $\mathcal{O}(1)$ check (whether or not the flip vector with $\theta_i=1$ is in the Rashomon set; note that we keep track of the current value of $\frac{\theta\cdot W_N}{N}$ throughout for computational efficiency) and an $\mathcal{O}(1)$ resampling of $\theta_i$ from either $\text{Bernoulli}(0.5)$ or $\text{Bernoulli}(0)$. Since the number of iterations is a fixed constant, this means that the overall runtime of the algorithm is $\mathcal{O}(N)$.

\section{Proofs of Theorems}\label{appendix:proofs}
In this section, we formally derive the theoretical results in the main paper, Sections~\ref{sec:flip_probs} and~\ref{sec:rashomon_set_size_and_error}.  Note that the order in which we derive these results is different than the order in which they are presented in the main paper, as many of the results build on each other.

Let $\langle d_1, d_2, \ldots \rangle$ denote an infinite sequence of data records drawn i.i.d. from distribution $D$, and let $D_N$ denote the subsequence $\langle d_1, d_2, \ldots, d_N \rangle$.  Assume that each $d_i = (x_i, y_i)$ where $x_i$ represents a set of predictor variables and $y_i$ is a binary outcome variable.  Let $p_i$ denote the Bayes-optimal probability, $p_i = \mbox{Pr}(y = 1 \:|\: x = x_i)$, and let $w_i$ denote the corresponding weight, $w_i = |2p_i - 1|$. We define the vectors $P_N=\langle p_1, p_2, \ldots, p_N \rangle$, and $W_N=\langle w_1, w_2, \ldots, w_N \rangle$. Moreover, let $P$ and $W$ be the distributions of Bayes-optimal probabilities and weights respectively, for data records drawn i.i.d. from $D$.  

Let $R_N(D_N,\epsilon)$ denote the largest possible Rashomon set of models for data records $\langle d_1, \ldots, d_N\rangle$.  Since $R_N$ can be computed using only the weights $W_N$, we can also write $R_N(P_N,\epsilon)$, $R_N(W_N, \epsilon)$, or simply $R_N(\epsilon)$ when the context (specifically, the weight vector $W_N$) is clear. Each distinct model in $R_N(\epsilon)$ represents a different binary classification of the data records $\langle d_1, \ldots, d_N\rangle$, $\langle f(x_1), \ldots, f(x_N) \rangle$, where $f(x_i) \in \{0,1\}$ for all $i \in \{1, \ldots, N\}$, and thus there are at most $2^N$ models in $R_N(\epsilon)$. Note that the classifier can be probabilistic, i.e., two data records with identical $x_i$ could have different $f(x_i)$ values. The \emph{Bayes-optimal classifier} (the classifier with the lowest expected 0/1 loss, or equivalently, the highest expected accuracy) is a deterministic function of the Bayes-optimal probabilities $p_i$: $f_\text{opt}(x_i) = 1$ if $p_i > 0.5$, and $f_\text{opt}(x_i) = 0$ otherwise. We represent each model in $R_N(\epsilon)$ by a binary \emph{flip vector} $\theta \in \{0,1\}^N$, where $\theta_i = 1$ if $f(x_i) \ne f_\text{opt}(x_i)$, and $\theta_i = 0$ if $f(x_i) = f_\text{opt}(x_i)$.

\begin{definition}[Accuracy of a model defined by a flip vector $\theta$]
\label{def:acc}
Let $D_N = \langle d_1,d_2,\ldots,d_N \rangle$ be data records drawn i.i.d. from distribution $D$ with corresponding Bayes-optimal probabilities $P_N = \langle p_1,p_2,\ldots,p_N \rangle$ and corresponding weights $W_N = \langle w_1,w_2,\ldots,w_N \rangle$, where $w_i = |2p_i-1|$.  The accuracy of a model with flip vector $\theta$ is
\begin{align*}
acc(\theta) &= \frac{1}{N} \sum_{i=1\ldots N} \left(p_i f(x_i) + (1-p_i) (1-f(x_i)) \right)\\
&= acc(\theta_0) + \frac{1}{N} \sum_{i=1\ldots N} \theta_i \left(
((1-p_i) - p_i) \mathbf{1}\{p_i > 0.5\})
+(p_i - (1-p_i)) \mathbf{1}\{p_i \le 0.5\})
\right) \\
&= acc(\theta_0) - \frac{1}{N} \sum_{i=1\ldots N} \theta_i |2p_i-1| \\
&= acc(\theta_0) - \frac{\theta \cdot W_N}{N},
\end{align*}
where $acc(\theta_0)$ is the accuracy of the Bayes-optimal classifier (and corresponding flip vector $\theta_0$ consisting of all zeros).
\end{definition}

\begin{definition}[Rashomon set]
The Rashomon set of models $R_N(\epsilon)$ for error tolerance $\epsilon$ defined over data records $\langle d_1,\ldots,d_N\rangle$ is defined as the set of all models with corresponding flip vectors $\theta \in \{0,1\}^N$ such that $acc(\theta) \ge acc(\theta_0) - \epsilon$.  Therefore, from Definition~\ref{def:acc} above, $R_N(\epsilon) = \{ \theta \in \{0,1\}^N : \frac{\theta \cdot W_N}{N} \le \epsilon \}$.
\end{definition}

\begin{definition}[Flip probability]
Let $R_N(\epsilon)$ denote the Rashomon set of models for error tolerance $\epsilon$ defined over data records $\langle d_1, \ldots, d_N \rangle$.
For a given data record $d_i$, $i \in \{1, \ldots, N\}$, the \emph{flip probability} $q_{N,i}$ is defined as the proportion of models in the Rashomon set for which
the model prediction $f(x_i)$ differs from the Bayes-optimal prediction $f_\text{opt}(x_i) = \mathbf{1}\{p_i > 0.5\}$, or equivalently, the proportion of flip vectors for which $\theta_i = 1$:
\[q_{N,i} = \frac{|\theta \in R_N(\epsilon): \theta_i = 1|}{|R_N(\epsilon)|}.\]
\end{definition}

\begin{lemma}[Relationship between flip probability, weight, and Rashomon set size]\label{lemma:size}
\[
q_{N,i} = \frac{\left|R_{N,-i}\left(\frac{N\epsilon-w_i}{N-1}\right)\right|}
{\left|R_{N,-i}\left(\frac{N\epsilon-w_i}{N-1}\right)\right| 
+ \left|R_{N,-i}\left(\frac{N\epsilon}{N-1}\right)\right|},
\]
where $R_{N,-i}(\epsilon)$ is the Rashomon set of models for error tolerance $\epsilon$ defined over the $N-1$ data records $\langle d_1, \ldots, d_{i-1}, d_{i+1}, \ldots, d_N\rangle$. 
\end{lemma}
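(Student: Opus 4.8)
The plan is to decompose the Rashomon set $R_N(\epsilon)$ according to the value of the single coordinate $\theta_i$, and to identify each of the two resulting pieces with a Rashomon set defined over the remaining $N-1$ data records at a suitably rescaled error tolerance. First I would recall the definition $q_{N,i} = \frac{|\{\theta \in R_N(\epsilon): \theta_i = 1\}|}{|R_N(\epsilon)|}$ and partition the denominator as $|R_N(\epsilon)| = |\{\theta \in R_N(\epsilon): \theta_i = 0\}| + |\{\theta \in R_N(\epsilon): \theta_i = 1\}|$, so that it suffices to count these two sets separately. This is valid because every flip vector in the Rashomon set has $\theta_i$ equal to either $0$ or $1$.

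The core step is a counting bijection. Writing $W_{N,-i}$ for the weight vector with $w_i$ deleted and $\theta_{-i}$ for the corresponding truncated flip vector, I would split the dot product as $\theta \cdot W_N = \theta_{-i} \cdot W_{N,-i} + \theta_i w_i$, so the membership constraint $\theta \cdot W_N \le N\epsilon$ becomes a constraint on $\theta_{-i}$ alone once $\theta_i$ is fixed. Fixing $\theta_i = 1$ gives $\theta_{-i} \cdot W_{N,-i} \le N\epsilon - w_i$, while fixing $\theta_i = 0$ gives $\theta_{-i} \cdot W_{N,-i} \le N\epsilon$. Each such $\theta_{-i}$ ranges freely over $\{0,1\}^{N-1}$, and the map $\theta \mapsto \theta_{-i}$ is a bijection onto the set of admissible truncated vectors, so counting $\theta$ with a given $\theta_i$ reduces to counting admissible $\theta_{-i}$.

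The one place requiring care---and the main obstacle---is matching these thresholds to the normalized definition of the $(N-1)$-record Rashomon set, since $R_{N,-i}(\epsilon')$ is defined by the constraint $\frac{\theta_{-i} \cdot W_{N,-i}}{N-1} \le \epsilon'$, i.e.\ $\theta_{-i} \cdot W_{N,-i} \le (N-1)\epsilon'$. To realize the bound $N\epsilon - w_i$ I solve $(N-1)\epsilon' = N\epsilon - w_i$ to obtain $\epsilon' = \frac{N\epsilon - w_i}{N-1}$, and to realize $N\epsilon$ I solve $(N-1)\epsilon' = N\epsilon$ to obtain $\epsilon' = \frac{N\epsilon}{N-1}$. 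This yields
\[
|\{\theta \in R_N(\epsilon): \theta_i = 1\}| = \left|R_{N,-i}\!\left(\tfrac{N\epsilon-w_i}{N-1}\right)\right|, \qquad |\{\theta \in R_N(\epsilon): \theta_i = 0\}| = \left|R_{N,-i}\!\left(\tfrac{N\epsilon}{N-1}\right)\right|.
\]
Substituting these two identities into the partitioned expression for $q_{N,i}$ gives the claimed formula immediately. The only subtlety worth flagging is that $\frac{N\epsilon - w_i}{N-1}$ could in principle be negative (when $w_i > N\epsilon$), in which case the corresponding Rashomon set contains only the all-zeros vector or is empty; I would note that the counting identity and the final ratio remain correct under the convention that $R_{N,-i}(\epsilon')$ is the empty set (size $0$) for $\epsilon' < 0$, consistent with no vector satisfying $\theta_i = 1$ being admissible in that regime.
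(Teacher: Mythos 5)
Your proof is correct and follows essentially the same route as the paper's: partition $R_N(\epsilon)$ by the value of $\theta_i$, split the constraint as $\theta_{-i}\cdot W_{N,-i} + \theta_i w_i \le N\epsilon$, and rescale each threshold by $N-1$ to identify the two pieces with $(N-1)$-record Rashomon sets. Your added remark on the degenerate case $w_i > N\epsilon$ is a harmless extra precaution the paper omits.
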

\begin{proof}
We can rewrite the criterion for membership in the Rashomon set, $\frac{\theta \cdot W_N}{N} \le \epsilon$, as 
$\theta_{-i} \cdot W_{N,-i} + \theta_i w_i \le N\epsilon$, where $\theta_{-i}$ and $W_{N,-i}$ are the flip vector omitting element $i$ and the weight vector omitting element $i$ respectively. The numerator of the above expression, and the first term of the denominator, represent the flip vectors $\theta$ for which $\theta_i = 1$. To satisfy $\frac{\theta \cdot W_N}{N} \le \epsilon$ for these flip vectors, we must have $\theta_{-i} \cdot W_{N,-i} \le N\epsilon - w_i$, or 
$\frac{\theta_{-i} \cdot W_{N,-i}}{N-1} \le \frac{N\epsilon - w_i}{N-1}$.
The second term of the denominator represents the flip vectors $\theta$ for which $\theta_i = 0$. To satisfy $\frac{\theta \cdot W_N}{N} \le \epsilon$ for these flip vectors, we must have $\theta_{-i} \cdot W_{N,-i} \le N\epsilon$, or $\frac{\theta_{-i} \cdot W_{N,-i}}{N-1} \le \frac{N\epsilon}{N-1}$.
\end{proof}

\begin{lemma}[Asymptotic Pairwise Independence of Flip Probabilities]\label{lemma:independence}
Let $R_N(\epsilon)$ denote the Rashomon set of models for error tolerance $\epsilon$ defined over data records $\langle d_1, \ldots, d_N\rangle$.  For any $i,j \in \{1,\ldots,N\}$, $i \ne j$, as $N \rightarrow \infty$, the flip probability $q_j$ becomes independent of $\theta_i$. Specifically:
\[
\lim_{N\rightarrow \infty} (q_{N,j} \:|\: \theta_i = 1) = \lim_{N \rightarrow \infty} (q_{N,j} \:|\: \theta_i = 0).
\]

\end{lemma}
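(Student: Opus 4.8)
The plan is to turn both conditional quantities into ratios of Rashomon-set sizes over the $N-2$ records obtained by deleting \emph{both} $i$ and $j$, and then to show that the $O(1)$ change in budget caused by conditioning on $\theta_i=1$ versus $\theta_i=0$ disappears in the limit. Write $S(b)$ for the number of flip vectors $\theta$ over $\{1,\ldots,N\}\setminus\{i,j\}$ with $\sum_k \theta_k w_k \le b$ (the two-coordinate analogue of the $R_{N,-i}$ quantity in Lemma~\ref{lemma:size}). Repeating the bookkeeping of Lemma~\ref{lemma:size}, but splitting on both $\theta_i$ and $\theta_j$, gives
\[
(q_{N,j}\mid\theta_i=1)=\frac{\rho_1}{1+\rho_1},\qquad (q_{N,j}\mid\theta_i=0)=\frac{\rho_0}{1+\rho_0},
\]
where $\rho_1=\frac{S(N\epsilon-w_i-w_j)}{S(N\epsilon-w_i)}$ and $\rho_0=\frac{S(N\epsilon-w_j)}{S(N\epsilon)}$. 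Since $x\mapsto x/(1+x)$ is continuous, it suffices to show $\rho_1-\rho_0\to 0$; equivalently, that the size-ratio $\frac{S(b-w_j)}{S(b)}$ is asymptotically the same at $b=N\epsilon-w_i$ and at $b=N\epsilon$, two budgets differing only by the fixed amount $w_i\le 1$.

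The engine is an analysis of $S(b)$. Write $S(b)=2^{N-2}\,\Pr[\sum_k U_k w_k \le b]$ with $U_k$ i.i.d.\ $\mathrm{Bernoulli}(1/2)$ and $w_k$ i.i.d.\ from the weight distribution $W$. Under assumption (3), $b=\Theta(N\epsilon)$ lies below the mean $\tfrac12\sum_k w_k \approx \tfrac{N}{2}\,\mathbb{E}[w]$, so this is a lower-tail event and an exponential-tilting (Cram\'er/Laplace) argument applies: the count is dominated by flip vectors of total weight $\approx b$, obtained by tilting each coordinate to $\Pr[\theta_k=1]\propto e^{-Cw_k}$, with $C>0$ the multiplier matching the per-record budget $b/(N-2)$. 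This yields $\frac{1}{N}\log S(b)\to\psi(b/N)$ for a smooth, strictly concave $\psi$ with $\psi'=C(\cdot)$, and hence the marginal-cost estimate $\frac{S(b-w)}{S(b)}\to e^{-C(b/N)\,w}$, where $C(\cdot)$ depends continuously on the limiting per-record budget.

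To finish, note that the per-record budgets governing $\rho_1$ and $\rho_0$ are $(N\epsilon-w_i)/(N-2)$ and $N\epsilon/(N-2)$, both of which converge to $\epsilon$ as $N\to\infty$ because $w_i$ is fixed and deleting two coordinates is negligible. By continuity of $C(\cdot)$, both ratios converge to the common value $e^{-C(\epsilon)w_j}$, so $\rho_1-\rho_0\to 0$ and the two conditional flip probabilities share the limit $\frac{1}{1+e^{C(\epsilon)w_j}}$, which is exactly the asymptotic independence claimed (and matches Theorem~\ref{theorem:flip_probs_mainpaper}).

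The main obstacle is the engine step: establishing the exponential-rate asymptotics of $S(b)$ and, crucially, that $\frac{S(b-w_j)}{S(b)}$ converges to a limit that is \emph{continuous} in the per-record budget, so that shifting the budget by a fixed $O(1)$ amount (and deleting two fixed coordinates) cannot change the limit. Making this rigorous requires local control of the tilted lower tail (a local-CLT or uniform-integrability argument), and this is precisely where assumptions (1)--(3)---large $N$, a continuous positive weight density on $[0,1]$, and $\epsilon$ below half the average weight---are needed, as they guarantee a finite, positive, uniquely determined tilting parameter $C(\epsilon)$ and the smoothness of $\psi$ and $C(\cdot)$.
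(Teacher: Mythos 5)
Your proposal follows essentially the same route as the paper's proof: both express the conditional flip probabilities as ratios of Rashomon-set counts over the $N-2$ records excluding $i$ and $j$ (your $\rho_0,\rho_1$ are exactly the paper's odds ratios), and both conclude by observing that the two relevant per-record budgets, $(N\epsilon-w_i)/(N-2)$ and $N\epsilon/(N-2)$, converge to the same limit $\epsilon$, so the size ratios share the common limit $e^{-C(\epsilon)w_j}$. The only real difference is how the engine is justified: the paper simply posits the exponential growth rate $\log B(\epsilon)$ and invokes ``the definition of the derivative,'' whereas you derive the rate via exponential tilting and explicitly flag that converting the $\tfrac{1}{N}\log$ asymptotics into a statement about the $O(1)$ ratio $S(b-w_j)/S(b)$ requires local control of the tilted lower tail---a gap that the paper's derivative step glosses over in exactly the same place.
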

\begin{proof}

We consider two cases:

\textbf{Case 1: $w_j = 0$.}

When $w_j = 0$, flipping element $j$ has no impact on total error. Therefore $q_{N,j} = \frac{1}{2}$ regardless of $\theta_i$, so $\lim_{N\rightarrow \infty} (q_{N,j} \:|\: \theta_i = 1) = \lim_{N \rightarrow \infty} (q_{N,j} \:|\: \theta_i = 0) = \frac{1}{2}.$

\textbf{Case 2: $w_j \ne 0$.}

We compute the asymptotic odds ratio $\lim_{N\rightarrow\infty} \frac{q_{N,j}}{1-q_{N,j}}$ for both $\theta_i = 0$ and $\theta_i = 1$, and show that these two quantities are equal.

(1) For $\theta_i = 0$, using identical logic to Lemma~\ref{lemma:size} above, the odds ratio is:
\[
\frac{q_{N.j}}{1 - q_{N,j}} = 
\frac{\left|R_{N,-i,-j}\left(\frac{N\epsilon-w_j}{N-2}\right)\right|}
{\left|R_{N,-i,-j}\left(\frac{N\epsilon}{N-2}\right)\right|},
\]
where $\left|R_{N,-i,-j}(\epsilon)\right|$ is the size of the Rashomon set with error tolerance $\epsilon$ over the $N-2$ elements of $\langle d_1,\ldots,d_N \rangle$ excluding $d_i$ and $d_j$.

Next, we define $\log B(\epsilon) = \lim_{N \rightarrow \infty} \frac{\log |R_N(\epsilon)|}{N}$, and note that, since $R_N(\epsilon)$ has minimum size 1 (for $\epsilon=0$) and maximum size $2^N$ (for large $\epsilon$), $B(\epsilon) \in [1,2]$ for all $0 \le \epsilon \le 1$. We can also write $\log B(\epsilon) = \lim_{N \rightarrow\infty} \frac{\log|R_{N,-i,-j}(\epsilon)|}{N-2}$.  Taking the logarithm of the expression above and letting $N\rightarrow\infty$, we obtain:
\begin{align*}
\lim_{N \rightarrow \infty} \log \left(\frac{q_{N.j}}{1 - q_{N,j}}\right) &= 
\lim_{N \rightarrow \infty} \left(  \log \left| R_{N,-i,-j}\left(
\frac{N\epsilon-w_j}{N-2}
\right) \right| - \log \left| R_{N,-i,-j}\left(
\frac{N\epsilon}{N-2}
\right) \right| \right) \\
&= \lim_{N \rightarrow \infty} (N-2)\left(\log B\left( \frac{N\epsilon - w_j}{N-2} \right) - \log B\left( \frac{N\epsilon}{N-2} \right)\right).
\end{align*}

By the definition of the derivative of $\log B$, and noting that $-\frac{w_j}{N-2} \rightarrow 0$ as $N\rightarrow\infty$, we can write:
\[ \lim_{N\rightarrow\infty} \left(\log B\left( \frac{N\epsilon - w_j}{N-2} \right) - \log B\left( \frac{N\epsilon}{N-2} \right)\right)
= \lim_{N\rightarrow\infty} -\frac{w_j}{N-2}\left( \frac{d \log B}{d\epsilon} \right)\bigg|_{\frac{N\epsilon}{N-2}}, \]
and thus,
\begin{align*}
\lim_{N \rightarrow \infty} \log \left(\frac{q_{N.j}}{1 - q_{N,j}}\right) &= \lim_{N\rightarrow\infty} -w_j\left( \frac{d \log B}{d\epsilon} \right)\bigg|_{\frac{N\epsilon}{N-2}} \\
&= -w_j\left( \frac{d \log B}{d\epsilon} \right)\bigg|_\epsilon.
\end{align*}

(2) For $\theta_i = 1$, using identical logic to Lemma~\ref{lemma:size} above, the odds ratio is:
\[
\frac{q_{N.j}}{1 - q_{N,j}} = 
\frac{\left|R_{N,-i,-j}\left(\frac{N\epsilon-w_i-w_j}{N-2}\right)\right|}
{\left|R_{N,-i,-j}\left(\frac{N\epsilon-w_i}{N-2}\right)\right|}.
\]

Taking the logarithm and letting $N\rightarrow\infty$, we obtain:
\begin{align*}
\lim_{N \rightarrow \infty} \log \left(\frac{q_{N.j}}{1 - q_{N,j}}\right) &= 
\lim_{N \rightarrow \infty} \left(  \log \left| R_{N,-i,-j}\left(
\frac{N\epsilon-w_i-w_j}{N-2}
\right) \right| - \log \left| R_{N,-i,-j}\left(
\frac{N\epsilon-w_i}{N-2}
\right) \right| \right) \\
&= \lim_{N \rightarrow \infty} (N-2)\left(\log B\left( \frac{N\epsilon - w_i-w_j}{N-2} \right) - \log B\left( \frac{N\epsilon-w_i}{N-2} \right)\right).
\end{align*}

By the definition of the derivative of $\log B$, and noting that $-\frac{w_j}{N-2} \rightarrow 0$ as $N\rightarrow\infty$, we can write:
\[ \lim_{N\rightarrow\infty} \left(\log B\left( \frac{N\epsilon - w_i - w_j}{N-2} \right) - \log B\left( \frac{N\epsilon - w_i}{N-2} \right)\right)
= \lim_{N\rightarrow\infty} -\frac{w_j}{N-2}\left( \frac{d \log B}{d\epsilon} \right)\bigg|_{\frac{N\epsilon-w_i}{N-2}}, \]
and thus,
\begin{align*}
\lim_{N \rightarrow \infty} \log \left(\frac{q_{N.j}}{1 - q_{N,j}}\right) &= \lim_{N\rightarrow\infty} -w_j\left( \frac{d \log B}{d\epsilon} \right)\bigg|_{\frac{N\epsilon-w_i}{N-2}} \\
&= -w_j\left( \frac{d \log B}{d\epsilon} \right)\bigg|_\epsilon.
\end{align*}

Thus for both $\theta_i = 0$ and $\theta_i = 1$, $\lim_{N \rightarrow \infty} \log \left(\frac{q_{N.j}}{1 - q_{N,j}}\right) = -w_j\left( \frac{d \log B}{d\epsilon} \right)\bigg|_\epsilon$, and the proof is completed.
\end{proof}

\begin{lemma}[Functional form of flip probabilities]\label{lemma:funcform}
Let $D_N = \langle d_1,d_2,\ldots,d_N \rangle$ be data records drawn i.i.d. from distribution $D$ with corresponding Bayes-optimal probabilities $P_N = \langle p_1,p_2,\ldots,p_N \rangle$ and corresponding weights $W_N = \langle w_1,w_2,\ldots,w_N \rangle$, where $w_i = |2p_i-1|$.  Assume $w_i \sim W$ where distribution $W$ has pdf $f(w) > 0$ for $w \in [0,1]$.  Let $R_N(\epsilon)$ denote the Rashomon set of models for error tolerance $\epsilon$ defined over data records $\langle d_1, \ldots, d_N\rangle$.  Consider the flip probabilities $q_{N,i}$ corresponding to Rashomon set $R_N(\epsilon)$, and define $q_i = \lim_{N \rightarrow \infty} q_{N,i}$.  Then we can write:
\[
q_i = \frac{1}{1 + \exp(C w_i)},
\]
where $C$ is constant for a given $\epsilon$ and a given weight distribution $W$.
\end{lemma}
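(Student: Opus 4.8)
The plan is to reduce this lemma to the size-ratio identity of Lemma~\ref{lemma:size} and then extract the asymptotic log-odds via the same exponential-growth-rate argument already used in Lemma~\ref{lemma:independence}. Starting from Lemma~\ref{lemma:size}, I would first rewrite the flip-probability odds ratio as
\[
\frac{q_{N,i}}{1-q_{N,i}} = \frac{\left|R_{N,-i}\!\left(\tfrac{N\epsilon-w_i}{N-1}\right)\right|}{\left|R_{N,-i}\!\left(\tfrac{N\epsilon}{N-1}\right)\right|},
\]
so that understanding $q_i$ reduces to controlling the logarithmic difference of two Rashomon-set sizes over the same $N-1$ records at two slightly different tolerances that both approach $\epsilon$.

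Next I would invoke the exponential growth rate $\log B(\epsilon) = \lim_{N\to\infty}\frac{\log|R_N(\epsilon)|}{N}$ introduced in the proof of Lemma~\ref{lemma:independence}, together with $B(\epsilon)\in[1,2]$. Because the records are i.i.d., dropping a single record does not change this rate, so $\frac{\log|R_{N,-i}(\epsilon)|}{N-1}\to\log B(\epsilon)$ as well. Taking logarithms of the odds ratio and factoring out $N-1$, I would write
\[
\log\!\left(\frac{q_{N,i}}{1-q_{N,i}}\right) = (N-1)\left(\log B\!\left(\tfrac{N\epsilon-w_i}{N-1}\right) - \log B\!\left(\tfrac{N\epsilon}{N-1}\right)\right) + o(1),
\]
where the $o(1)$ collects the discrepancy between $\frac{\log|R_{N,-i}(x)|}{N-1}$ and $\log B(x)$ at the two evaluation points. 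Since the two arguments converge to $\epsilon$ and differ by exactly $-\tfrac{w_i}{N-1}$, the definition of the derivative yields
\[
\lim_{N\to\infty}\log\!\left(\frac{q_{N,i}}{1-q_{N,i}}\right) = -\,w_i\,\frac{d\log B}{d\epsilon}\bigg|_\epsilon =: -\,C\,w_i,
\]
exactly mirroring the $\theta_i=0$ computation of Lemma~\ref{lemma:independence} (here dropping one record rather than two). Solving $\frac{q_i}{1-q_i}=\exp(-Cw_i)$ for $q_i$ gives $q_i = \frac{1}{1+\exp(Cw_i)}$, and $C=\frac{d\log B}{d\epsilon}\big|_\epsilon$ depends only on $\epsilon$ and the weight distribution $W$ (through $B$), as claimed.

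The main obstacle will be making the derivative step fully rigorous: justifying that the $o(1)$ term genuinely vanishes, and that $(N-1)$ times the difference of $\log B$ at the \emph{shifting} base point $\tfrac{N\epsilon}{N-1}$ (rather than at the fixed point $\epsilon$) converges to $-w_i\,\frac{d\log B}{d\epsilon}\big|_\epsilon$. This requires the existence, and ideally the local continuity, of the derivative of $\log B$ near $\epsilon$---the same smoothness of the growth rate that Lemma~\ref{lemma:independence} already relies on---together with a uniform control of the convergence $\frac{\log|R_{N,-i}(x)|}{N-1}\to\log B(x)$ on a neighborhood of $\epsilon$. A useful sanity check I would record is that, since $\log B$ is nondecreasing in $\epsilon$, we have $C\ge 0$, so $q_i\le\tfrac12$, consistent with the earlier observation that $0<q_{N,i}<\tfrac12$.
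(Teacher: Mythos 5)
Your proof is correct at the same level of rigor as the paper's own arguments, but it takes a genuinely different route. The paper proves this lemma by a combinatorial exchange argument: it picks triples of records with $w_i + w_j = w_k$, observes that swapping the configuration $(\theta_i,\theta_j,\theta_k)=(1,1,0)$ for $(0,0,1)$ preserves total weight and hence membership in $R_N(\epsilon)$, invokes the asymptotic independence of Lemma~\ref{lemma:independence} to equate $q_i q_j(1-q_k)$ with $(1-q_i)(1-q_j)q_k$, and thereby obtains Cauchy's functional equation $h(w_i)+h(w_j)=h(w_i+w_j)$ for the log-odds $h(w)$, with monotonicity of $q(w)$ ruling out pathological solutions and forcing $h(w)=-Cw$. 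You instead go directly from the size ratio of Lemma~\ref{lemma:size} to the growth rate $\log B(\epsilon)$ and differentiate---which is essentially the same computation the paper performs \emph{inside} its proof of Lemma~\ref{lemma:independence} (the $\theta_i=0$ case, dropping one record rather than two), so your route short-circuits the chain: it bypasses the independence lemma, the triple construction, and the Cauchy equation entirely, and it immediately identifies $C=\frac{d\log B}{d\epsilon}\big|_\epsilon$, which makes the later integration $B(\epsilon)=\exp\left(\int_0^\epsilon C(x)\,dx\right)$ in Lemma~\ref{lemma:asymptotic_size} transparent. The cost, which you correctly flag, is that you must assume the limit $\log B(\epsilon)=\lim_N \frac{\log|R_N(\epsilon)|}{N}$ exists, is differentiable near $\epsilon$, and is approached uniformly enough that the discrepancy survives multiplication by $N-1$; but the paper's proof of Lemma~\ref{lemma:independence}---on which its own proof of this lemma depends---makes exactly the same unjustified moves, so you are not worse off. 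Indeed your route arguably closes a small gap in the paper: writing $\Pr[(\theta_i,\theta_j,\theta_k)=(1,1,0)]=q_iq_j(1-q_k)$ requires three-way, not merely pairwise, independence, an issue your argument never encounters.
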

\begin{proof}
Consider any three data records $d_i$, $d_j$, and $d_k$ such that $w_i + w_j = w_k$.  We know that such triples exist as $N\rightarrow\infty$ because of the continuity and positivity of the weight distribution. 
Next we consider any pair of flip vectors $\theta, \theta' \in \{0,1\}^N$ such that $(\theta_i, \theta_j, \theta_k) = (1,1,0)$,
$(\theta'_i, \theta'_j, \theta'_k) = (0,0,1)$, and $\theta_l = \theta'_l$ for all $l \in \{1,2,\ldots,N\}$, $l \not\in \{i,j,k\}$.
The total weight is the same for both vectors: $\theta \cdot W_N = w_i + w_j + W_{rest} = w_k + W_{rest} = \theta' \cdot W_N$, where $W_{rest} = \sum_{l \in \{1,2,\ldots,N\}, l \not\in \{i,j,k\}} \theta_l w_l$, and thus either both flip vectors $\theta, \theta' \in R_N(\epsilon)$, or both flip vectors $\theta, \theta' \not\in R_N(\epsilon)$.  This means that, for flip vectors $\theta \in R_N(\epsilon)$, the probability that $(\theta_i, \theta_j, \theta_k) = (1,1,0)$ and the probability that $(\theta_i,\theta_j,\theta_k) = (0,0,1)$ are equal.  For $N\rightarrow\infty$, pairwise independence (Lemma~\ref{lemma:independence}) allows us to write these probabilities as $q_i(q_j)(1-q_k)$ and $(1-q_i)(1-q_j)q_k$ respectively, and thus $q_i(q_j)(1-q_k) = (1-q_i)(1-q_j)q_k$.  We can then rearrange terms and take the logarithm: 
\[ \log\left(\frac{q_i}{1-q_i}\right) + \log\left(\frac{q_j}{1-q_j}\right) = \log\left(\frac{q_k}{1-q_k}\right). \]

This establishes that for any data elements $d_i$, $d_j$, and $d_k$ with $w_i + w_j = w_k$, $h(w_i) + h(w_j) = h(w_k)$, where the function $h(w) = \log\left(\frac{q(w)}{1-q(w)}\right)$  The equation \( h(w_i) + h(w_j) = h(w_i + w_j) \) is Cauchy's functional equation for additive functions.

Moreover, the function $h(w)$ is monotone for $w\in (0,1)$.  To see this, we note that flip probability $q(w)$ is monotonically decreasing with $w$, since for every possible configuration $\theta_{-i}$, higher weight $w_i$ monotonically decreases (i.e., does not increase) the probability that $\theta_i = 1$ is in the Rashomon set, and does not change the probability that $\theta_i = 0$ is in the Rashomon set. Moreover, $\log\left(\frac{q}{1-q}\right)$ is increasing with $q$, so 
$h(w) = \log\left(\frac{q(w)}{1-q(w)}\right)$ is monotone.

Monotonicity of $h(w)$ is a sufficient condition for ensuring that the Cauchy functional equation does not have pathological (non-linear) solutions, and thus the only solutions are linear functions $h(w) = -C w$, where $C$ is a constant (for a given $W$ and $\epsilon$). 
Therefore, the log-odds of the flip probability is proportional to the weight:
\[
\log\left( \frac{q_i}{1 - q_i} \right) = -C w_i.
\]

Finally, the flip probability \( q_i \) can be expressed as:
  \[
  q_i = \frac{1}{1 + \exp(C w_i)}.
  \]
which completes the proof. (Note that the value of $C$, as a function of $\epsilon$, will be obtained in Corollary~\ref{corr:C} below.)
\end{proof}

\begin{lemma}[Asymptotic size of Rashomon set as a function of $C$]\label{lemma:asymptotic_size}
Let $D_N = \langle d_1,d_2,\ldots,d_N \rangle$ be data records drawn i.i.d. from distribution $D$ with corresponding Bayes-optimal probabilities $P_N = \langle p_1,p_2,\ldots,p_N \rangle$ and corresponding weights $W_N = \langle w_1,w_2,\ldots,w_N \rangle$, where $w_i = |2p_i-1|$.  Assume $w_i \sim W$ where distribution $W$ has pdf $f(w) > 0$ for $w \in [0,1]$.  Let $R_N(\epsilon)$ denote the Rashomon set of models for error tolerance $\epsilon$ defined over data records $\langle d_1, \ldots, d_N\rangle$.  Let $C(\epsilon)$ denote the constant in the asymptotic flip probability, $q_i = \lim_{N\rightarrow\infty} q_{N,i} = \frac{1}{1+\exp(C(\epsilon)w_i)}$, for error tolerance $\epsilon$. Then: 
\[
\lim_{N\rightarrow\infty} \frac{\log |R_N(\epsilon)|}{N} = \log B(\epsilon),
\]
where
\[
B(\epsilon) = \exp\left(\int_0^\epsilon C(x) dx\right).
\]
\end{lemma}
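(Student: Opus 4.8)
The plan is to reduce the claim to the single differential identity
\[
\frac{d}{d\epsilon} \log B(\epsilon) = C(\epsilon),
\]
and then to integrate it. Once this identity is in hand, the Fundamental Theorem of Calculus gives $\log B(\epsilon) - \log B(0) = \int_0^\epsilon C(x)\,dx$, and it remains only to pin down the boundary value $\log B(0)$. Since the weight distribution is continuous and positive on $[0,1]$, all weights $w_i$ are strictly positive almost surely, so the only flip vector satisfying $\frac{\theta \cdot W_N}{N} \le 0$ is $\theta_0$; hence $|R_N(0)| = 1$, giving $\log B(0) = \lim_{N\to\infty} \frac{\log 1}{N} = 0$. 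This yields $\log B(\epsilon) = \int_0^\epsilon C(x)\,dx$ and therefore $B(\epsilon) = \exp\!\left(\int_0^\epsilon C(x)\,dx\right)$, exactly as claimed.

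To establish the differential identity, I would connect the size of the Rashomon set to the flip probabilities through Lemma~\ref{lemma:size}. Rewriting that lemma as an odds ratio gives
\[
\frac{q_{N,i}}{1 - q_{N,i}} = \frac{\left|R_{N,-i}\!\left(\frac{N\epsilon - w_i}{N-1}\right)\right|}{\left|R_{N,-i}\!\left(\frac{N\epsilon}{N-1}\right)\right|}.
\]
Taking logarithms and applying the definition $\log B(\epsilon) = \lim_{N\to\infty} \frac{\log |R_N(\epsilon)|}{N}$ to the $N-1$ records underlying $R_{N,-i}$, the logarithm of the odds ratio behaves like $(N-1)\left[\log B\!\left(\frac{N\epsilon - w_i}{N-1}\right) - \log B\!\left(\frac{N\epsilon}{N-1}\right)\right]$. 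Both arguments converge to $\epsilon$ and differ by $-\frac{w_i}{N-1}$, so by the definition of the derivative this converges to $-w_i \left.\frac{d \log B}{d\epsilon}\right|_\epsilon$. This is precisely the computation already carried out inside the proof of Lemma~\ref{lemma:independence} (there done under conditioning on $\theta_i$, here directly via Lemma~\ref{lemma:size}), so I would reuse it to obtain $\log\!\left(\frac{q_i}{1 - q_i}\right) = -w_i \left.\frac{d \log B}{d\epsilon}\right|_\epsilon$.

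On the other hand, Lemma~\ref{lemma:funcform} supplies the functional form $\log\!\left(\frac{q_i}{1-q_i}\right) = -C(\epsilon)\, w_i$. Equating the two expressions, and noting that they must agree over a continuum of weight values $w_i \in (0,1)$ (which exist by positivity of $f$), forces $\frac{d \log B}{d\epsilon} = C(\epsilon)$, completing the derivation of the differential identity and hence the lemma. The main obstacle here is analytic rather than algebraic: one must justify that $\log B(\epsilon)$ is genuinely differentiable and that the convergence $\frac{\log |R_N(\epsilon)|}{N} \to \log B(\epsilon)$ is regular enough that multiplying the vanishing difference of $\log B$ values by the growing factor $(N-1)$ recovers the derivative in the limit (i.e., that the $o(1/N)$ correction terms do not survive the scaling). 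I would address this by noting that $C(\epsilon) = g^{-1}(\epsilon)$ is continuous, inheriting continuity and strict monotonicity from $g(C) = \int_0^1 \frac{w f(w)}{1+\exp(Cw)}\,dw$; this makes $\int_0^\epsilon C(x)\,dx$ well defined and $\log B$ continuously differentiable, so the Fundamental Theorem of Calculus step in the first paragraph is valid.
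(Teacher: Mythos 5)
Your proposal is correct and follows essentially the same route as the paper's proof: both combine Lemma~\ref{lemma:size} with the functional form from Lemma~\ref{lemma:funcform}, interpret the resulting difference of log-Rashomon-set sizes as the derivative $\frac{d \log B}{d\epsilon}$ evaluated at $\epsilon$, and integrate using the boundary condition $|R_N(0)|=1$. Your explicit flagging of the regularity needed to pass from the scaled finite-$N$ differences to the derivative of $\log B$ is a point the paper glosses over, but it does not change the argument.
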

\begin{proof}
From Lemma~\ref{lemma:size}, we know
\[
q_i = \lim_{N\rightarrow\infty} \frac{\left|R_{N,-i}\left(\frac{N\epsilon-w_i}{N-1}\right)\right|}
{\left|R_{N,-i}\left(\frac{N\epsilon-w_i}{N-1}\right)\right| 
+ \left|R_{N,-i}\left(\frac{N\epsilon}{N-1}\right)\right|},
\]
where $R_{N,-i}(\epsilon)$ is the Rashomon set of models for error tolerance $\epsilon$ defined over the $N-1$ data records $\langle d_1, \ldots, d_{i-1}, d_{i+1}, \ldots, d_N\rangle$. And from Lemma~\ref{lemma:funcform}, we know that 
$q_i = \frac{1}{1 + \exp(C(\epsilon) w_i)}$.  Setting these quantities equal to each other, we have:

\[
\lim_{N\rightarrow\infty} \frac{\left|R_{N,-i}\left(\frac{N\epsilon-w_i}{N-1}\right)\right|}
{\left|R_{N,-i}\left(\frac{N\epsilon-w_i}{N-1}\right)\right| 
+ \left|R_{N,-i}\left(\frac{N\epsilon}{N-1}\right)\right|} = \frac{1}{1 + \exp(C(\epsilon) w_i)}.
\]

Inverting both sides:

\[
\lim_{N\rightarrow\infty} \left( 1 + \frac{\left|R_{N,-i}\left(\frac{N\epsilon}{N-1}\right)\right|}
{\left|R_{N,-i}\left(\frac{N\epsilon-w_i}{N-1}\right)\right|} \right) = 1 + \exp(C(\epsilon) w_i).
\]

Subtracting 1 and taking the logarithm of both sides:

\[
\lim_{N\rightarrow\infty}
\left( \log\left|R_{N,-i}\left(\frac{N\epsilon}{N-1}\right)\right| -
\log\left|R_{N,-i}\left(\frac{N\epsilon-w_i}{N-1}\right)\right|\right) = C(\epsilon) w_i.
\]

Dividing both sides by $w_i$:

\[
\lim_{N\rightarrow\infty}
\frac{ \log\left|R_{N,-i}\left(\frac{N\epsilon}{N-1}\right)\right| -
\log\left|R_{N,-i}\left(\frac{N\epsilon-w_i}{N-1}\right)\right|
}{(N-1)\frac{w_i}{N-1}} = C(\epsilon).
\]

By the definition of derivative, noting that $\frac{w_i}{N-1} \rightarrow 0$ as $N\rightarrow\infty$:

\[
\lim_{N\rightarrow\infty}
\left(\frac{1}{N-1}\right) \frac{d \log\left|R_{N,-i}(\epsilon)\right|
}{d\epsilon}\bigg|_\frac{N\epsilon}{N-1} = C(\epsilon).
\]

Equivalently, we can write:

\[
\lim_{N\rightarrow\infty}
\left(\frac{1}{N}\right) \frac{d \log\left|R_N(\epsilon)\right|
}{d\epsilon}\bigg|_\epsilon = C(\epsilon).
\]

Integrating both sides with respect to $\epsilon$:

\[
\lim_{N\rightarrow\infty}
\left(\frac{1}{N}\right) \log\left|R_N(\epsilon)\right| = \int_0^\epsilon C(x) \, dx + \text{constant}
\]

We know that the constant is 0 since, for $\epsilon=0$, we have:
\[
\lim_{N\rightarrow\infty} \left|R_N(\epsilon)\right| = 1.
\]

Thus we have:

\[
\lim_{N\rightarrow\infty}
\frac{\log\left|R_N(\epsilon)\right|}{N}  = \int_0^\epsilon C(x) \, dx.
\]

Finally, defining $B(\epsilon) =\exp\left(\int_0^\epsilon C(x) dx\right)$, we can write:

\[
\lim_{N\rightarrow\infty}
\frac{\log\left|R_N(\epsilon)\right|}{N}  = \log B(\epsilon).
\]
\end{proof}

\begin{definition}[Average accuracy of models in the Rashomon set]\label{def:avg_acc}
Let $D_N = \langle d_1,d_2,\ldots,d_N \rangle$ be data records drawn i.i.d. from distribution $D$ with corresponding Bayes-optimal probabilities $P_N = \langle p_1,p_2,\ldots,p_N \rangle$ and corresponding weights $W_N = \langle w_1,w_2,\ldots,w_N \rangle$, where $w_i = |2p_i-1|$.  Let $R_N(\epsilon)$ denote the Rashomon set of models for error tolerance $\epsilon$ defined over data records $\langle d_1, \ldots, d_N\rangle$, and consider the corresponding flip probabilities $q_{N,i}$.  Then the average accuracy of models $\theta \in R_N(\epsilon)$ can be written as:
\begin{align*}
\overline{acc}(R_N(\epsilon)) &= \frac{1}{|R_N(\epsilon)|} \sum_{\theta\in R_N(\epsilon)} acc(\theta) \\
&= acc_N(\theta_0) - \frac{1}{|R_N(\epsilon)|} \sum_{\theta\in R_N(\epsilon)} \frac{\theta \cdot W_N}{N} \\
&= acc_N(\theta_0) - \frac{1}{|R_N(\epsilon)|} \sum_{\theta\in R_N(\epsilon)} \sum_{i=1\ldots N}\frac{\theta_i w_i}{N} \\
&= acc_N(\theta_0) - \frac{1}{N} \sum_{i=1\ldots N} w_i \frac{1}{|R_N(\epsilon)|} \sum_{\theta\in R_N(\epsilon)} \theta_i \\
&= acc_N(\theta_0) - \frac{1}{N} \sum_{i=1\ldots N} w_i q_{N,i}, \\
\end{align*}
where $acc_N(\theta_0)$ is the Bayes-optimal accuracy for data elements $\langle d_1,\ldots, d_N\rangle$.

Further, assume $w_i \sim W$ where distribution $W$ has pdf $f(w) > 0$ for $w \in [0,1]$.  Then we can write the asymptotic average accuracy as:
\[
\lim_{N\rightarrow\infty} \overline{acc}(R_N(\epsilon)) =  acc(\theta_0) - \int_0^1 w \, q(w) \, f(w) \, dw,
\]
where $acc(\theta_0)$ is the asymptotic Bayes-optimal accuracy, $acc(\theta_0)=\lim_{N\rightarrow\infty} acc_N(\theta_0)$, and $q(w)$ denotes the asymptotic flip probability $q_i = \lim_{N\rightarrow\infty} q_{N,i}$ corresponding to an element $i$ with weight $w$.

\end{definition}

\begin{theorem}[Asymptotic use of the entire error tolerance]\label{theorem:all_error}

Let $D_N = \langle d_1,d_2,\ldots,d_N \rangle$ be data records drawn i.i.d. from distribution $D$ with corresponding Bayes-optimal probabilities $P_N = \langle p_1,p_2,\ldots,p_N \rangle$ and corresponding weights $W_N = \langle w_1,w_2,\ldots,w_N \rangle$, where $w_i = |2p_i-1|$.  Assume $w_i \sim W$ where distribution $W$ has pdf $f(w) > 0$ for $w \in [0,1]$. Let $R_N(\epsilon)$ denote the Rashomon set of models for error tolerance $\epsilon$ defined over data records $\langle d_1, \ldots, d_N\rangle$, and assume that $\epsilon$ is less than half of the average weight, i.e., $\int_{0}^{1} w \, f(w) \, dw > 2\epsilon$. Let $q(w)$ denote the asymptotic flip probability $q_i = \lim_{N\rightarrow\infty} q_{N,i}$ corresponding to an element $i$ with weight $w$.

Then as $N\rightarrow\infty$, the average error tolerance used by models in the Rashomon set converges to $\epsilon$:
\[
acc(\theta_0) - \lim_{N\rightarrow\infty} \overline{acc}(R_N(\epsilon)) = \int_{0}^{1} w \, q(w) \, f(w) \, dw = \epsilon.
\]
\end{theorem}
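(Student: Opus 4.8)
The plan is to reduce the theorem to a concentration statement about where the mass of the Rashomon set sits, and then use Lemma~\ref{lemma:asymptotic_size} to control that concentration. The first equality, $acc(\theta_0) - \lim_{N\to\infty}\overline{acc}(R_N(\epsilon)) = \int_0^1 w\,q(w)\,f(w)\,dw$, is immediate from Definition~\ref{def:avg_acc}, so all the real work is in showing that this integral equals $\epsilon$. Writing $E(\theta) = \frac{\theta\cdot W_N}{N}$ for the error used by a flip vector, the quantity in question is exactly $\lim_{N\to\infty}\mathbb{E}_{\theta\sim\mathrm{Unif}(R_N(\epsilon))}[E(\theta)]$, the average error used by a uniformly random model in the Rashomon set. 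One direction is free: every $\theta\in R_N(\epsilon)$ satisfies $E(\theta)\le\epsilon$ by definition, so the average is $\le\epsilon$ for every $N$. The theorem therefore amounts to a matching lower bound, i.e., showing that almost all of the Rashomon set concentrates on the boundary $E(\theta)\approx\epsilon$.

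First I would fix $\delta>0$ and bound the fraction of the set lying strictly inside the boundary, $\frac{|R_N(\epsilon-\delta)|}{|R_N(\epsilon)|}$, using Lemma~\ref{lemma:asymptotic_size}: taking $\frac{1}{N}\log$ of this ratio gives $\log B(\epsilon-\delta) - \log B(\epsilon) \to -\int_{\epsilon-\delta}^{\epsilon} C(x)\,dx$. Provided this exponent is strictly negative, the ratio decays exponentially in $N$ and hence vanishes. Then I would split the average, $\mathbb{E}[E]\ge (\epsilon-\delta)\bigl(1 - \tfrac{|R_N(\epsilon-\delta)|}{|R_N(\epsilon)|}\bigr)\to\epsilon-\delta$ (using $E(\theta)\ge 0$ for all $\theta$), and let $\delta\to 0$ to conclude $\lim_N \mathbb{E}[E]\ge\epsilon$; combined with the trivial upper bound this forces equality. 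As a byproduct, since this same average equals $g(C(\epsilon)) = \int_0^1 \frac{w f(w)}{1+\exp(C(\epsilon)w)}\,dw$ after substituting the form of $q(w)$ from Lemma~\ref{lemma:funcform}, we recover $g(C(\epsilon))=\epsilon$.

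The crux — and the only place the hypothesis $\int_0^1 w f(w)\,dw > 2\epsilon$ is used — is verifying $\int_{\epsilon-\delta}^{\epsilon} C(x)\,dx > 0$, equivalently $C(x) > 0$ throughout $[\epsilon-\delta,\epsilon]$. I would establish this without circularity as follows. For any tolerance $x < \tfrac12\int_0^1 w f(w)\,dx = g(0)$, the trivial upper bound applied at $x$ gives $g(C(x)) = \lim_N \mathbb{E}_{\theta\sim\mathrm{Unif}(R_N(x))}[E(\theta)] \le x < g(0)$; since $g(C)=\int_0^1\frac{wf(w)}{1+\exp(Cw)}\,dw$ is continuous and strictly decreasing in $C$ (the integrand strictly decreases in $C$ for every $w>0$, using positivity of $f$), this forces $C(x) > 0$. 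Choosing $\delta$ small enough that $\epsilon < g(0)$ keeps $[\epsilon-\delta,\epsilon]\subset[0,g(0))$, so $C>0$ on the whole interval and the exponent above is strictly negative, as required. I expect the main obstacle to be exactly this positivity argument: it is where the ``less than half the average weight'' assumption enters, and where one must be careful not to assume the relation $g(C(\epsilon))=\epsilon$ that the theorem is itself establishing. Note, finally, that when the assumption fails ($\epsilon \ge g(0)$) the same reasoning gives $C\to 0$, the mass no longer concentrates at $\epsilon$, and the average error used saturates at $g(0)=\tfrac12\mathbb{E}[w]<\epsilon$, consistent with the remark following the theorem.
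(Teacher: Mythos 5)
Your proposal is correct and follows essentially the same route as the paper's proof: the trivial upper bound $\mathbb{E}[E]\le\epsilon$, positivity of $C$ on $[0,\epsilon]$ derived from the half-average-weight assumption (you via monotonicity of $g$, the paper via the equivalent contradiction $C\le 0\Rightarrow q\ge\tfrac12\Rightarrow$ average error $>\epsilon$), and concentration of the Rashomon set near the boundary via the exponential ratio $|R_N(\epsilon)|/|R_N(\epsilon-\delta)|\to\infty$ from Lemma~\ref{lemma:asymptotic_size}, followed by letting $\delta\to 0$. Your explicit inequality $\mathbb{E}[E]\ge(\epsilon-\delta)\bigl(1-|R_N(\epsilon-\delta)|/|R_N(\epsilon)|\bigr)$ is a slightly more quantitative rendering of the paper's concentration step, but the argument is the same.
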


\begin{proof}

As every flip vector $\theta$ in the Rashomon set $R_N(\epsilon)$ must satisfy $\sum_{i=1}^N \frac{\theta_i w_i}{N} \leq \epsilon$, we have:

\begin{align*}
\frac{1}{|R_N(\epsilon)|} \sum_{\theta\in R_N(\epsilon)} \sum_{i=1\ldots N}\frac{\theta_i w_i}{N} &=
\frac{1}{N} \sum_{i=1\ldots N} w_i \frac{1}{|R_N(\epsilon)|} \sum_{\theta\in R_N(\epsilon)} \theta_i \\
&= \frac{1}{N} \sum_{i=1\ldots N} w_i q_{N,i} \\ 
&\le \epsilon.
\end{align*}

As $N\rightarrow\infty$, we can replace the summation with the integral:
\[
\int_{0}^{1} w \, q(w) \, f(w) \, dw \le \epsilon.
\]

Since $q(w) = \frac{1}{1 + \exp(C(\epsilon) \, w)} \) from Lemma~\ref{lemma:funcform}, we have:

\[
\int_{0}^{1} \frac{w}{1 + \exp(C(\epsilon) \, w)} \, f(w) \, dw \leq \epsilon.
\]

Next, the assumption $\int_{0}^{1} w \, f(w) \, dw > 2\epsilon$ ensures that $C(\epsilon) > 0$.  To see this, we first note that the expected error $\int_{0}^{1} w \, q(w) \, f(w) \, dw$ is a monotonically decreasing function of $C(\epsilon)$. Then if we assume $C(\epsilon) \le 0$, the flip probability becomes $q(w) \geq \frac{1}{2}$, leading to an average error of at least $\frac{1}{2} \int_{0}^{1} w \, f(w) \, dw > \epsilon$, which contradicts the fact above that the expected error can be at most $\epsilon$.

Now consider any $\delta > 0$ with $0 < \delta < \epsilon$. 
We apply Lemma~\ref{lemma:asymptotic_size} to compute the asymptotic sizes of the Rashomon sets at $\epsilon$ and $\epsilon - \delta$ respectively:

\[
\lim_{N\rightarrow\infty} \frac{\log|R_N(\epsilon)|}{N} = \int_{0}^{\epsilon} C(x) \, dx.
\]

and

\[
\lim_{N\rightarrow\infty} \frac{\log|R_N(\epsilon-\delta)|}{N} = \int_{0}^{\epsilon-\delta} C(x) \, dx.
\]

Thus we can write

\[
\lim_{N\rightarrow\infty}\frac{1}{N} \log\frac{|R_N(\epsilon)|}{|R_N(\epsilon - \delta)|} =  \left( \int_{0}^{\epsilon} C(x) \, dx - \int_{0}^{\epsilon - \delta} C(x) \, dx \right) = \int_{\epsilon - \delta}^{\epsilon} C(x) \, dx .
\]

Since $C(x) > 0$ for $x \le \epsilon$, we know that
$ \int_{\epsilon - \delta}^{\epsilon} C(x) \, dx > 0$, and $\exp\left(\int_{\epsilon - \delta}^{\epsilon} C(x) \, dx\right) > 1$. Therefore we have:

\[
\lim_{N\rightarrow\infty} \frac{|R_N(\epsilon)|}{|R_N(\epsilon - \delta)|} = \lim_{N\rightarrow\infty} \left( \exp\left( \int_{\epsilon - \delta}^{\epsilon} C(x) \, dx \right) \right)^N \rightarrow \infty.
\]

This implies that for large $N$, the number of flip vectors (models) with total error in the interval  $[\epsilon - \delta, \epsilon]$ dominates the Rashomon set. The proportion of flip vectors with error less than $\epsilon - \delta$ becomes negligible. Since almost all flip vectors in $R_N(\epsilon)$ have errors between $\epsilon - \delta$ and $\epsilon$, the asymptotic expected error $\int_{0}^{1} w \, q(w) \, f(w) \, dw$ is greater than or equal to $\epsilon - \delta$. Because $\delta > 0$ is arbitrary and can be made as small as desired, we have:

\[
\int_{0}^{1} w \, q(w) \, f(w) \, dw \ge \epsilon - \delta \quad \text{for all} \, \delta > 0.
\]

Combining with the initial inequality, we have:

\[
\epsilon - \delta \le \int_{0}^{1} w \, q(w) \, f(w) \, dw \le \epsilon \quad \text{for all} \, \delta > 0,
\]

and thus

\[
\int_{0}^{1} w \, q(w) \, f(w) \, dw = \epsilon,
\]

which completes the proof.
\end{proof}

\begin{corollary}[Value of $C$]\label{corr:C}
Let $D_N = \langle d_1,d_2,\ldots,d_N \rangle$ be data records drawn i.i.d. from distribution $D$ with corresponding Bayes-optimal probabilities $P_N = \langle p_1,p_2,\ldots,p_N \rangle$ and corresponding weights $W_N = \langle w_1,w_2,\ldots,w_N \rangle$, where $w_i = |2p_i-1|$.  Assume $w_i \sim W$ where distribution $W$ has pdf $f(w) > 0$ for $w \in [0,1]$. Let $R_N(\epsilon)$ denote the Rashomon set of models for error tolerance $\epsilon$ defined over data records $\langle d_1, \ldots, d_N\rangle$, and assume that $\epsilon$ is less than half of the average weight, i.e., $\int_{0}^{1} w \, f(w) \, dw > 2\epsilon$.

Then the value of $C(\epsilon)$ in Lemmas~\ref{lemma:funcform} and~\ref{lemma:asymptotic_size} can be written as $C(\epsilon) = g^{-1}(\epsilon)$, where $g(C) = \int_0^1 \frac{w f(w)}{1+\exp(Cw)} \, dw$.
\end{corollary}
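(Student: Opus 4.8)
The plan is to combine the two main results already established—Theorem~\ref{theorem:all_error} and Lemma~\ref{lemma:funcform}—and then verify that the resulting relationship between $C$ and $\epsilon$ can be inverted. Essentially, this corollary is an algebraic rearrangement of facts we already have in hand.

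First I would invoke Theorem~\ref{theorem:all_error}, which states that under the assumption $\int_0^1 w\,f(w)\,dw > 2\epsilon$, the asymptotic average error tolerance used by models in the Rashomon set satisfies
\[
\int_0^1 w\, q(w)\, f(w)\, dw = \epsilon.
\]
Next I would substitute the functional form of the asymptotic flip probability from Lemma~\ref{lemma:funcform}, namely $q(w) = \frac{1}{1+\exp(C(\epsilon)\,w)}$, into this identity. This immediately yields
\[
\int_0^1 \frac{w\, f(w)}{1+\exp(C(\epsilon)\,w)}\, dw = \epsilon,
\]
which is precisely the statement $g(C(\epsilon)) = \epsilon$ for the function $g(C) = \int_0^1 \frac{w\,f(w)}{1+\exp(Cw)}\,dw$ defined in the corollary.

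It then remains to conclude that $C(\epsilon) = g^{-1}(\epsilon)$, which requires that $g$ be invertible. I would establish this by showing $g$ is strictly monotone. Differentiating under the integral sign gives $g'(C) = -\int_0^1 \frac{w^2\, f(w)\,\exp(Cw)}{(1+\exp(Cw))^2}\, dw$, whose integrand is strictly positive on a set of positive measure (since $f(w) > 0$ on $(0,1]$), so $g'(C) < 0$ and $g$ is strictly decreasing, hence injective. Its range is $\left(0, \tfrac{1}{2}\int_0^1 w\,f(w)\,dw\right]$, so the assumption $\int_0^1 w\,f(w)\,dw > 2\epsilon$ is exactly what guarantees that $\epsilon$ lies in the range of $g$ (with a corresponding $C > 0$, consistent with the sign of $C(\epsilon)$ shown in Theorem~\ref{theorem:all_error}), making $g^{-1}(\epsilon)$ well-defined.

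The main obstacle here is minor, since the heavy lifting has already been done: Theorem~\ref{theorem:all_error} supplies the error-tolerance identity and Lemma~\ref{lemma:funcform} supplies the sigmoidal form of $q(w)$. The only point requiring genuine care is justifying the invertibility of $g$—both the strict monotonicity (which the proof of Theorem~\ref{theorem:all_error} already implicitly used when arguing that the expected error is monotonically decreasing in $C(\epsilon)$) and, for full rigor, the interchange of differentiation and integration, which is justified because the integrand and its $C$-derivative are bounded and continuous on the compact domain $[0,1]$.
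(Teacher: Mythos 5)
Your proposal is correct and follows essentially the same route as the paper's proof: combine the identity $\int_0^1 w\,q(w)\,f(w)\,dw = \epsilon$ from Theorem~\ref{theorem:all_error} with the form $q(w) = \frac{1}{1+\exp(C(\epsilon)w)}$ from Lemma~\ref{lemma:funcform} to obtain $g(C(\epsilon)) = \epsilon$. Your additional verification that $g$ is strictly decreasing (and hence invertible, with $\epsilon$ in its range under the stated assumption) is a welcome bit of extra rigor that the paper leaves implicit, but the argument is the same.
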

\begin{proof}
From Theorem~\ref{theorem:all_error}, we have
\[
\int_{0}^{1} w \, q(w) \, f(w) \, dw = \epsilon,
\]
and from Lemma~\ref{lemma:funcform}, we have
\[
q(w) = \frac{1}{1+\exp(C(\epsilon)\, w)}.
\]

Putting these together with the function $g$ defined above, we have 
\[
g(C(\epsilon)) = \int_0^1 \frac{w f(w)}{1+\exp(C(\epsilon)\, w)} \, dw = \epsilon,
\]
or equivalently, $C(\epsilon) = g^{-1}(\epsilon)$, and the proof is completed.
\end{proof}

\begin{corollary}[Value of $C$ for uniformly distributed weights]\label{corr:C_uniform}
Let $D_N = \langle d_1,d_2,\ldots,d_N \rangle$ be data records drawn i.i.d. from distribution $D$ with corresponding Bayes-optimal probabilities $P_N = \langle p_1,p_2,\ldots,p_N \rangle$ and corresponding weights $W_N = \langle w_1,w_2,\ldots,w_N \rangle$, where $w_i = |2p_i-1|$.  Assume $w_i \sim \text{Uniform}[0,1]$.   Let $R_N(\epsilon)$ denote the Rashomon set of models for error tolerance $\epsilon$ defined over data records $\langle d_1, \ldots, d_N\rangle$, and assume that $\epsilon$ is less than half of the average weight, i.e., $\int_{0}^{1} w \, f(w) \, dw > 2\epsilon$.

Then the value of $C(\epsilon)$ in Lemmas~\ref{lemma:funcform} and~\ref{lemma:asymptotic_size} can be written as $C(\epsilon) = g^{-1}(\epsilon)$, where:

\begin{align*}
g(C) &= \int_0^1 \frac{w}{1+\exp(Cw)} \, dw \\
&= \frac{12\, \text{Li}_2(-e^{-C}) - 12 C \log(e^{-C}+1) + \pi^2}{12 C^2}.
\end{align*}

Moreover, $C(\epsilon) < \frac{\pi}{\sqrt{12 \epsilon}}$,
and $C(\epsilon) \approx \frac{\pi}{\sqrt{12 \epsilon}}$ for small $\epsilon$.
\end{corollary}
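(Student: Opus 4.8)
The plan is to specialize Corollary~\ref{corr:C} to the uniform case $f(w) = 1$ on $[0,1]$, so that $g(C) = \int_0^1 \frac{w}{1+\exp(Cw)}\,dw$, evaluate this integral in closed form, and then read off both the upper bound and the small-$\epsilon$ asymptotics from that expression together with the monotonicity of $g$ established in Theorem~\ref{theorem:all_error}.

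First I would evaluate the integral. The substitution $u = Cw$ gives $g(C) = \frac{1}{C^2}\int_0^C \frac{u}{1+e^u}\,du$. Since $-\log(1+e^{-u})$ is an antiderivative of $\frac{1}{1+e^u}$, integration by parts rewrites the remaining integral as $-C\log(1+e^{-C}) + \int_0^C \log(1+e^{-u})\,du$. The key identity is $\frac{d}{du}\text{Li}_2(-e^{-u}) = \log(1+e^{-u})$, which follows by differentiating the series for $\text{Li}_2$ term by term and resumming $\sum_k \frac{(-1)^{k+1}e^{-ku}}{k} = \log(1+e^{-u})$. Thus $\int_0^C \log(1+e^{-u})\,du = \text{Li}_2(-e^{-C}) - \text{Li}_2(-1) = \text{Li}_2(-e^{-C}) + \frac{\pi^2}{12}$, using $\text{Li}_2(-1) = -\frac{\pi^2}{12}$. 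Collecting terms and multiplying through by $12$ yields exactly the claimed formula for $g(C)$.

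Next I would establish the upper bound. The crucial observation is that the numerator of $g(C)$ is strictly less than $\pi^2$ for every $C > 0$: indeed $12\,\text{Li}_2(-e^{-C}) - 12C\log(1+e^{-C}) < 0$, because $\text{Li}_2$ is increasing on $(-1,0)$ with $\text{Li}_2(0)=0$ (hence $\text{Li}_2(-e^{-C}) < 0$) while $C\log(1+e^{-C}) > 0$. Therefore $g(C) < \frac{\pi^2}{12C^2}$ for all $C > 0$. Evaluating at $C^\star = \frac{\pi}{\sqrt{12\epsilon}}$, which satisfies $\frac{\pi^2}{12(C^\star)^2} = \epsilon$, gives $g(C^\star) < \epsilon = g(C(\epsilon))$; since $g$ is monotonically decreasing, this inequality reverses under $g^{-1}$, yielding $C(\epsilon) < C^\star = \frac{\pi}{\sqrt{12\epsilon}}$.

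Finally, for the small-$\epsilon$ asymptotics I would control the correction terms as $C\rightarrow\infty$. Both $\text{Li}_2(-e^{-C}) = O(e^{-C})$ and $C\log(1+e^{-C}) = O(Ce^{-C})$ vanish, so the numerator tends to $\pi^2$ and $g(C) = \frac{\pi^2}{12C^2}\bigl(1 + O(Ce^{-C})\bigr)$. Because $g$ decreases to $0$, taking $\epsilon \rightarrow 0$ forces $C(\epsilon)\rightarrow\infty$; substituting $g(C(\epsilon)) = \epsilon$ and inverting the leading term gives $C(\epsilon) = \frac{\pi}{\sqrt{12\epsilon}}\bigl(1 + o(1)\bigr)$, i.e.\ $C(\epsilon) \approx \frac{\pi}{\sqrt{12\epsilon}}$. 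The only real work is the bookkeeping in the first step—recognizing the dilogarithm and pinning down $\text{Li}_2(-1) = -\frac{\pi^2}{12}$—while the one genuine subtlety to watch is the direction of the inequality in the second step, where the \emph{decreasing} nature of $g$ flips the bound when passing to $C(\epsilon) = g^{-1}(\epsilon)$.
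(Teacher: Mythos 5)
Your proposal is correct and follows essentially the same route as the paper's proof: specialize Corollary~\ref{corr:C} to $f(w)=1$, express the integral via the dilogarithm, observe that the terms $12\,\text{Li}_2(-e^{-C})$ and $-12C\log(1+e^{-C})$ are negative to get the strict bound, and note that they vanish as $C\rightarrow\infty$ for the small-$\epsilon$ asymptotics. The only difference is that you supply the actual derivation of the closed-form integral (substitution, integration by parts, and $\text{Li}_2(-1)=-\pi^2/12$), which the paper simply asserts, and you route the upper bound through monotonicity of $g$ where the paper solves the inequality $\epsilon < \pi^2/(12C^2)$ for $C$ directly; both are sound.
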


\begin{proof}
Given $w_i \sim \text{Uniform}[0,1]$, we know that the pdf $f(w)=1$ for $w\in [0,1]$.  Plugging this into the result of Corollary~\ref{corr:C}, we obtain 

\[
\int_0^1 \frac{w}{1+\exp(Cw)} \, dw = \epsilon.
\]

The integral can be computed as:
\[ \int_0^1 \frac{w}{1+\exp(Cw)} \, dw 
= \frac{12\, \text{Li}_2(-e^{-C}) - 12 C \log(e^{-C}+1) + \pi^2}{12 C^2},
\]
where $\text{Li}_2$ is the dilogarithmic (Spence's) function.  Since the first two terms of the rhs are negative for all $C$, $\frac{\pi^2}{12 C^2} > \epsilon$, and thus $C < \frac{\pi}{\sqrt{12\epsilon}}$.
As $\epsilon \rightarrow 0$, $C$ becomes large, and the first two terms of the rhs go to 0 from below.  Thus we have $\frac{\pi^2}{12 C^2} \approx \epsilon$, and $C \approx \frac{\pi}{\sqrt{12\epsilon}}$.
\end{proof}

\begin{theorem}[Asymptotic flip  probabilities]\label{theorem:flip_probs}
Let $D_N = \langle d_1,d_2,\ldots,d_N \rangle$ be data records drawn i.i.d. from distribution $D$ with corresponding Bayes-optimal probabilities $P_N = \langle p_1,p_2,\ldots,p_N \rangle$ and corresponding weights $W_N = \langle w_1,w_2,\ldots,w_N \rangle$, where $w_i = |2p_i-1|$.  Assume $w_i \sim W$ where distribution $W$ has pdf $f(w) > 0$ for $w \in [0,1]$.  Let $R_N(\epsilon)$ denote the Rashomon set of models for error tolerance $\epsilon$ defined over data records $\langle d_1, \ldots, d_N\rangle$, and assume that $\epsilon$ is less than half of the average weight, i.e., $\int_{0}^{1} w \, f(w) \, dw > 2\epsilon$.  Consider the flip probabilities $q_{N,i}$ corresponding to Rashomon set $R_N(\epsilon)$, and define $q_i = \lim_{N \rightarrow \infty} q_{N,i}$.  Then
\[
q_i = \frac{1}{1 + \exp(C(\epsilon)\, w_i)},
\]
where $C(\epsilon) = g^{-1}(\epsilon)$ and $g(C) = \int_0^1 \frac{w f(w)}{1+\exp(Cw)} \, dw$.
\end{theorem}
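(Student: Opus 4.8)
The plan is to assemble the statement directly from two results already established earlier in this appendix: Lemma~\ref{lemma:funcform}, which fixes the logistic functional form of the asymptotic flip probability, and Corollary~\ref{corr:C}, which identifies the constant inside it. First I would invoke Lemma~\ref{lemma:funcform} to write $q_i = \lim_{N\to\infty} q_{N,i} = \frac{1}{1+\exp(C w_i)}$ for some constant $C$ that depends only on $\epsilon$ and the weight distribution $W$ (and not on the index $i$). Then I would invoke Corollary~\ref{corr:C}, which states precisely that this constant equals $g^{-1}(\epsilon)$, where $g(C) = \int_0^1 \frac{w f(w)}{1+\exp(Cw)}\,dw$. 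Substituting $C = C(\epsilon) = g^{-1}(\epsilon)$ into the logistic form yields the claimed expression, completing the proof. Since both ingredients are in hand, the theorem itself requires only this substitution.

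Because the genuine content lives in those supporting results, I would want their logic to be airtight. Lemma~\ref{lemma:funcform} rests on a weight-additivity trick: for any triple with $w_i+w_j=w_k$ (which exists in the limit by continuity and positivity of $f$), swapping $(\theta_i,\theta_j,\theta_k)$ between $(1,1,0)$ and $(0,0,1)$ preserves the total weight, so both flip vectors lie in or out of the Rashomon set together. Factoring the resulting equal probabilities using asymptotic pairwise independence (Lemma~\ref{lemma:independence}) gives $h(w_i)+h(w_j)=h(w_k)$ for $h(w)=\log\frac{q(w)}{1-q(w)}$, i.e.\ Cauchy's functional equation; monotonicity of $q$ in $w$ forces the linear solution $h(w)=-Cw$ and hence the logistic form. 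The constant $C$ is then pinned down through Theorem~\ref{theorem:all_error}, which gives $\int_0^1 w\,q(w)\,f(w)\,dw=\epsilon$; inserting the logistic form turns this into $g(C)=\epsilon$, and since $g$ is strictly decreasing in $C$ it is invertible, so $C=g^{-1}(\epsilon)$.

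The main obstacle is not the final substitution but the bootstrapping dependency among the supporting results, which I would verify does not become circular. The functional form relies on pairwise independence, which in turn relies on the existence and differentiability of the asymptotic log-size $\log B(\epsilon)$ (Lemma~\ref{lemma:asymptotic_size}); and the value of $C$ relies on the ``entire error tolerance is used'' theorem, whose own proof invokes the logistic form to evaluate the limiting average error together with a concentration argument showing that the Rashomon set is asymptotically dominated by flip vectors that exhaust the budget. I would confirm that $C(\epsilon)>0$ under the standing assumption $\int_0^1 w f(w)\,dw > 2\epsilon$ (so that $q(w)<\tfrac12$, as required for genuine flip probabilities), and that the ordering in which the lemmas are established resolves the apparent circularity: the functional form is first derived with an as-yet-undetermined $C$, and only afterward is $C$ fixed via the error-tolerance theorem. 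Once that ordering is confirmed, the theorem follows immediately by combining Lemma~\ref{lemma:funcform} and Corollary~\ref{corr:C}.
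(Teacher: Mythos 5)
Your proposal matches the paper's proof exactly: the paper likewise derives the theorem as an immediate consequence of Lemma~\ref{lemma:funcform} (the logistic functional form) combined with Corollary~\ref{corr:C} (the identification $C(\epsilon)=g^{-1}(\epsilon)$), and your account of the supporting lemmas' logic and their non-circular ordering accurately reflects how the appendix is structured. No gaps.
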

\begin{proof}
The statement follows immediately from Lemma~\ref{lemma:funcform}, which gives the functional form of $q_i$, and Corollary~\ref{corr:C}, which gives the expression for $C$. 
\end{proof}
\begin{remark}
As a consequence of this theorem, for a Rashomon set $R_N(\epsilon)$ with $N$ large, we can obtain the flip probabilities for each individual
in two steps: (1) calculate the value of $C$ for the given weight distribution $W$ and error tolerance $\epsilon$; and (2) compute $q_i = \frac{1}{1+\exp(Cw_i)}$ for each individual $i$.  To calculate $C$ if the pdf $f(w)$ of the weight distribution $W$ is known, we solve the integral equation $g(C) = \int_0^1 \frac{w f(w)}{1+\exp(Cw)} \, dw = \epsilon$.  Alternatively, given a finite dataset of size $N$, we estimate the true weight distribution $W$ using the empirical distribution $W_N$, 
and thus solve the equation $\frac{1}{N} \sum_{i=1\ldots N} \frac{w_i}{1+\exp(Cw_i)} = \epsilon$. In either case, we note that the lhs decreases monotonically with $C$, allowing an efficient solution by binary search.
\end{remark}

\begin{theorem}[Asymptotic size of Rashomon set]\label{theorem:asymptotic_size}
Let $D_N = \langle d_1,d_2,\ldots,d_N \rangle$ be data records drawn i.i.d. from distribution $D$ with corresponding Bayes-optimal probabilities $P_N = \langle p_1,p_2,\ldots,p_N \rangle$ and corresponding weights $W_N = \langle w_1,w_2,\ldots,w_N \rangle$, where $w_i = |2p_i-1|$.  Assume $w_i \sim W$ where distribution $W$ has pdf $f(w) > 0$ for $w \in [0,1]$.  Let $R_N(\epsilon)$ denote the Rashomon set of models for error tolerance $\epsilon$ defined over data records $\langle d_1, \ldots, d_N\rangle$, and assume that $\epsilon$ is less than half of the average weight, i.e., $\int_{0}^{1} w \, f(w) \, dw > 2\epsilon$.  Then 
\[
\lim_{N\rightarrow\infty} \frac{\log |R_N(\epsilon)|}{N} = \log B(\epsilon),
\]
where $B(\epsilon) = \exp\left(\int_0^\epsilon C(x) dx\right)$, $C(\epsilon) = g^{-1}(\epsilon)$, and $g(C) = \int_0^1 \frac{w f(w)}{1+\exp(Cw)} \, dw$.
\end{theorem}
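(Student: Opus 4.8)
The plan is to obtain this theorem by assembling two results established earlier in the appendix: Lemma~\ref{lemma:asymptotic_size}, which already gives the exponential growth law $\lim_{N\to\infty}\frac{\log|R_N(\epsilon)|}{N}=\log B(\epsilon)$ with $B(\epsilon)=\exp(\int_0^\epsilon C(x)\,dx)$ expressed in terms of the constant $C(\epsilon)$ from the asymptotic flip-probability formula, and Corollary~\ref{corr:C}, which identifies that constant as $C(\epsilon)=g^{-1}(\epsilon)$ for $g(C)=\int_0^1\frac{wf(w)}{1+\exp(Cw)}\,dw$. Substituting the latter into the former yields exactly the claimed expression for $B(\epsilon)$, so at the top level the proof is a one-line composition. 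The substance lies in the two ingredients, so I would organize the plan around re-deriving them in the correct dependency order.

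For the growth law, I would start from Lemma~\ref{lemma:size}, which writes the finite-$N$ flip probability $q_{N,i}$ as a ratio of sizes of reduced Rashomon sets $R_{N,-i}$ at the perturbed error tolerances $\frac{N\epsilon-w_i}{N-1}$ and $\frac{N\epsilon}{N-1}$. Setting this ratio equal to the logistic form $q_i=\frac{1}{1+\exp(Cw_i)}$ supplied by Lemma~\ref{lemma:funcform}, taking logarithms, and recognizing the difference of log-sizes as a finite-difference approximation to a derivative (the perturbation $\frac{w_i}{N-1}\to0$ as $N\to\infty$), I would extract the key identity $\lim_{N\to\infty}\frac{1}{N}\frac{d}{d\epsilon}\log|R_N(\epsilon)|=C(\epsilon)$. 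Integrating in $\epsilon$ and fixing the constant of integration using the boundary condition $|R_N(0)|=1$ (valid since $f(w)$ is continuous and positive, so almost surely no nonzero-weight flip preserves accuracy) gives $\frac{\log|R_N(\epsilon)|}{N}\to\int_0^\epsilon C(x)\,dx=\log B(\epsilon)$.

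The main obstacle is the identification of $C(\epsilon)$, i.e., Corollary~\ref{corr:C}, because it cannot be read off from the size law alone and instead rests on Theorem~\ref{theorem:all_error}. The delicate point there is a concentration-of-measure argument: while every flip vector trivially satisfies $\frac{1}{N}\sum_i w_i q_{N,i}\le\epsilon$, one must establish the matching lower bound asymptotically, namely that almost all of the Rashomon set's mass sits at the boundary of the error constraint. This holds because $B$ is strictly increasing (so $|R_N(\epsilon)|/|R_N(\epsilon-\delta)|$ blows up geometrically for any $\delta>0$), but arranging the argument without circularity requires care, since Theorem~\ref{theorem:all_error} itself invokes Lemma~\ref{lemma:asymptotic_size}. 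Once the exact error saturation $\int_0^1 w\,q(w)\,f(w)\,dw=\epsilon$ is in hand, plugging in the logistic $q(w)$ gives $g(C(\epsilon))=\epsilon$, hence $C(\epsilon)=g^{-1}(\epsilon)$ (well-defined since $g$ is strictly decreasing in $C$). Combining this identification with the growth law from the previous paragraph completes the proof.
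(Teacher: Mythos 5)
Your proposal is correct and follows essentially the same route as the paper: the theorem is proved there exactly as a one-line composition of Lemma~\ref{lemma:asymptotic_size} (the growth law in terms of $C$) with Corollary~\ref{corr:C} (the identification $C(\epsilon)=g^{-1}(\epsilon)$ via the error-saturation result of Theorem~\ref{theorem:all_error}), and your reconstruction of both ingredients, including the observation that the dependency order avoids circularity because Lemma~\ref{lemma:asymptotic_size} is stated in terms of the as-yet-unidentified constant $C$, matches the paper's development.
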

\begin{proof}
The statement follows immediately from Lemma~\ref{lemma:asymptotic_size}, which gives the size of the Rashomon set in terms of $C$, and Corollary~\ref{corr:C}, which gives the expression for $C$.
\end{proof}
\begin{remark}
To compute the exponential base $B(\epsilon)$, and therefore the Rashomon set size $|R_N(\epsilon)| = B(\epsilon)^N$, given a finite dataset of size $N$, we can calculate the value of $C(\epsilon)$ for a fine grid of $\epsilon$ values by solving the equation $\frac{1}{N} \sum_{i=1\ldots N} \frac{w_i}{1+\exp(Cw_i)} = \epsilon$. We then use numerical integration to estimate $B(\epsilon) = \exp\left(\int_0^\epsilon C(x) dx\right)$. Alternatively, for $N\rightarrow\infty$ with a known distribution of weights, $w_i \sim W$ with pdf $f(w)$, we instead solve the integral $\int_0^1 \frac{w f(w)}{1+\exp(Cw)} \, dw = \epsilon$ to obtain $C(\epsilon)$.
\end{remark}

\begin{corollary}
[Asymptotic size of Rashomon set for uniformly distributed weights]\label{corr:asymptotic_size_uniform}
Let $D_N = \langle d_1,d_2,\ldots,d_N \rangle$ be data records drawn i.i.d. from distribution $D$ with corresponding Bayes-optimal probabilities $P_N = \langle p_1,p_2,\ldots,p_N \rangle$ and corresponding weights $W_N = \langle w_1,w_2,\ldots,w_N \rangle$, where $w_i = |2p_i-1|$.  Assume $w_i \sim \text{Uniform}[0,1]$.  Let $R_N(\epsilon)$ denote the Rashomon set of models for error tolerance $\epsilon$ defined over data records $\langle d_1, \ldots, d_N\rangle$, and assume that $\epsilon$ is less than half of the average weight, i.e., $\int_{0}^{1} w \, f(w) \, dw > 2\epsilon$.  Then 
\[
\lim_{N\rightarrow\infty} \frac{\log |R_N(\epsilon)|}{N} = \log B(\epsilon),
\]
where $B(\epsilon) = \exp\left(\int_0^\epsilon C(x) dx\right)$, $C(\epsilon) = g^{-1}(\epsilon)$, and $g(C) = \int_0^1 \frac{w}{1+\exp(Cw)} \, dw = \frac{12\, \text{Li}_2(-e^{-C}) - 12 C \log(e^{-C}+1) + \pi^2}{12 C^2}$.

Moreover, $B(\epsilon) <  \exp\left(\pi\sqrt{\frac{\epsilon}{3}}\right)$,
and $B(\epsilon) \approx \exp\left(\pi\sqrt{\frac{\epsilon}{3}}\right)$ for small $\epsilon$.
\end{corollary}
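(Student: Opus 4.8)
The plan is to obtain the exact part of the statement by specializing the general size result, Theorem~\ref{theorem:asymptotic_size}, to the uniform weight distribution, and then to obtain the upper bound and the small-$\epsilon$ approximation on $B(\epsilon)$ by integrating the corresponding pointwise bound and approximation on $C(\epsilon)$ that were already established in Corollary~\ref{corr:C_uniform}. Since $\exp(\cdot)$ is monotone increasing, a bound on $\int_0^\epsilon C(x)\,dx$ transfers directly to a bound on $B(\epsilon)=\exp\!\left(\int_0^\epsilon C(x)\,dx\right)$.

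First I would observe that for $w_i\sim\text{Uniform}[0,1]$ the pdf is $f(w)=1$ on $[0,1]$, which is continuous and positive, so the hypotheses of Theorem~\ref{theorem:asymptotic_size} are satisfied. Substituting $f(w)=1$ into that theorem immediately yields $\lim_{N\to\infty}\frac{\log|R_N(\epsilon)|}{N}=\log B(\epsilon)$ with $B(\epsilon)=\exp\!\left(\int_0^\epsilon C(x)\,dx\right)$ and $C(\epsilon)=g^{-1}(\epsilon)$, where $g(C)=\int_0^1\frac{w}{1+\exp(Cw)}\,dw$. The closed form $g(C)=\frac{12\,\text{Li}_2(-e^{-C})-12C\log(e^{-C}+1)+\pi^2}{12C^2}$ is exactly the integral already evaluated in Corollary~\ref{corr:C_uniform}, so the exact portion of the claim is a direct restatement and requires no new work.

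Next I would establish the upper bound by integrating the pointwise inequality $C(x)<\frac{\pi}{\sqrt{12x}}$ from Corollary~\ref{corr:C_uniform}. Using $\sqrt{12}=2\sqrt{3}$, one computes $\int_0^\epsilon\frac{\pi}{\sqrt{12x}}\,dx=\frac{\pi}{\sqrt{12}}\cdot 2\sqrt{\epsilon}=\pi\sqrt{\frac{\epsilon}{3}}$, so $\int_0^\epsilon C(x)\,dx<\pi\sqrt{\frac{\epsilon}{3}}$ and hence $B(\epsilon)<\exp\!\left(\pi\sqrt{\frac{\epsilon}{3}}\right)$. For the small-$\epsilon$ approximation I would invoke $C(x)\approx\frac{\pi}{\sqrt{12x}}$ as $x\to 0$ (also from Corollary~\ref{corr:C_uniform}); since the approximating integrand is integrable at $0$, the same computation gives $\int_0^\epsilon C(x)\,dx\approx\pi\sqrt{\frac{\epsilon}{3}}$, and therefore $B(\epsilon)\approx\exp\!\left(\pi\sqrt{\frac{\epsilon}{3}}\right)$.

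The only step needing genuine care — and thus the main obstacle — is the approximation claim, since integrating a pointwise approximation could in principle accumulate error. I would make precise that the discarded correction terms $\text{Li}_2(-e^{-C})$ and $C\log(e^{-C}+1)$ in $g(C)$ decay exponentially as $C\to\infty$ (equivalently as $x\to 0$), so that the difference $C(x)-\frac{\pi}{\sqrt{12x}}$ is integrable near $0$ and its contribution to $\int_0^\epsilon C(x)\,dx$ is of lower order than $\pi\sqrt{\epsilon/3}$ as $\epsilon\to 0$. Everything else follows mechanically from the preceding lemmas and corollaries, which already carry the analytic weight of determining the functional form of $q(w)$, the size formula, and the asymptotics of $C$.
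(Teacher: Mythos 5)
Your proposal is correct and follows essentially the same route as the paper: the exact statement is obtained by specializing the general size result to $f(w)=1$ together with the closed form for $g(C)$ from Corollary~\ref{corr:C_uniform}, and the bound and approximation on $B(\epsilon)$ come from integrating $C(x)<\frac{\pi}{\sqrt{12x}}$ (and its small-$x$ approximation) over $[0,\epsilon]$, exactly as the paper does. Your added remark about controlling the exponentially decaying correction terms in $g(C)$ to justify integrating the pointwise approximation is a bit more careful than the paper's one-line treatment, but it is the same argument.
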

\begin{proof}
The statement follows from Lemma~\ref{lemma:asymptotic_size}, which gives the size of the Rashomon set in terms of $C$, and Corollary~\ref{corr:C_uniform}, which gives the exact and approximate (upper bound) expressions for $C$ for uniform weights.  Since $C(\epsilon) < \frac{\pi}{\sqrt{12 \epsilon}}$, we know that 
$B(\epsilon) = \exp\left(\int_0^\epsilon C(x) dx\right)
< \exp\left( \int_0^\epsilon \frac{\pi}{\sqrt{12 x}} dx
\right) = \exp\left(\pi\sqrt{\frac{\epsilon}{3}}\right)$.
And since $C(\epsilon) \approx \frac{\pi}{\sqrt{12 \epsilon}}$ for $\epsilon \rightarrow 0$, we know that $B(\epsilon) \approx \exp\left(\pi\sqrt{\frac{\epsilon}{3}}\right)$ for $\epsilon\rightarrow 0$.
\end{proof}

\section{Description of benchmark datasets}
\label{appendix:datasets}

Throughout this paper, we present experimental results on three real-world datasets that are commonly used as benchmarks in the fair machine learning literature: German Credit (``German''), Adult, and Heritage Health (``Health'').

We use a preprocessed version of the German Credit data~\cite{german1994} publicly available on Kaggle~\cite{german2017}, which includes credit risk as an outcome variable. Numerical attributes in the dataset, namely $Age$, $Job$, $Credit~Amount$, and $Duration$, were discretized to a categorical attribute as follows: $Age$ was discretized based on whether $Age \geq 25$ or otherwise; $Job$ was discretized based on the number of jobs (no job, one job, or more than one job); $Credit~Amount$, whose values range from $250$ to $18,400$, was discretized into five bins; and $Duration$, whose values range from $4$ to $72$, was discretized into five bins. Categorical and binary attributes were unchanged. Finally, attributes were 1-hot encoded.

We use a publicly available version of the Adult data~\cite{adult1996}, which includes income as an outcome variable. Numerical attributes in the dataset, namely $age$, $fnlwgt$ (final weight), \emph{education-num} (education level), \emph{capital-gain}, \emph{capital-loss}, and \emph{hours-per-week}, were binarized using their median value as the threshold. Categorical and binary attributes were unchanged. Finally, attributes were 1-hot encoded.

We use a publicly available version of the Heritage Health data~\cite{health2012}. We use similar features as the winning team, Market Makers ~\cite{marketmarkers}. We generate the features using the SQL script in the Appendix of~\cite{marketmarkers}, which generates the majority of the variables in data set 1. Since the $ageMISS$ feature corresponds to whether the age value is missing or not, all rows with $ageMISS=1$ were removed, and the $ageMISS$ feature was dropped. Additionally, the sensitive feature $S$ was created with $S=0$ when the age is between 0 and 59 ($age\_{05}=1$ or $age\_{15}=1$ or $age\_{25}=1$ or $age\_{35}=1$ or $age\_{45}=1$ or $age\_{55}=1$), and $S=1$ otherwise. Numerical attributes were binarized using their median value as the threshold. Categorical and binary attributes were unchanged. Finally, attributes were 1-hot encoded.

In German Credit ($N=1,000$), there are 690 men (labeled $gender = 1$ in the dataset) and 310 women ($gender = 0$).  The outcome variable (high risk) is whether an individual is considered high-risk for a loan. Women ($gender = 0$) are the minority class (31.0\% of the dataset) and are disadvantaged (35.2\% likely to be considered high risk for a loan, vs. 27.7\% for men).

In Adult ($N=46,443$), there are 15,203 women (labeled $sex=1$ in the dataset) and 31,240 men ($sex=0$).  The outcome variable ($income$) is whether a person has income over \$50,000. Women ($sex = 1$) are the minority class (32.7\% of the dataset) and are disadvantaged (11.2\% likely to be predicted high income vs. 30.9\% for men).

In Health ($N=184,308$), there are 73,535 individuals over the age of 60 (labeled $S = 1$ in the dataset) and 110,773 other individuals ($S=0$).  The outcome variable ($DaysInHospital$) represents whether a person will spend any days in the hopsital that year. Individuals over the age of 60 ($S= 1$) are the minority class (39.9\% of the dataset) and are disadvantaged ($DaysInHospital=1$ 19.6\% of the time, vs. 10.6\% for others).

As noted in Section~\ref{sec:intentional-experiments} above, we estimate the the Bayes-optimal probabilities $p_i$ for all three datasets using 5-fold cross-validation, using two different approaches, logistic regression (main paper) and XGBoost (Appendix~\ref{appendix:robustness}). 
We report the cross-validated accuracy scores for each dataset using the approximate Bayes-optimal predictions $f_\text{opt}(x_i) = \mathbf{1}\{\hat p_i > 0.5\}$ and observed outcomes $y_i$.  For logistic regression, accuracy was 73.7\%, 84.4\%, and 86.2\% for German, Adult, and Health datasets respectively. For XGBoost, accuracy was 69.4\%, 84.3\%, and 88.8\% for German, Adult, and Health datasets respectively.

\section{Additional results for Section~\ref{sec:intentional-experiments} (Intentional Fairness)}~\label{appendix:intentional-experiments}

\begin{figure}[t]
    \centering
    \begin{subfigure}[b]{0.32\textwidth}  
        \centering
        \includegraphics[width=\textwidth]{   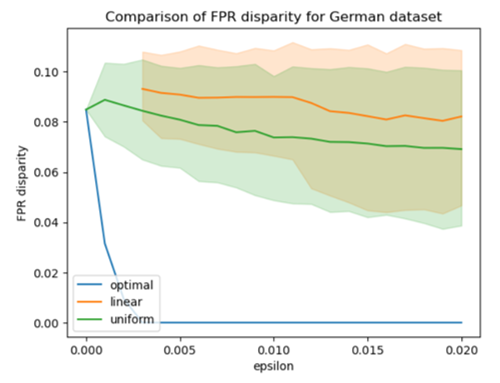}  
    \end{subfigure}
    \hfill
    \begin{subfigure}[b]{0.32\textwidth}
        \centering
        \includegraphics[width=\textwidth]{   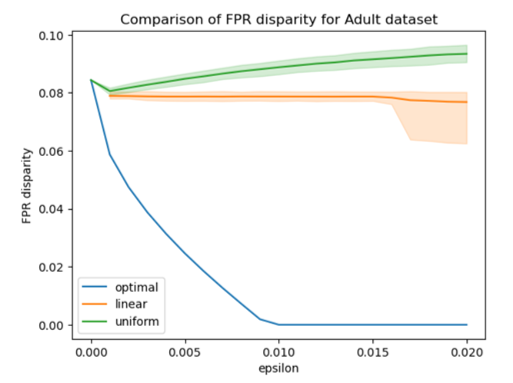}  
    \end{subfigure}
    \begin{subfigure}[b]{0.32\textwidth}
        \centering
        \includegraphics[width=\textwidth]{   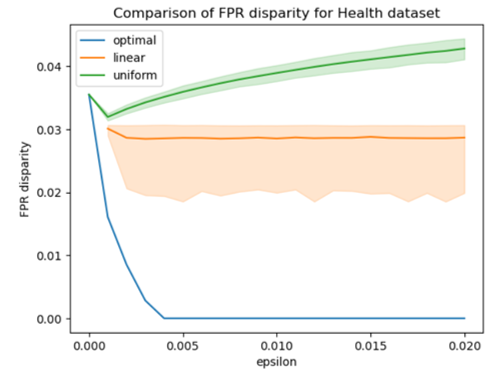}  
    \end{subfigure}
    \caption{Disparity in false positive rate for the German, Adult, and Health datasets, as a function of the error tolerance $\epsilon$. Comparison of methods for optimizing FPR (Section~\ref{sec:optimizing-TPR and FPR}), uniform random sampling (Section~\ref{sec:sampling}), and sampling linear models (Section~\ref{sec:linear}) over the Rashomon set $R_N(\epsilon)$.}
    \label{fig:FPR_disparity}
\end{figure}
\begin{figure}[t]
    \centering
    \begin{subfigure}[b]{0.32\textwidth}  
        \centering
        \includegraphics[width=\textwidth]{   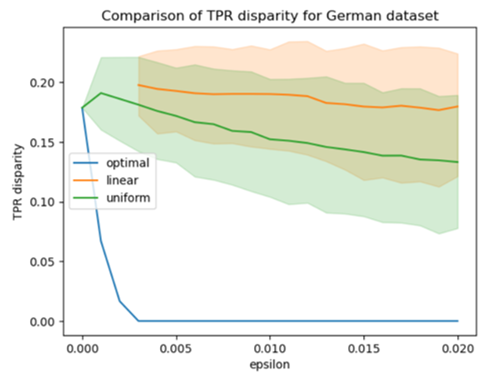}  
    \end{subfigure}
    \hfill
    \begin{subfigure}[b]{0.32\textwidth}
        \centering
        \includegraphics[width=\textwidth]{   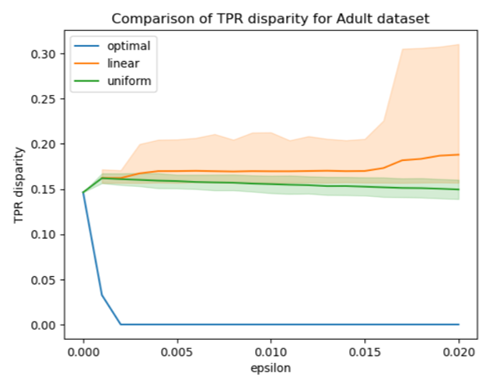}  
    \end{subfigure}
    \begin{subfigure}[b]{0.32\textwidth}
        \centering
        \includegraphics[width=\textwidth]{   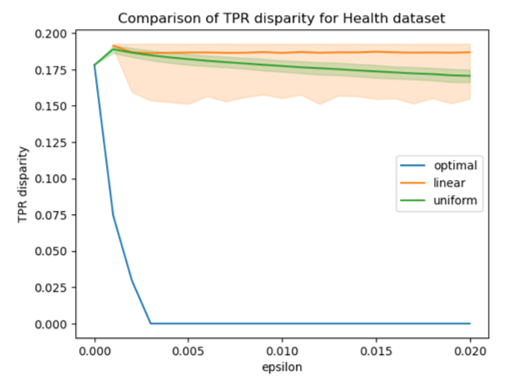}  
    \end{subfigure}
    \caption{Disparity in true positive rate for the German, Adult, and Health datasets, as a function of the error tolerance $\epsilon$. Comparison of methods for optimizing TPR (Section~\ref{sec:optimizing-TPR and FPR}), uniform random sampling (Section~\ref{sec:sampling}), and sampling linear models (Section~\ref{sec:linear}) over the Rashomon set $R_N(\epsilon)$.}
    \label{fig:TPR_disparity}
\end{figure}

Figures~\ref{fig:FPR_disparity} and~\ref{fig:TPR_disparity} show the disparity in false positive rate (FPR) and true positive rate (TPR) respectively, as a function of the error tolerance $\epsilon$.  These figures compare the methods for optimizing FPR and TPR disparities over the Rashomon set $R_N(\epsilon)$ (Section~\ref{sec:optimizing-fairness}) to uniform random sampling (Section~\ref{sec:sampling}) and sampling linear models (Section~\ref{sec:linear}).   We see that both sets of results are very similar to the results presented for PPR disparity in Figure~\ref{fig:PPR_disparity}.

\section{Additional results for Section~\ref{sec:flip_probs-individual} and~\ref{sec:flip_probs-experiments} (Flip Probabilities)}
\label{appendix:flip_probs}

As noted in Section~\ref{sec:flip_probs-individual}, we can \emph{exactly} (in the large-sample limit) and \emph{efficiently} compute the \emph{average} over the entire Rashomon set $R_N(\epsilon)$ of any metric (such as accuracy, PPR disparity, FPR disparity, or TPR disparity) which can be decomposed as a linear function, $h_i^0 + h_i^1 f(x_i)$, of the individual predictions $f(x_i)$ using the flip probabilities $q_{N,i}$. To see this, we can write:
\begin{align*}
\frac{1}{|R_N(\epsilon)|}
\sum_{\theta \in R_N(\epsilon)}
\sum_{i=1\ldots N} (h_i^0 + h_i^1 f(x_i))
&= 
\frac{1}{|R_N(\epsilon)|}
\sum_{\theta \in R_N(\epsilon)}
\sum_{i=1\ldots N} (h_i^0 + h_i^1 (\theta_i \mathbf{1}\{p_i \le 0.5\} + (1-\theta_i) \mathbf{1}\{p_i > 0.5\})) \\
&=
\sum_{i=1\ldots N} h_i^0 +
\sum_{i=1\ldots N} h_i^1
\frac{1}{|R_N(\epsilon)|}
\sum_{\theta \in R_N(\epsilon)}
 (\theta_i \mathbf{1}\{p_i \le 0.5\} + (1-\theta_i) \mathbf{1}\{p_i > 0.5\}) \\
 &=
\sum_{i=1\ldots N} h_i^0 +
\sum_{i=1\ldots N} h_i^1
\left(
\mathbf{1}\{p_i > 0.5\} +
(\mathbf{1}\{p_i \le 0.5\} - \mathbf{1}\{p_i > 0.5\})
\frac{\sum_{\theta \in R_N(\epsilon)}
\theta_i}{|R_N(\epsilon)|}
 \right)
 \\
 &=
\sum_{i=1\ldots N} h_i^0 +
\sum_{i=1\ldots N} h_i^1
\left(
\mathbf{1}\{p_i > 0.5\} +
(\mathbf{1}\{p_i \le 0.5\} - \mathbf{1}\{p_i > 0.5\}) q_{N,i}
 \right).
\end{align*}

Concretely, for accuracy we have $h_i^0 = \frac{1-p_i}{N}$ and $h_i^1 = \frac{2p_i-1}{N}$.  For PPR disparity, assuming wlog that subgroup $A$ has greater PPR, we have 
$h_i^0 = 0$ and $h_i^1 = \frac{\mathbf{1}\{d_i \in A\}}{|A|} - \frac{\mathbf{1}\{d_i \in B\}}{|B|}.$  For FPR disparity, assuming wlog that subgroup $A$ has greater FPR, we have 
$h_i^0 = 0$ and $h_i^1 = \frac{(1-p_i)\mathbf{1}\{d_i \in A\}}{||1-P_A||_1} - \frac{(1-p_i)\mathbf{1}\{d_i \in B\}}{||1-P_B||_1}.$  Finally, For TPR disparity, assuming wlog that subgroup $A$ has greater TPR, we have  $h_i^0 = 0$ and $h_i^1 = \frac{p_i\mathbf{1}\{d_i \in A\}}{||P_A||_1} - \frac{p_i\mathbf{1}\{d_i \in B\}}{||P_B||_1}.$

Second, as noted in Section~\ref{sec:flip_probs-individual}, comparing the amount of arbitrariness (as defined by the average flip probability) across demographic groups provides a very different notion of group fairness compared to typical definitions including statistical parity and error rate balance.  As a simple proof-of-concept example, imagine that we have two equally-sized subgroups $A$ and $B$ with Bayes-optimal probabilities $p_i = 0.6$ for all members of group $A$, while group $B$ is evenly split between $p_i = 0.51$ and $p_i = 0.99$. The Bayes-optimal classifier would predict everyone as positive, leading to PPR = FPR = TPR = 1 for both groups and no observed disparities.  Yet the average flip probability for models in the Rashomon set $R_N(\epsilon)$ would be greater for one group than another depending on the value of $\epsilon$. For a small value of $\epsilon=.001$, group $B$ would be 32\% more likely to be flipped than group $A$, while for a larger value of $\epsilon = .02$, group $B$ would be 14\% \emph{less} likely to be flipped.

We now present three figures discussed in Section~\ref{sec:flip_probs-experiments}.
In Figure~\ref{fig:appendix_overall_flip_probs}, we graph the overall (population average) flip probability for all three datasets for models sampled uniformly at random from the Rashomon set $R_N(\epsilon)$ as a function of $\epsilon$, compared to sampling linear models from the Rashomon set (as described in Section~\ref{sec:sampling}) and the models that optimize PPR, FPR, and TPR over the Rashomon set (as described in Section~\ref{sec:optimizing-fairness}).  

\begin{figure}[t]
    \centering
    \begin{subfigure}[b]{0.32\textwidth}  
        \centering
        \includegraphics[width=\textwidth]{   figures/flip_prob_german_1-13-2025.png}  
    \end{subfigure}
    \hfill
    \begin{subfigure}[b]{0.32\textwidth}
        \centering
        \includegraphics[width=\textwidth]{   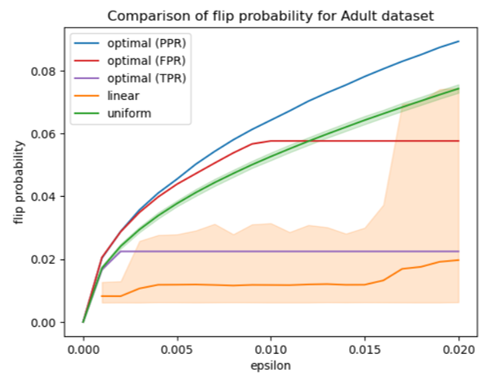}  
    \end{subfigure}
    \begin{subfigure}[b]{0.32\textwidth}
        \centering
        \includegraphics[width=\textwidth]{   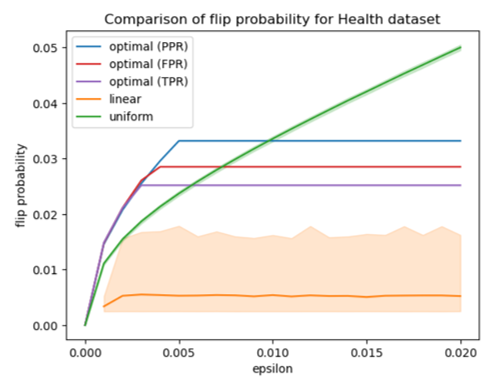}  
    \end{subfigure}
    \caption{Overall (population average) flip probability for the German, Adult, and Health datasets, as a function of the error tolerance $\epsilon$. Comparison of methods for optimizing PPR (Section~\ref{sec:optimizing-PPR}), optimizing FPR (Section~\ref{sec:optimizing-TPR and FPR}), optimizing TPR (Section~\ref{sec:optimizing-TPR and FPR}),
    uniform random sampling (Section~\ref{sec:sampling}), and sampling linear models (Section~\ref{sec:linear}) over the Rashomon set $R_N(\epsilon)$.}\label{fig:appendix_overall_flip_probs}
\end{figure}

In Figure~\ref{fig:flip_vs_sampling}, we use the flip probabilities to compute the average PPR, FPR, and TPR disparities of the entire Rashomon set $R_N(\epsilon)$ as a function of the error tolerance $\epsilon$ for the German, Adult, and Health datasets.  While these quantities can also be approximated by sampling a large number of flip vectors uniformly at random from the Rashomon set and computing their sample averages, as described in Section~\ref{sec:sampling}, using the flip probabilities is both exact and much more computationally efficient.  We see that the 
sample averages (orange curves) and entire-Rashomon-set averages (blue curves) match closely in Figure~\ref{fig:flip_vs_sampling}, but the orange curves include a small amount of random noise while the blue curves are smooth. 

\begin{figure}[t]
  \centering 
  \includegraphics[width=.95\linewidth]{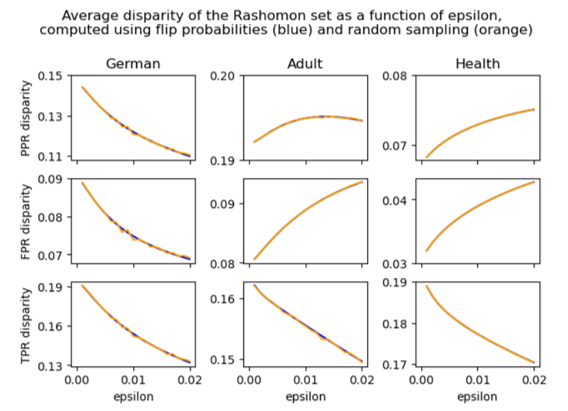}
    \caption{Comparison of calculated PPR, FPR, and TPR disparities as a function of $\epsilon$ for the German, Adult, and Health datasets. Blue curves: average disparity of the entire Rashomon set calculated using the flip probabilities, as described in Appendix~\ref{appendix:flip_probs}.  Orange curves: average disparity of 950 flip vectors sampled uniformly at random from the Rashomon set, as described in Section~\ref{sec:sampling}}.
  \label{fig:flip_vs_sampling}
\end{figure}

In Figure~\ref{fig:appendix_stratified_flip_probs}, we compute the group average flip probabilities for the protected and non-protected groups
as a function of the error tolerance $\epsilon$ for the German, Adult, and Health datasets.

\begin{figure}[htp]
    \centering
    \begin{subfigure}[b]{0.32\textwidth}  
        \centering
        \includegraphics[width=\textwidth]{   figures/stratified_flip_probability_german.png}  
    \end{subfigure}
    \hfill
    \begin{subfigure}[b]{0.32\textwidth}
        \centering
        \includegraphics[width=\textwidth]{   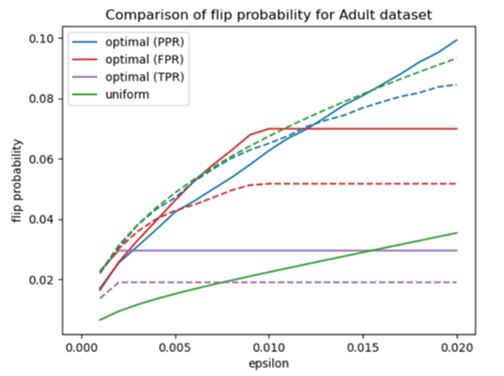}  
    \end{subfigure}
    \begin{subfigure}[b]{0.32\textwidth}
        \centering
        \includegraphics[width=\textwidth]{   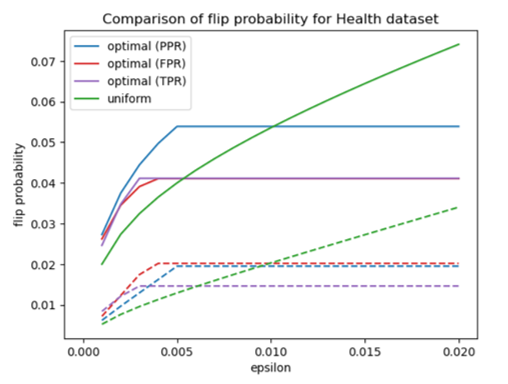}  
    \end{subfigure}
    \caption{Group average flip probability, comparison between protected group (solid lines) and non-protected group (dashed lines), for the German, Adult, and Health datasets, as a function of the error tolerance $\epsilon$. Comparison of methods for optimizing PPR (Section~\ref{sec:optimizing-PPR}), optimizing FPR (Section~\ref{sec:optimizing-TPR and FPR}), optimizing TPR (Section~\ref{sec:optimizing-TPR and FPR}),
    and uniform random sampling (Section~\ref{sec:sampling}), over the Rashomon set $R_N(\epsilon)$.}\label{fig:appendix_stratified_flip_probs}
\end{figure}

\section{Additional results for Section~\ref{sec:rashomon_set_size-experiments} (Rashomon set size experiments)}
\label{appendix:rashomon_set_size-experiments}

\begin{figure}[htp]
    \centering
    \begin{subfigure}[b]{0.32\textwidth}  
        \centering
        \includegraphics[width=\textwidth]{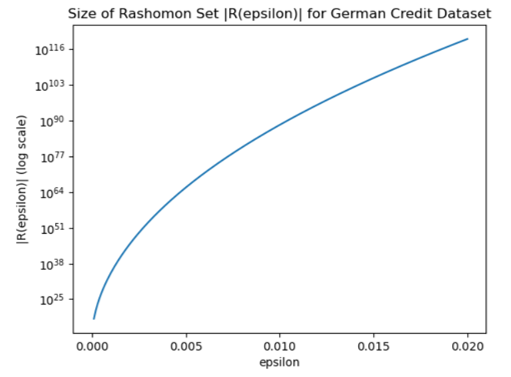}  
    \end{subfigure}
    \hfill
    \begin{subfigure}[b]{0.32\textwidth}
        \centering
        \includegraphics[width=\textwidth]{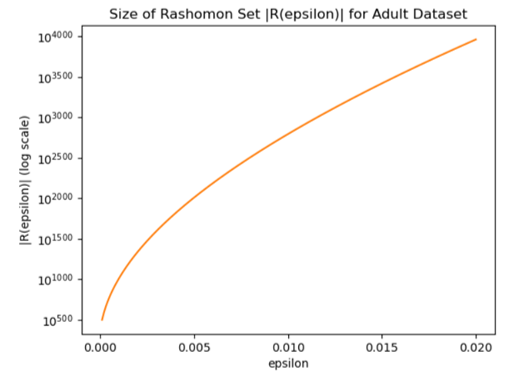}  
    \end{subfigure}
    \begin{subfigure}[b]{0.32\textwidth}
        \centering
        \includegraphics[width=\textwidth]{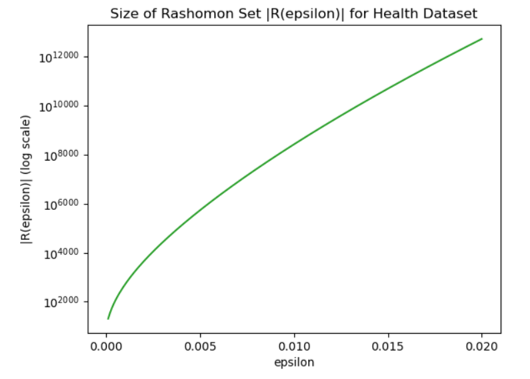}  
    \end{subfigure}
    \caption{Rashomon set size $|R_N(\epsilon)|$ for the German Credit, Adult, and Health datasets. Note the logarithmic scale of the $y$-axis.}
    \label{fig:appendix_Rashomon_set_size}
\end{figure}
\begin{figure}[htp]
    \centering
    \begin{subfigure}[b]{0.32\textwidth}  
        \centering
        \includegraphics[width=\textwidth]{   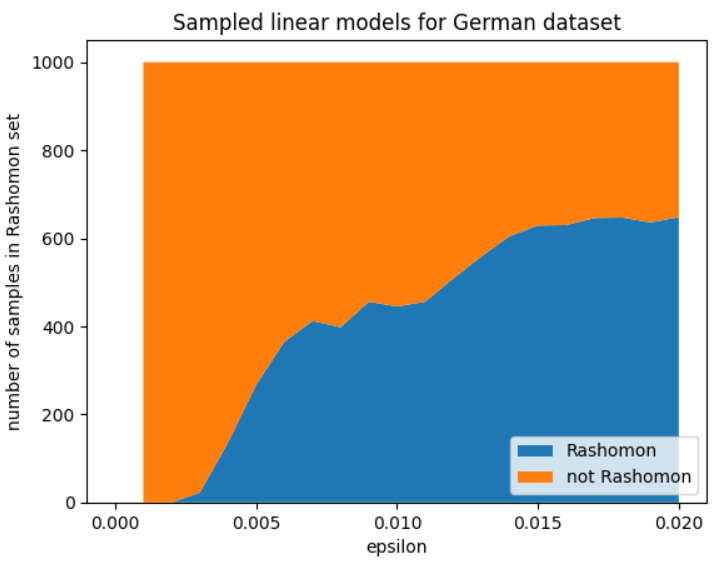}  
    \end{subfigure}
    \hfill
    \begin{subfigure}[b]{0.32\textwidth}
        \centering
        \includegraphics[width=\textwidth]{   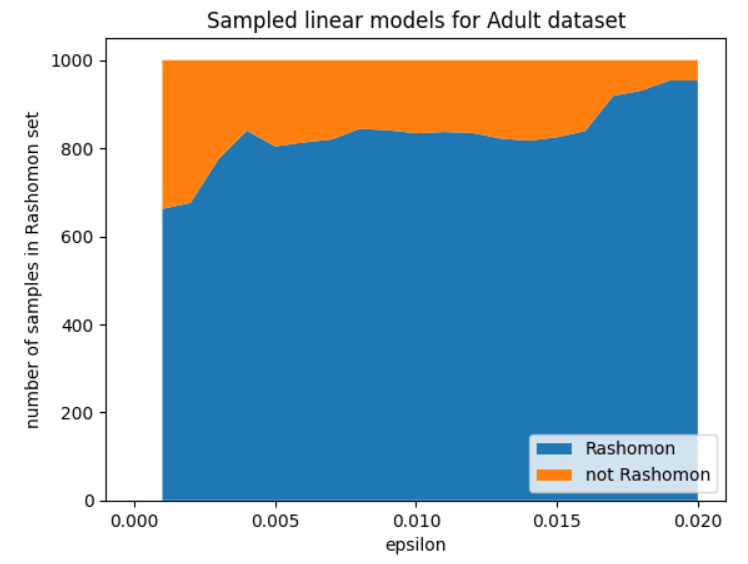}  
    \end{subfigure}
    \begin{subfigure}[b]{0.32\textwidth}
        \centering
        \includegraphics[width=\textwidth]{   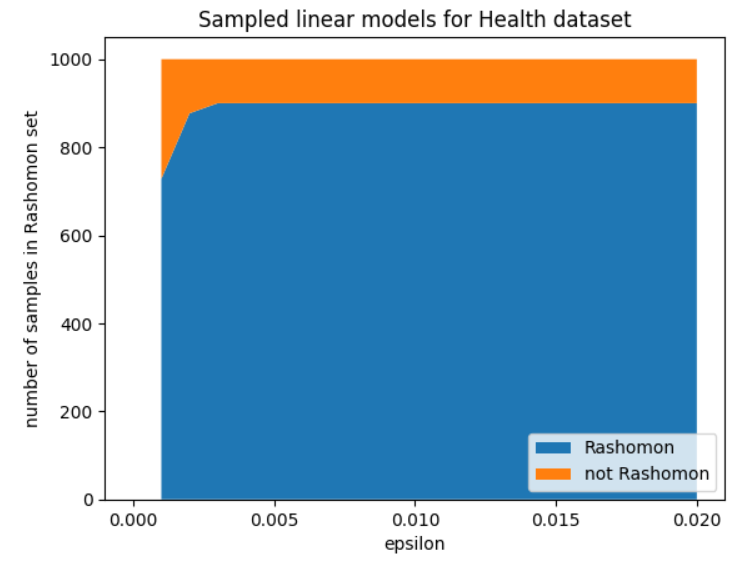}  
    \end{subfigure}
    \caption{Proportion of randomly sampled linear models that are in the Rashomon set $R_N(\epsilon)$ as a function of the error tolerance $\epsilon$, for the German Credit (left), Adult (center), and Health (right) datasets. The $y$-axis represents how many of the 1000 randomly sampled linear models are (blue) and are not (orange) in the Rashomon set.}
\label{fig:appendix_linear_proportion_in_Rashomon_set}
\end{figure}

We plot the Rashomon set size $|R_N(\epsilon)|$ for the German Credit, Adult, and Health datasets in Figure~\ref{fig:appendix_Rashomon_set_size}.  We observe that the Rashomon set sizes are very large and scale rapidly with $\epsilon$, since $|R_N(\epsilon)| = B(\epsilon)^N$. For the maximum $\epsilon$ value we consider, $\epsilon = 0.02$, we have $B=1.32$ for German Credit, $B=1.22$ for Adult, and $B=1.17$ for Health.

Additionally, while we do not yet have a way of computing the (reduced) Rashomon set size when restricting our search to the space of linear ($L_2$-penalized logistic regression) models as described in Section~\ref{sec:linear} above, we can nevertheless examine what fraction of the sampled linear models are in the Rashomon set as a function of $\epsilon$.  This is shown (for $\epsilon \in \{0.001, 0.002, \ldots, 0.02\}$) for the German Credit, Adult, and Health datasets in Figure~\ref{fig:appendix_linear_proportion_in_Rashomon_set}.  We see that, for the Adult and Health datasets, most of the randomly sampled linear models are in the Rashomon set, even for low $\epsilon$-values.  For the German Credit data, a substantial fraction of linear models are not in the Rashomon set, even when $\epsilon$ is large.

\section{Robustness check: use of an alternate model to estimate Bayes-optimal probabilities}
\label{appendix:robustness}

\begin{figure}[t]
    \centering
    \begin{subfigure}[b]{0.32\textwidth}  
        \centering
        \includegraphics[width=\textwidth]{   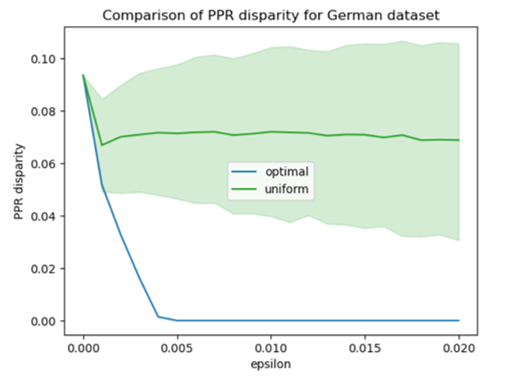}  
    \end{subfigure}
    \hfill
    \begin{subfigure}[b]{0.32\textwidth}
        \centering
        \includegraphics[width=\textwidth]{   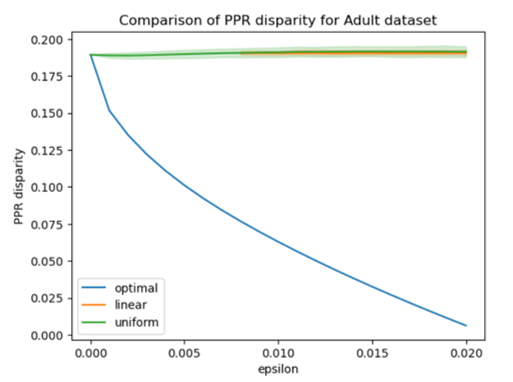}  
    \end{subfigure}
    \begin{subfigure}[b]{0.32\textwidth}
        \centering
        \includegraphics[width=\textwidth]{   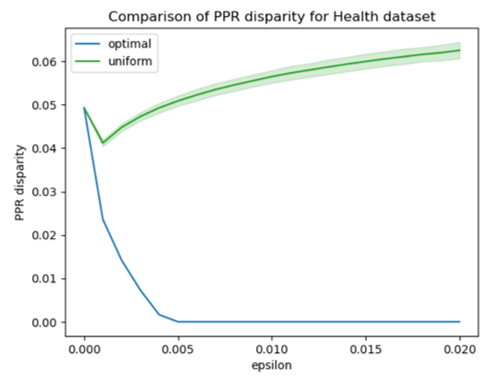}  
    \end{subfigure}
    \caption{Robustness check using XGBoost instead of logistic regression to estimate Bayes-optimal probabilities. Disparity in positive prediction rate for the German, Adult, and Health datasets, as a function of the error tolerance $\epsilon$. Comparison of methods for optimizing PPR (Section~\ref{sec:optimizing-PPR}), uniform random sampling (Section~\ref{sec:sampling}), and sampling linear models (Section~\ref{sec:linear}) over the Rashomon set $R_N(\epsilon)$. Note that no linear models were in the Rashomon set for German and Health datasets.}
    \label{fig:PPR_disparity_xgb}
\end{figure}
\begin{figure}[t]
    \centering
    \begin{subfigure}[b]{0.32\textwidth}  
        \centering
        \includegraphics[width=\textwidth]{   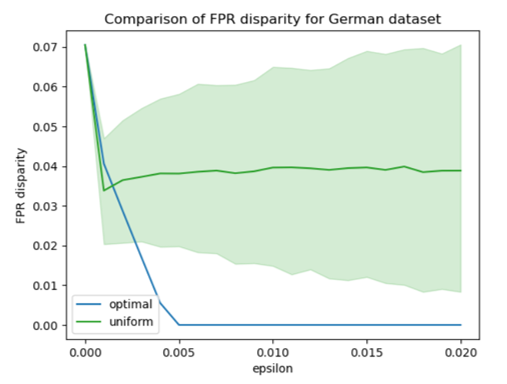}  
    \end{subfigure}
    \hfill
    \begin{subfigure}[b]{0.32\textwidth}
        \centering
        \includegraphics[width=\textwidth]{   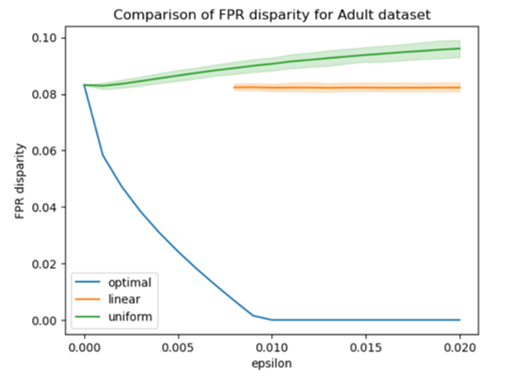}  
    \end{subfigure}
    \begin{subfigure}[b]{0.32\textwidth}
        \centering
        \includegraphics[width=\textwidth]{   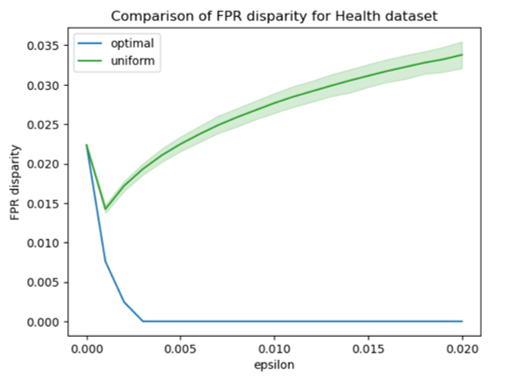}  
    \end{subfigure}
    \caption{Robustness check using XGBoost instead of logistic regression to estimate Bayes-optimal probabilities. Disparity in false positive rate for the German, Adult, and Health datasets, as a function of the error tolerance $\epsilon$. Comparison of methods for optimizing FPR (Section~\ref{sec:optimizing-TPR and FPR}), uniform random sampling (Section~\ref{sec:sampling}), and sampling linear models (Section~\ref{sec:linear}) over the Rashomon set $R_N(\epsilon)$. Note that no linear models were in the Rashomon set for German and Health datasets.}
    \label{fig:FPR_disparity_xgb}
\end{figure}
\begin{figure}[t]
    \centering
    \begin{subfigure}[b]{0.32\textwidth}  
        \centering
        \includegraphics[width=\textwidth]{   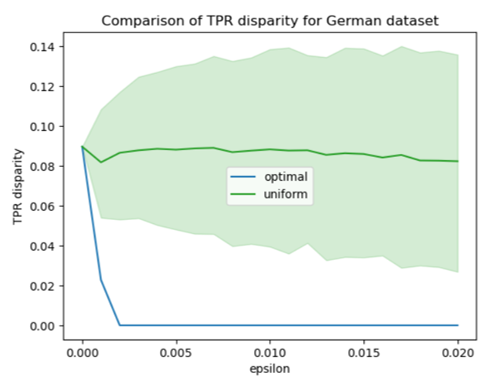}  
    \end{subfigure}
    \hfill
    \begin{subfigure}[b]{0.32\textwidth}
        \centering
        \includegraphics[width=\textwidth]{   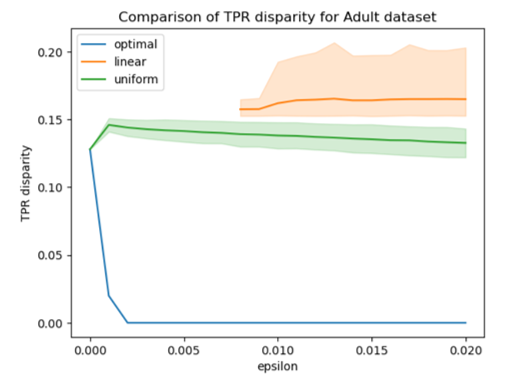}  
    \end{subfigure}
    \begin{subfigure}[b]{0.32\textwidth}
        \centering
        \includegraphics[width=\textwidth]{   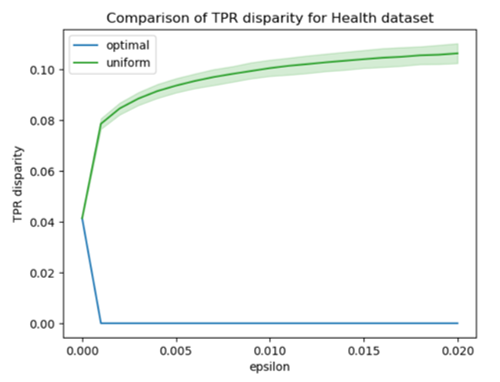}  
    \end{subfigure}
    \caption{Robustness check using XGBoost instead of logistic regression to estimate Bayes-optimal probabilities. Disparity in true positive rate for the German, Adult, and Health datasets, as a function of the error tolerance $\epsilon$. Comparison of methods for optimizing TPR (Section~\ref{sec:optimizing-TPR and FPR}), uniform random sampling (Section~\ref{sec:sampling}), and sampling linear models (Section~\ref{sec:linear}) over the Rashomon set $R_N(\epsilon)$. Note that no linear models were in the Rashomon set for German and Health datasets.}
    \label{fig:TPR_disparity_xgb}
\end{figure}
\begin{figure}[t]
    \centering
    \begin{subfigure}[b]{0.32\textwidth}  
        \centering
        \includegraphics[width=\textwidth]{   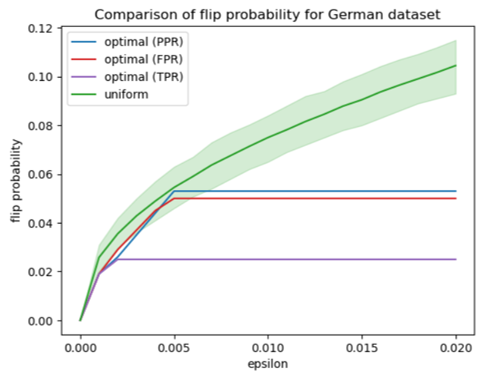}  
    \end{subfigure}
    \hfill
    \begin{subfigure}[b]{0.32\textwidth}
        \centering
        \includegraphics[width=\textwidth]{   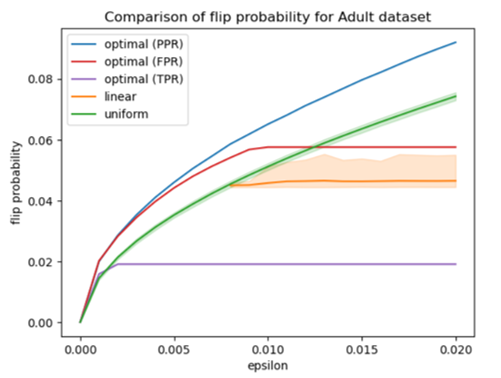}  
    \end{subfigure}
    \begin{subfigure}[b]{0.32\textwidth}
        \centering
        \includegraphics[width=\textwidth]{   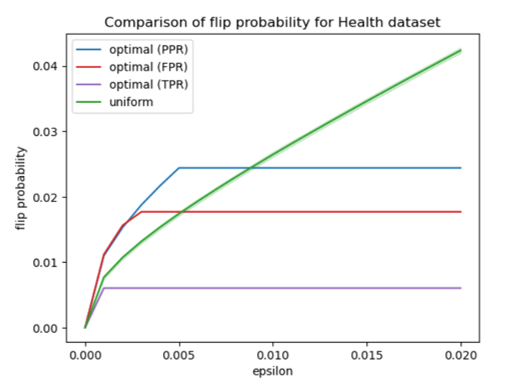}  
    \end{subfigure}
    \caption{Robustness check using XGBoost instead of logistic regression to estimate Bayes-optimal probabilities. Overall (population average) flip probability for the German, Adult, and Health datasets, as a function of the error tolerance $\epsilon$. Comparison of methods for optimizing PPR (Section~\ref{sec:optimizing-PPR}), optimizing FPR (Section~\ref{sec:optimizing-TPR and FPR}), optimizing TPR (Section~\ref{sec:optimizing-TPR and FPR}),
    uniform random sampling (Section~\ref{sec:sampling}), and sampling linear models (Section~\ref{sec:linear}) over the Rashomon set $R_N(\epsilon)$. Note that no linear models were in the Rashomon set for German and Health datasets.}\label{fig:appendix_overall_flip_probs_xgb}
\end{figure}
\begin{figure}[t]
    \centering
    \begin{subfigure}[b]{0.32\textwidth}  
        \centering
        \includegraphics[width=\textwidth]{   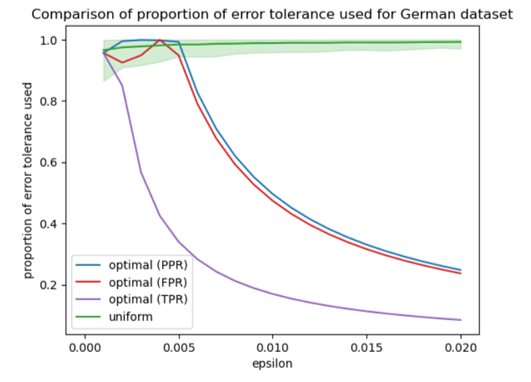}  
    \end{subfigure}
    \hfill
    \begin{subfigure}[b]{0.32\textwidth}
        \centering
        \includegraphics[width=\textwidth]{   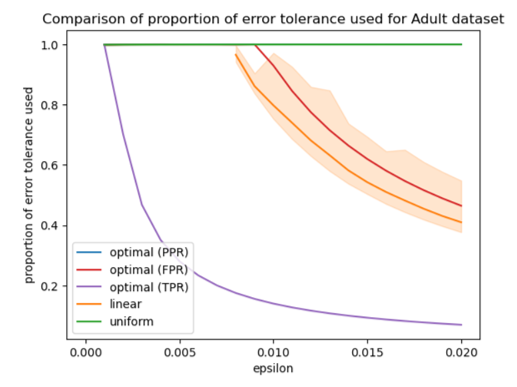}  
    \end{subfigure}
    \begin{subfigure}[b]{0.32\textwidth}
        \centering
        \includegraphics[width=\textwidth]{   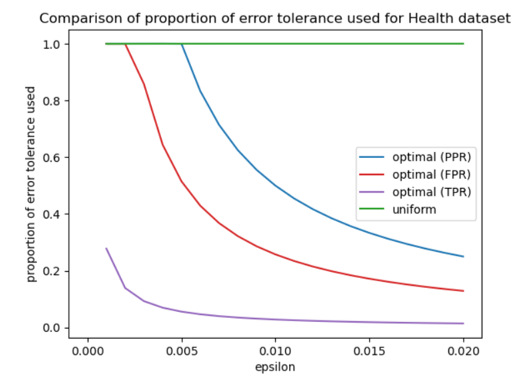}  
    \end{subfigure}
    \caption{Robustness check using XGBoost instead of logistic regression to estimate Bayes-optimal probabilities. Proportion of error tolerance used, $\frac{\theta \cdot W_N}{N\epsilon}$, for the German, Adult, and Health datasets, as a function of the error tolerance $\epsilon$. Comparison of methods for optimizing PPR (Section~\ref{sec:optimizing-PPR}), optimizing FPR (Section~\ref{sec:optimizing-TPR and FPR}), optimizing TPR (Section~\ref{sec:optimizing-TPR and FPR}), uniform random sampling (Section~\ref{sec:sampling}), and sampling linear models (Section~\ref{sec:linear}) over the Rashomon set $R_N(\epsilon)$. Note that no linear models were in the Rashomon set for German and Health datasets.}
    \label{fig:appendix_error_proportion_xgb}
\end{figure}

As noted above, the Bayes-optimal probabilities $p_i$ are unknown for real-world datasets, but can be well-estimated using sufficient training data.  
In the main paper, we used logistic regression to estimate these probabilities. To check the robustness of our results to the choice of model used for estimation of $p_i$, we re-ran all experiments using the estimated probabilities $\hat p_i$ from XGBoost models learned using 5-fold cross-validation.  Here we present results comparing our methods for optimizing PPR (Section~\ref{sec:optimizing-PPR}), optimizing FPR (Section~\ref{sec:optimizing-TPR and FPR}), optimizing TPR (Section~\ref{sec:optimizing-TPR and FPR}), uniform random sampling (Section~\ref{sec:sampling}), and sampling linear models (Section~\ref{sec:linear}) over the Rashomon set $R_N(\epsilon)$. 
Results for PPR disparity, FPR disparity, TPR disparity, overall flip probability, and proportion of error tolerance used, all using the XGBoost-generated probability estimates, are shown in Figures~\ref{fig:PPR_disparity_xgb}-\ref{fig:appendix_error_proportion_xgb}.  These can be compared to the corresponding results for logistic regression-generated probability estimates for PPR disparity, FPR disparity, TPR disparity, overall flip probability, and proportion of error tolerance used in Figures~\ref{fig:PPR_disparity},~\ref{fig:FPR_disparity},~\ref{fig:TPR_disparity},~\ref{fig:appendix_overall_flip_probs}, and~\ref{fig:error_proportion}(right) respectively.  The primary difference we observe is that none of the randomly sampled linear models were in the Rashomon set for the German and Health datasets. For the Adult dataset, we observed randomly sampled linear models in the Rashomon set for  $\epsilon \ge 0.008$, as compared to $\epsilon \ge 0.001$ for the logistic regression-generated probability estimates. These differences are not surprising given that linear models might not be able to fit the more complex, non-linear relationships modeled by XGBoost. Otherwise, results are very similar to those using the logistic regression-generated probability estimates, supporting our conclusions and policy takeaways above.

\end{document}